\def\input{#.tex}1{\input{#1.tex}}
\newtheorem{thm}{Theorem}[section]
\newtheorem{lem}[thm]{Lemma}
\newtheorem{prop}[thm]{Proposition}
\newtheorem{cor}[thm]{Corollary}
\newtheorem{remark}[thm]{Remark}
\newtheorem{example}[thm]{Example}
\newtheorem{schol}[thm]{Scholium}
\DeclareRobustCommand{\Qref}[1]{\IfBeginWith{#1}{def}{Def.\;\ref{#1}}{}%
	\IfBeginWith{#1}{sec}{Sect.\;\ref{#1}}{}%
	\IfBeginWith{#1}{fig}{Fig.\;\ref{#1}}{}%
	\IfBeginWith{#1}{tab}{Table\;\ref{#1}}{}%
	\IfBeginWith{#1}{Tab}{Table\;\ref{#1}}{}%
	\IfBeginWith{#1}{thm}{Thm.\;\ref{#1}}{}%
	\IfBeginWith{#1}{lem}{Lem.\;\ref{#1}}{}%
	\IfBeginWith{#1}{cor}{Cor.\;\ref{#1}}{}%
	\IfBeginWith{#1}{prop}{Prop.\;\ref{#1}}{}%
	\IfBeginWith{#1}{conv}{Conv.\;\ref{#1}}{}%
	\IfBeginWith{#1}{schol}{Schol.\;\eqref{#1}}{}%
    \IfBeginWith{#1}{rm}{Rem.\;\eqref{#1}}{}%
}
\def\tr{\operatorname{tr}}
\def\idty{{\mathbbm 1}}
\def\Cx{{\mathbb C}}
\def\Ir{{\mathbb Z}}
\def\norm #1{\Vert #1\Vert}
\def\mod{{\mathop{\rm mod}\nolimits}}
\def\ind{{\mathop{\rm ind}\nolimits}\,}
\def\ket #1{\vert#1\rangle}
\def\braket#1#2{\langle #1,#2\rangle}
\def\ketbra #1#2{{\vert#1\rangle\langle#2\vert}}
\def\abs#1{\vert#1\vert}
\def\rank{{\mathop{\rm rank}\nolimits}\,}
\def\im{\Im m}
\def\BB{{\mathcal B}}
\def\HH{{\mathcal H}}
\def\KK{{\mathcal K}}
\newcommand{\raisemath}[1]{\mathpalette{\raisem@th{#1}}}
\newcommand{\raisem@th}[3]{\raisebox{#1}{$#2#3$}}
\def\six{\mathop{\mathrm s\mkern-1mu\mathrm i}\nolimits}
\DeclareRobustCommand{\sixnp}{\mathop{\mathrm s\mkern-1mu \textup \i}\nolimits}
\DeclareRobustCommand{\sixR}{\mathop{\hbox{\raisebox{.45em}{$\scriptstyle\rightharpoonup$}}\hspace{-.9em}\sixnp}}
\def\sixL{\mathop{\hbox{\raisebox{.45em}{$\scriptstyle\leftharpoonup$}}\hspace{-.9em}\sixnp}}
\def\sixrel#1:#2{\six(#1{:}#2)}
\def\Pl#1{P_{<#1}}
\def\Pg#1{P_{\geq#1}}
\def\ch{\gamma}
\newcommand{\quasinv}[1]{{#1}^I}
\newcommand{\RNum}[1]{\uppercase\expandafter{\romannumeral #1\relax}}
\def\DK{\dim\ker}
\def\dk{\dim\ker}
\def\adj{^*}
\def\fred#1{\ind[#1]}
\DeclareMathOperator\indFRnp{\mbox{$\ind\hspace{-1.3em}\raisebox{.55em}{$\scriptstyle\rightharpoonup$}\hspace{.5em}$}}
\def\fredP#1{\indFRnp[#1]}
\def\AA{\mathcal A}
\def\AAloc{{\mathcal A}_{\mathrm{loc}}}
\begin{document}

\title{Chiral Floquet systems and quantum walks at half period}
\author{C. Cedzich}
\affiliation{Laboratoire de Recherche en Informatique (LRI), Universit\'{e} Paris Sud, CNRS, Centrale Sup\'{e}lec, \\B\^{a}t 650, Rue Noetzlin, 91190 Gif-sur-Yvette, France}
\author{T. Geib}
\affiliation{Institut f\"ur Theoretische Physik, Leibniz Universit\"at Hannover, Appelstr. 2, 30167 Hannover, Germany}
\author{A.~H. Werner}
\affiliation{{QMATH}, Department of Mathematical Sciences, University of Copenhagen, Universitetsparken 5, 2100 Copenhagen, Denmark,}
\affiliation{{NBIA}, Niels Bohr Institute, University of Copenhagen, Denmark}
\author{R.~F. Werner}
\affiliation{Institut f\"ur Theoretische Physik, Leibniz Universit\"at Hannover, Appelstr. 2, 30167 Hannover, Germany}

\begin{abstract}
	We classify chiral symmetric periodically driven quantum systems on a one-dimensional lattice. The driving process is local, can be continuous or discrete in time, and we assume a gap condition for the corresponding Floquet operator. The analysis is in terms of the unitary operator at a half-period, the half-step operator. We give a complete classification of the connected classes of half-step operators in terms of five integer indices. On the basis of these indices it can be decided whether the half-step operator can be obtained from a continuous Hamiltonian driving, or not. The half-step operator determines two Floquet operators, obtained by starting the driving at zero or at half period, respectively. These are called timeframes and are chiral symmetric quantum walks. Conversely, we show under which conditions two chiral symmetric walks determine a common half-step operator. Moreover, we clarify the connection between the classification of half-step operators and the corresponding quantum walks. Within this theory we prove bulk-edge correspondence and show that a second timeframe allows to distinguish between symmetry protected edge states at $+1$ and $-1$ which is not possible for a single timeframe.
\end{abstract}

\maketitle

\section{Introduction}

Topological classifications of quantum systems in the presence of symmetries provide novel phases of matter beyond the Landau theory of symmetry breaking. Prominent examples in physics are the distinction between ordinary and topological insulators \cite{SchuBaBook,kitaevPeriodic,KaneMele1,KaneMele2,HasanKaneReview,ZhangReview} or the quantization of the Hall conductance in two-dimensional samples \cite{Quantum_Hall_Monograph}, but the theory extends to all dimensions and all symmetry classes of the tenfold way \cite{Altland-Zirnbauer}. The key feature of these novel phases is their exceptional stability against all perturbations which preserve basic elements of the theory, typically a symmetry, a gap condition, and a standing assumption on the locality of interactions.
This is mathematically expressed by a non-trivial topological classification up to homotopy. From a physical point of view, an important consequence of this stability is the so-called bulk-boundary correspondence which allows to predict phenomena occurring near the boundary between two systems that belong to distinct topological classes.

Recently, the topological classification of lattice systems with discrete symmetries has become one of the cornerstones of a theory of topological quantum matter. Model systems beyond static Hamiltonians, in which topological phases are still relevant, come in two settings. On one hand they are Floquet systems with continuous time periodic driving \cite{RunderBergLevin,graf2018bulk,sadel2017topological,roy2017periodic,liu2018chiral,kitagawa2010topological,Asbo3,carpentier2015topological,fruchart2016complex} and, on the other, discrete time quantum walks \cite{ShortVersion,LongVersion,UsOnTI,WeAreSchur,kita1,kita2,Asbo1,Asbo2}. These classes are closely related, because the Floquet evolution operator for a full period is a walk and, conversely many walks can be generated by an ``effective Hamiltonian'' or as a sequence of pulses in continuous time \cite{delplace2017phase}. This correspondence was fruitful and led to the first results on topological properties of quantum walks \cite{kita1,kita2,Asbo1,Asbo2}.

However, the two setups are fundamentally different in what is naturally required for a system to ``satisfy a symmetry'': In discrete time, the symmetries act directly on the walk unitary itself and map it either to itself or to its adjoint. In contrast, in periodically driven systems it is natural to impose the symmetries not only at the endpoint of one period but for the whole generating process. When the symmetry involves a time inversion  this connects the driving Hamiltonian at time $t$ not only with the one a period later,  at $t+T$, but also the one at $-t$. For chiral symmetry, these conceptual differences lead to two distinct topological classifications. In fact, the present work grew out of the project of understanding the apparent contradictions between the classification of chiral walks in \cite{LongVersion} and the results in \cite{Asbo2}. It turns out that the key structure for understanding these connections is intermediate between the two settings: In the setting of periodically driven systems one notices that the driving process in the second half of the driving period is completely determined by the first half. Therefore, all relevant information is already contained in the unitary operator (denoted $F$ in the sequel) one obtains at half a period. From the point of the discrete time walk $F$ appears as a kind of square root of the walk operator. It is not exactly a root, however, because the two half periods are not equal, but only connected by symmetry and taking an adjoint. Therefore, the two halves usually do not commute, and so we obtain not one walk, but two walks, depending on whether we set the time origin to $0$ or to $T/2$. Both are stroboscopic Floquet samples of the same process, and are called the two timeframes associated with the process. Both timeframed walks are obtained from $F$ by multiplication with a chirally transformed $F$, so it is natural to take the operators $F$ as the starting point. This even works if $F$ is not obtained by driving with a local Hamiltonian, and this additional possibility covers some further examples from the literature.

The aim of this paper is to give a complete topological classification of the structures just described. In this endeavour we build on our earlier classification of walk unitaries with symmetry
\cite{ShortVersion,LongVersion,UsOnTI,WeAreSchur} on one hand, and the work emphasizing two timeframes in \cite{Asbo2,Asbo4,obuse2015unveiling} on the other. Our classification is expressed in terms of five integer indices associated with each operator $F$. These indices also have a direct implication for the appearance of edge states in bulk-boundary correspondence, and allow to decide whether $F$ can be obtained from a local Hamiltonian driving or not.

This paper is organized as follows: In the next section we give a detailed account of different meanings in which classifications of chiral symmetric quantum walks were discussed, lay down the setting we study, and collect the main results of this work. Moreover, to make the present paper as self-contained as possible we summarize the main points of the topological classification of quantum walks. In \Qref{sec:halfstepwalks} we express the symmetry indices of the two timeframes of chiral symmetric walks in terms of Fredholm indices of the chiral blocks of the half-step walk. Also, we explicitly construct such a half-step walk for every chiral symmetric walk. Taking locality requirements into account in \Qref{sec:locality} gives a total of ten indices for the half-step walk. However not all of them are independent, which allows us to reduce this number to a set of five independent indices. In \Qref{sec:complete_indices} we prove our main classification result (\Qref{thm:complete}) which states, that this set of five indices provides complete classification of half-step walks up to homotopies and compact perturbations. Subsequently, in \Qref{sec:time_framed_walks} we clarify the relation between the classification of two timeframes via these half-step indices and the existing classification of single timeframes. 
\section{Setting and main results}
\subsection{Walks and Timeframes}\label{sec:walks_protocols}
We consider quantum walks with chiral symmetry in one dimension. This
terminology has been used with two distinct meanings in the literature, and our
aim here is to clarify the connections. The Hilbert space in both cases is of the form
\begin{equation}\label{eq:H}
  \HH=\bigoplus_{x\in\Ir}\HH_x,
\end{equation}
where the variable $x$ labels a point on the lattice, or a ``cell'', and the local degrees of freedom are given by a finite-dimensional Hilbert space $\HH_x$.
The cell dimensions need not be uniform, which allows the joining of two bulk systems with very different characteristics.
For every $a\in\Ir$ we denote by $\Pg a$ and $\Pl a=\idty-\Pg a$ the orthogonal projections onto the \textbf{half-chains} $\HH_{\geq a}=\bigoplus_{x\geq a}\HH_x$ and $\HH_{<a}=\bigoplus_{x< a}\HH_x$, respectively.
We consider a \textbf{chiral symmetry} acting by a unitary operator $\ch$ with $\ch^2=\idty$, leaving each cell invariant.
We assume that the eigenspaces of $\gamma$ in $\HH_x$ have the same multiplicity, so each $\HH_x$ is even dimensional.

Next, we have to describe how the chiral symmetry operators $\ch$ constrains the class of symmetric operators that will be part of our classification. We can immediately distinguish two extreme cases here:

\begin{enumerate}
\item[($W$)] In the {\bf one-step scenario} a (quantum) ``walk'' is defined as a unitary operator $W$ on $\HH$.
\item[($H$)] In the {\bf Floquet scenario} a ``walk'' is the Floquet operator of a process in continuous time, driven
by a (bounded) time dependent local Hamiltonian $H(t)$, which is periodic in time, say, $H(t+1)=H(t)$, where we have fixed the period to be $1$ by a choice of units.
\end{enumerate}
The connection between the two is obviously,
\begin{equation}\label{TexpW}
  W={\mathcal T}e^{-i\int_0^1 dt H(t)},
\end{equation}
which is the solution $W(t)|_{t=1}$ of the time dependent Schr\"odinger equation
$\partial_t W(t)=-iH(t)\,W(t)$ with initial condition $W(0)=\idty$.
We will often call this scenario a \textbf{continuous driving}, which refers to the norm continuity of $W(t)$ in $t$. Note, that $H(t)$ does not have to be continuous in $t$ for this condition to be met (see \Qref{prop:inx0} (3) for a precise formulation).
Some of the typical assumptions are readily transferred between these settings.
For example, a {\it local} Hamiltonian is one for which matrix elements $\braket{\phi_x}{H\psi_y}$ decay sufficiently fast with the separation $\abs{x-y}$, which implies a similar condition for $W$ (Precise conditions will be specified later).

We are concerned here with the chiral symmetry, which in terms of $W$ is just the condition
\begin{equation}\label{Wchiral}
    \ch W\ch\adj=W\adj.
\end{equation}
In in the one-step scenario $W$, where we do not have the information on how the operator was implemented, that is all we can say. Indeed, in experiments \cite{Karski2009,Regensburger2011,Schmitz2009,Cardanoe2015} the driving is often not implemented as in \eqref{TexpW}, i.e., not by a Hamiltonian in $\HH$, but temporarily uses additional degrees of freedom of the system, for which the chiral symmetry might not even be defined. Setting ($W$) is then clearly the appropriate one.
On the other hand, for a continuous driving process in $\HH$ it is natural to impose a more detailed condition, namely
\begin{equation}\label{chiraldrive}
  \ch H(t)\ch\adj=-H(-t).
\end{equation}
This readily implies \eqref{Wchiral}. By this condition, together with periodicity, we get $H(1-t)=H(-t)=-\ch H(t)\ch\adj$, so the Hamiltonian for $t\in[1/2,1]$ can be computed from the values for $t\in[0,1/2]$.
For the time ordered exponentials this gives
\begin{equation}\label{TexpF}
  F={\mathcal T}e^{-i\int_0^{1/2} dt H(t)}, \qquad {\mathcal T}e^{-i\int_{1/2}^1 dt H(t)}=\ch F\adj \ch.
\end{equation}
Hence the Floquet operators $W$ and $W'$ for time periods starting at integer times and half-integer times are
\begin{equation}\label{eq:timeframes}
  W=\ch F\adj\ch F    \quad\text{and}\quad W'=F \ch F\adj\ch,
\end{equation}
and are called the \textbf{timeframes} of the process.
They describe the same periodically driven process, but with the origin shifted by half a period.
It is important to note that, while both unitary operators are continuously connected by the Floquet operators for other choices $\tau$ as the origin, i.e., the time ordered exponentials for the intervals $[\tau,\tau+1]$ only the two choices $\tau\equiv0,1/2\ \mod\Ir$ satisfy the chiral symmetry. Hence in a classification based on the deformation of {\it chiral symmetric walks}, $W$ and $W'$ may be in different classes.

Our theory will primarily be based on the relations \eqref{eq:timeframes}, so we take the unitary operator $F$ as an object in its own right. 
We therefore introduce a third scenario besides ($W$) and ($H$), namely
\begin{itemize}
\item[($F$)] In the \textbf{half-step scenario}, a ``walk'' is determined by a unitary half-step operator $F$ via \eqref{eq:timeframes}.
\end{itemize}
Note that there is no chiral symmetry condition left to be imposed on $F$: Chirality is in this case encoded in the way the timeframes are built from $F$. 
These scenarios and the relations connecting them are summarized in the diagram
\begin{equation}
\label{diag}
	\begin{tikzpicture}[baseline=(current  bounding  box.center),>=stealth]
		\node (H) at (0,1) {$H(t)$};
		\node (F) at (0,0) {$F$};
		\node (W) at (-1,-1) {$W$};
		\node (W') at (1,-1) {$W'$};
		
		\draw[->] (H) -- (F);
		\draw[->] (F) -- (W);
		\draw[->] (F) --(W');
		\node at (2,-.5) {\small see \eqref{eq:timeframes}};
		\node at (2,.5) {\small see \eqref{TexpF}};
	\end{tikzpicture}
\end{equation}

Adopting the scenario ($F$) allows us to effortlessly include cases such as the so-called split-step walk, which cannot be realized by any continuous driving. More generally we include {\bf discrete driving} sequences $W=U_{2n}\cdots U_1$ by a sequence of unitary operators such that the discrete analogue $\gamma U_i \gamma = U_{2n-i+1}^*$ of \eqref{chiraldrive} holds. Then $F=U_n\cdots U_{1}$. In the continuous driving scenario there would be an additional divisibility constraint, namely that the steps $U_i$ can be taken to be close to the identity.

Throughout we  impose the following additional constraints:
\begin{itemize}
\item[\quad]\hskip-20pt{\bf Chirality} as given by Eqs. \eqref{Wchiral} and \eqref{chiraldrive}. There is no chirality condition on $F$:
 In this scenario the symmetry is encoded in the way the overall process is built from $F$.
\item[\quad]\hskip-20pt{\bf Essential Locality} The matrix elements of $H, F$ or $W$ between distant sites should decay. We encode this in the condition that, for some (hence for all) $a\in\Ir$, the half-line projection $\Pg a$ essentially splits the system, in the sense that $\Pg aH\Pl a$ is a compact operator, and similarly for $F$ or $W$.
\item[\quad]\hskip-20pt{\bf Gap condition}, the spectrum of $W$ has essential gaps at the symmetry invariant points $\pm1$. This means that, in an interval around these points, the spectrum of $W$ contains only eigenvalues of finite multiplicity. Since $W'$ is unitarily equivalent to $W$, the same then holds for $W'$.  For $H$ and $F$ the condition is implicitly stated in terms of $W$.
\end{itemize}
\noindent Of course, the locality condition allows strictly local operators, for which distant matrix elements are exactly zero, but this condition would not survive exponentiation. In contrast, the essentially local operators form a C*-algebra, and time ordered exponentiation does not lead out of it \cite{LongVersion,UsOnTI}. For the gap condition note that \eqref{Wchiral} implies that the spectrum of $W$ is invariant under complex conjugation. The points $\pm1$ thus play a distinguished role.

Now since each arrow in the diagram \eqref{diag} represents a norm continuous function, any continuous deformation of the driving process translates to a deformation of $F,W,W'$. Therefore, a topological classification of walks in the one-step scenario automatically gives a partial classification of driving Hamiltonians or half-step unitaries. The interesting questions therefore concern the opposite direction:
\begin{enumerate}
\item Inversion of arrows: Given a walk $W$ in the one step scenario, can we find an $F$ according to \eqref{eq:timeframes}? When does a pair $(W,W')$ arise from $F$? Given $F$, can we find a driving Hamiltonian?
\item Classification: In each scenario, classify the walks up to deformations respecting all three of the above conditions. Express the classification in terms of discrete indices, and show that equality of indices is equivalent to the existence of a deformation (completeness of indices).
\item No empty classes: Show that each index combination can be realized by suitable objects.
\item Given the topological classes of $W$ and $W'$, is the class of $F$ determined?
\end{enumerate}

\subsection{Review of the classification in the one-step scenario}\label{sec:review}
Questions 2 and 3 have been answered in \cite{LongVersion,ShortVersion,UsOnTI,WeAreSchur}, not just for chiral symmetry but for all symmetry types of the tenfold way \cite{Altland-Zirnbauer}.
So we can just specialize to the chiral case (conventionally labelled {\mbox{{\sf A}I\kern-1ptI\kern-1pt I}}). Our terminology and notation ``$\six$'' for ``symmetry index'' was to emphasize in this wider context that the quantity takes values in an abelian group depending on the symmetry type. In the chiral case this group is $\Ir$.

According to \cite{LongVersion}, the classification of one-step walks is by triples of integer indices
\begin{equation}\label{long-indices}
  \bigl(\six(W),\sixR(W),\six_+(W)\bigr),
\end{equation}
which exactly label the homotopy classes. We will presently define these, but first offer an intuitive interpretation. The index $\sixR(W)$ describes the structure far to the right, and can be determined from the walk projected to any subspace $\Pg a\HH$. Similarly, $\sixL(W)=\six(W)-\sixR(W)$ relates to the far left. The index $\six_+(W)$ classifies the symmetry protected eigenspace of $W$ at $+1$ and similarly $\six_-(W)=\six(W)-\six_+(W)$ the one at $-1$. $\six(W)$ acts as the connecting bit between these two types of indices. In the translation invariant case the index triples are of the form $(0,n,0)$, and each $n$ is given by a winding number determined from the band structure. The possibility of having $\six(W)\neq0$ is crucial for describing a crossover of two bulk systems with different topological characteristics. In such systems, topologically protected eigenvalues appear in the spectral gap at $\pm1$, which is known as bulk-boundary correspondence. It is expressed by the sum relation
\begin{equation}\label{WindexSums}
\six(W)=\six_-(W)+\six_+(W)=\sixR(W)+\sixL(W).
\end{equation}
Here $\abs{\six_\pm(W)}$ is a lower bound to the dimension of the eigenspace of $W$ at $\pm1$. Hence if $\six(W)\neq0$, \eqref{WindexSums} implies some topologically protected eigenvalues in the gap.
Since $\six_\pm(W)$ are individually invariant under deformation, topological protection also holds for these eigenvalues separately. A new phenomenon in the classification of unitaries (as opposed to Hamiltonians, i.e., continuous time systems) is the possibility of ``non-gentle'' local perturbations: These are modifications of a walk, for example only on finitely many cells, which cannot be done along a continuous path (preserving all the defining conditions). Then the index $\six_+(W)$ exactly classifies this new possibility, i.e., the set of compact perturbations up to homotopy. In contrast, the right/left indices are invariant also under compact perturbations. For the bulk-boundary correspondence this means that engineering a crossover between two topologically distinct half-infinite bulks, which is always possible, can be done in topologically distinct ways. By definition, these are linked by a compact perturbation, and are hence labelled by the third index in \eqref{long-indices}.

It remains to define the index quantities in \eqref{long-indices}. The pairs $\sixR,\sixL$ and $\six_\pm$ depend differently on the underlying structures, and since the same will be true for the half-step classification work, we will point out these differences here. For $\six_\pm(W)$ no local structure is required, only the chiral symmetry and an essential gap. By symmetry the eigenspaces $\HH_\pm=\ker(W\mp\idty)$ at $\pm1$ are invariant under $\ch$, and by the gap condition they are finite dimensional. Hence we can define
\begin{equation}\label{sixpm}
  \six_\pm(W)=\tr_{\HH_\pm}\ch,
\end{equation}
which one might call the chiral imbalance of the eigenspaces. It is deformation invariant, because according to perturbation theory a branching of this eigenvalue will lead to a complex conjugate pair of non-real eigenvalues, so that in the joint eigenspace a gapped chiral unitary exists. This implies vanishing trace of $\ch$ in the subspace, and hence that $\tr\ch$ is unchanged in the eigenspace. With $\im(W)=(W-W^*)/(2i)$ the overall index $\six=\six_++\six_-$ can be expressed as 
\begin{equation}
	\six(W)=\tr_{\ker\im(W)}\ch.
\end{equation}
	
The indices $\sixR,\sixL$ obviously do depend on the local structure. One would like to define them in terms of some version of the walk definable to the far right and left. A possible approach is to perform a continuous deformation which decouples the walk to the form $W_L\oplus W_R$, and then use the $\pm$ indices for the parts. This works \cite{ShortVersion}, but introduces a dependence on the decoupling chosen, which makes the indices $\six_\pm(W_R)$ individually ill defined (due to non-gentle perturbations), leaving only their sum $\sixR(W)=\six(W_R)$ as a possible invariant. Of course, this has then to be shown to be independent of the decoupling process. A better approach \cite{ShortVersion} takes its starting point from the operator $PWP$, where $P=\Pg0$ is the right half projection. This is {\it not unitary} but, due to the essential locality condition, ``essentially unitary'' meaning that $(PWP)^*(PWP)-P$  and $(PWP)(PWP)^*-P$ are compact, so $PWP$ is Fredholm. This suffices to apply index theory to the imaginary part $P(W-W\adj)P/(2i)$, and $\sixR(W)$ is defined as $\tr\ch$ on the $0$-eigenspace  of this operator. This approach makes clear why the distinction between the $\pm1$ eigenspaces is wiped out for the left/right indices, but stability under compact perturbations is gained.

For the current paper we follow the path by considering also a projected version $PFP$ of the half-step operator. Thus some propositions (not all) are expressly stated for {\it essentially unitary} $F$, not because we think this is an interesting generalization in itself, but because such statements can be applied to the projected version $PFP$.

\subsection{Results}\label{sec:results}
We are requiring of $F$ that the timeframes $W,W'$ are essentially gapped. As a statement about the spectrum of a product of operators on an infinite dimensional Hilbert space, this is a rather indirect condition, and one would like a more direct one. It turns out that an extremely elegant characterization is possible. To this end we decompose the Hilbert space into the eigenspaces of the chiral unitary $\gamma$, and write $F$ in block matrix form. The four \textbf{chiral blocks} in this decomposition will denoted below by $(A,B,C,D)$:

\begin{align}\label{Fblock}
	\ch=\begin{pmatrix}\idty & 0\\0 & -\idty\end{pmatrix}\qquad&\text{and}\qquad F =\begin{pmatrix} A & B\\ C & D \end{pmatrix},\\
\intertext{Expressed in terms of  the chiral blocks $W$ and $W'$ from \eqref{eq:timeframes} are of the form}
	W=\begin{pmatrix}
		A^*A-C^*C & A^*B-C^*D\\D^*C-B^*A & D^*D-B^*B
	\end{pmatrix}\qquad&\text{and}\qquad
	W'=\begin{pmatrix}
		AA^*-BB^* & BD^*-AC^*\\CA^*-DB^* & DD^*-CC^*
	\end{pmatrix}.\label{eq:walks}
\end{align}

Then $W$ and/or $W'$ are gapped iff each of the four blocks of $F$ is a Fredholm operator (see \Qref{lem:sixFeu}).
A Fredholm operator $X$ comes with an integer index, denoted by $\fred X$, which completely describes the homotopy classes of such operators. The indices of the four chiral blocks of $F$ are thus candidates for a classification of half-step walks. However, these indices do not distinguish between the right and the left halves of the system, i.e., they are independent of any locality constraints. This is a virtue in view of potential applications to higher dimensional systems, and so we collected all results {\it not} using the one-dimensional underlying lattice in \Qref{sec:halfstepwalks}. On the other hand, the classification of half-step walks must map onto the classes of  one-step walks, which does reflect the locality. In order to introduce locality in \Qref{sec:locality} we fix a half space projection $P=\Pg a$ (the cut point $a$ will be irrelevant), and define a ``directed'' version of the Fredholm index by $\fredP{X}=\fred{PXP|_{P\HH}}$. With these notations we can state our central classification result, answering Question 2 and 3 above:

\begin{thm}\label{thm:complete}
Consider the class of half-step walks satisfying the conditions stated in \Qref{sec:walks_protocols}, and associate with each half-step walks $F$ with block decomposition \eqref{Fblock} the quintuple
\begin{equation}\label{allinx}
	\left(\fred{A},\fred{B},\fredP{A},\fredP{B},\fredP{C}\right)\in\Ir^5.
\end{equation}
Then the indices are invariant under norm continuous deformations and compact perturbations. Conversely, any two half-step walks with the same indices can be deformed into each other. Moreover, all index tuples can be realized by suitable $F$.
\end{thm}

The conspicuously absent indices are functions of these five:
$\fred C=-\fred B$, $\fred D=-\fred A$, and $\fredP D=\fredP{B}+\fredP{C}-\fredP{A}$. We prove \Qref{thm:complete} in \Qref{sec:complete_indices}: For the independence and invariance of the five indices see \Qref{sec:independent_set}. The realizability of all index combinations is discussed in \Qref{sec:generating_example}, where we define a \textbf{generating example}, which allows us to realize every index combination in $\Ir^5$. Completeness is proved in \Qref{sec:completeness_proof}.

As expected, the indices of the timeframes are functions of these:
\begin{eqnarray}
  \bigl(\six(W),\sixR(W),\six_+(W)\bigr)    &=& \bigl(\fred{A}-\fred{B},\,\fredP{A}-\fredP{B},\,-\fred{B}\bigr) \\
  \bigl(\six(W'),\sixR(W'),\six_+(W')\bigr) &=& \bigl(-\fred{A}-\fred{B},\,\fredP{C}-\fredP{A},\,-\fred{B}\bigr)
\end{eqnarray}
See \Qref{lem:sixFeu}, \Qref{sixFu} and the discussion around \eqref{eq:sixR} and \eqref{eq:sixRprime} for proofs of these formulas.

Note that $\six_\pm(W)=\pm\six_\pm(W')$ which is thus a necessary condition for these to be timeframes of the same $F$. Moreover, the index tuples of $W$ and $W'$ are not sufficient to reconstruct the index tuple of $F$. In \Qref{sec:time_framed_walks} we provide an example, in which this can be traced to an ambiguity in the reconstruction of $F$. The missing quantity is the right Fredholm index
\begin{equation}\label{totalFred}
    \fredP F=\fredP{B}+\fredP{C}=\fredP{A}+\fredP{D}.
\end{equation}
of $F$ itself. This quantity, also known as the Kitaev flow \cite{Kitaev}, or just index \cite{old_index_paper} classifies the connected components of essentially local unitary operators (no gap or symmetry required). Since, by definition any Floquet operator is continuously connected to the identity, the connection to the Floquet scenario implies $\fredP F=0$. The following Theorem summarizes our results on ``inversion of arrows'', i.e.\ Question 1 and 4 above. We assume all objects to satisfy the standing assumptions of  \Qref{sec:walks_protocols}.

\begin{thm}\label{thm:allinverse}\strut\\\vspace{-2ex}
\begin{enumerate}
\item[(1)] For any chiral one-step walk $W$ a half-step walk $F$ satisfying \eqref{eq:timeframes} exists.
\item[(2)] A pair $(W,W')$ arises from some half-step walk $F$ via \eqref{eq:timeframes} iff $\six_\pm(W)=\pm\six_\pm(W')$ and there is a unitary operator $U$ such that $W'=UWU^*$ and $U\gamma_\KK U^*=\gamma'_\KK$, where $\gamma_\KK,\gamma'_\KK$ is the chiral symmetry restricted to the complements of the $\pm1$ eigenspaces of $W,W'$, respectively.
\item[(3)] Two index triples $(\six,\sixR,\six_+)$ and $(\six',\sixR',\six_+')$ have timeframe representatives $W$ and $W'$ iff $\six_\pm=\pm\six_\pm$, i.e.\ this is the only constraint on the topological classes of $W$ and $W'$.
\item[(4)] A half-step walk $F$ arises from some continuous driving $H(t)$ iff $\fredP F=0$.
\end{enumerate}
\end{thm}
\noindent We prove the first three statements in \Qref{prop:Fexists}, \Qref{thm:twotimeframes} and \Qref{cor:(si+si')/2}. For the last statement, see \Qref{prop:inx0}. 
\section{The half-step operator}\label{sec:halfstepwalks}

In this section we begin the analysis of half-step operators $F$ as objects in their own right. It is clear from formula \eqref{eq:timeframes} that such an operator determines a pair of unitary operators $W$ and $W'$, and our standing assumption will be that these are chiral and essentially gapped in the sense of \Qref{sec:walks_protocols}. For the homotopy classification of such half-step operators we thus allow any deformation which preserves these properties. Since $W$ and $W'$ clearly depend continuously on $F$ this implies that the indices of the walks are homotopy invariants for $F$. But are there more? Thinking of $F$ as a more detailed description of the process, a finer description might be available for arbitrary $W$. Moreover, what are the conditions on $W$ and $W'$ for being related as in \eqref{eq:timeframes}?

We will begin with a setting, in which only the chiral symmetry is taken into account, i.e., essential locality is not yet imposed. All the statements in the current section will therefore be valid in any lattice dimension, or indeed without considering a local structure of any kind. Since some locality condition is usually part of the definition of quantum walks, we thus speak of unitary operators $W,W'$ and a half-space {\it operator}, reserving the term half-step {\it walk} for the setting with locality.
The indices of essentially gapped chiral unitaries are then reduced to $\six_\pm(W)$, and the right index $\sixR$ is not even defined. However, in preparation for the right/left indices we distinguish two kinds of results: One kind, e.g., \Qref{lem:sixFeu} and \Qref{lem:iF}, is valid for essentially unitary $F$, and these can later be applied to half-space projections of $F$. The other kind, e.g. \Qref{lem:sixFu}, requires exact unitarity and typically involves the eigenspaces at $\pm1$, assuming implicitly that the operator is normal and thus has a good spectral resolution.

In \Qref{sec:InxRel} we provide formulas for the indices of $W,W'$ in terms of Fredholm indices of the blocks of $F$. This will imply a necessary index relation for pairs $W,W'$, i.e., $\six_\pm(W)=\pm\six_\pm(W')$. The Fredholm Index of $F$ is likewise determined in a straightforward way by the indices of the blocks (\Qref{sec:FredF}). But does a half-step operator always exist? There are two issues here, discussed in \Qref{sec:existence}. We first show that {\it any} essentially gapped chiral $W$ can be written in terms of a half-step operator. For pairs $W,W'$ to be related by \eqref{eq:timeframes} we have already established an index constraint. Moreover, two unitary operators that can be written as $U_1U_2$ and $U_2U_1$ in terms of other unitaries are always unitarily equivalent. This is clearly applies to $W,W'$ in \eqref{eq:timeframes}, so $W$ and $W'$ have the same spectral data as well. In addition we will need that the intertwining unitary satisfies a condition involving the chiral symmetry. The resulting characterization of pairs $W,W'$ is given in \Qref{thm:twotimeframes}.

Essential locality is then added to the picture in \Qref{sec:locality}. We will impose this as a condition on $F$, which certainly makes sense for an $F$ obtained by continuous driving, but also for more general protocols. Since $\ch$ acts locally in each cell, $W$ and $W'$ are then also essentially local. For the existence results the inclusion of locality also works: The construction given in \Qref{sec:existence} naturally preserves essential locality. So all chiral walks can be written in terms of a half-step walk as well.

\subsection{Index relations for the half-step operator}\label{sec:InxRel}

Recall that a bounded operator $X$ is called essentially invertible or \textbf{Fredholm} if there is a (bounded) operator $\quasinv{X}$ such that $X\quasinv{X}-\idty$ and $\quasinv{X}X-\idty$ are compact. If no further restrictions are assumed, Fredholm operators are completely classified by the integer-valued \textbf{Fredholm index}
\begin{equation}
  X\mapsto\fred{X}=\dim \ker(X)-\dim \ker(X^*),
\end{equation}
which satisfies $\fred{X^*}=-\fred{X}$ and $\fred{X_1X_2}=\fred{X_1}+\fred{X_2}$ for $X_1,X_2$ Fredholm.
The Fredholm index is constant on norm continuous paths of Fredholm operators. Moreover, it is invariant under compact perturbations, i.e.\ for a compact operator $K$, $X+K$ stays Fredholm and we get $\fred{X+K}=\fred{X}$.

The following lemma relates the gap condition of $W$ and $W'$ to the essential invertibility of the chiral blocks of the half-step operator $F$. Moreover, it provides an expression of the symmetry indices of $W$ and $W'$ in terms of Fredholm indices of the chiral blocks $A,B,C$ and $D$.

\begin{lem}\label{lem:sixFeu}
Let $F$ be an essentially unitary operator in chiral block representation
\eqref{Fblock}, and let $W,W'$ be the essentially unitary operators determined from $F$ via \eqref{eq:timeframes}.
Then the following are equivalent:
\begin{itemize}
\item[(1)] $W$ is essentially gapped.
\item[(2)] $W'$ is essentially gapped.
\item[(3)] $A$, $B$, $C$, $D$ are Fredholm operators.
\end{itemize}
In this case
\begin{align}
\six(W) &=\fred{A}-\fred{B}\quad=\quad\fred{C}-\fred{D}\label{eq:six},\\
\six(W') &=\fred{C}-\fred{A}\quad=\quad\fred{D}-\fred{B}\label{eq:sixprime}.
\end{align}

\end{lem}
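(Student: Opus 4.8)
\emph{Proof plan.} The whole statement comes down to one block computation, carried out modulo compact operators. First I would note that an essentially unitary $F$ is in particular Fredholm (its image in the Calkin algebra is unitary), that $F^*F$ and $FF^*$ differ from $\idty$ by compacts, and hence that $W=\ch F^*\ch F$ and $W'=F\ch F^*\ch$ are essentially unitary with the block forms \eqref{eq:walks}; recall that, as reviewed in \Qref{sec:review}, ``$W$ essentially gapped'' means exactly that $W-\idty$ and $W+\idty$ are Fredholm (and likewise for $W'$). The computational input is that $\ch F^*\ch\pm F^*$ has a particularly simple block form: from $\ch F^*\ch=\begin{pmatrix}A^*&-C^*\\-B^*&D^*\end{pmatrix}$ one gets $\ch F^*\ch+F^*=2\begin{pmatrix}A^*&0\\0&D^*\end{pmatrix}$ and $\ch F^*\ch-F^*=-2\begin{pmatrix}0&C^*\\B^*&0\end{pmatrix}$.

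Replacing $\idty$ by $F^*F$ modulo compacts in $W\mp\idty=\ch F^*\ch F\mp\idty$, and likewise $\idty$ by $FF^*$ in $W'\mp\idty=F\ch F^*\ch\mp\idty$, these identities give, modulo compacts,
\[
 W+\idty\equiv 2\begin{pmatrix}A^*&0\\0&D^*\end{pmatrix}F,\qquad
 W-\idty\equiv -2\begin{pmatrix}0&C^*\\B^*&0\end{pmatrix}F,
\]
and $W'\pm\idty$ are the same two products but with $F$ on the left instead of the right. Since $F$ is Fredholm, since $\operatorname{diag}(A^*,D^*)$ is Fredholm iff $A$ and $D$ are, and since $\begin{pmatrix}0&C^*\\B^*&0\end{pmatrix}$ is Fredholm iff $B$ and $C$ are, one reads off: $W+\idty$ and $W'+\idty$ are Fredholm precisely when $A$ and $D$ are Fredholm, while $W-\idty$ and $W'-\idty$ are Fredholm precisely when $B$ and $C$ are. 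Combining the two halves gives $(1)\Leftrightarrow(3)\Leftrightarrow(2)$.

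For the index formulas I would use the description of the symmetry index through the imaginary part, in the spirit of \Qref{sec:review} and \cite{LongVersion}: $\six(W)=\six(\im W)$ with $\im W=\tfrac1{2i}(W-W^*)$ a chiral self-adjoint operator which is essentially gapped at $0$ exactly when $W$ is essentially gapped at $\pm1$, and for a chiral self-adjoint $H=\begin{pmatrix}0&h\\h^*&0\end{pmatrix}$ with Fredholm off-diagonal block $\six(H)=\tr(\ch|_{\ker H})=\dim\ker h^*-\dim\ker h=\fred{\Gamma_-H\Gamma_+}$. Taking $H=\im W$ and reading off the blocks from \eqref{eq:walks},
\[
 \Gamma_-(\im W)\Gamma_+=\tfrac1{2i}\bigl((D^*C-B^*A)-(B^*A-D^*C)\bigr)=\tfrac1i\,(D^*C-B^*A),
\]
and since $B^*A+D^*C$ is the lower-left block of $F^*F-\idty$ it is compact, so $D^*C-B^*A$ agrees modulo compacts with both $-2B^*A$ and $2D^*C$. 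Hence $\six(W)=\fred{D^*C-B^*A}$ equals both $\fred{B^*A}=\fred A-\fred B$ and $\fred{D^*C}=\fred C-\fred D$, which is \eqref{eq:six}; the identical computation for $W'$, with $CA^*+DB^*$ now the compact lower-left block of $FF^*-\idty$, gives $\six(W')=\fred{CA^*-DB^*}=\fred C-\fred A=\fred D-\fred B$, which is \eqref{eq:sixprime}.

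The routine but error-prone part is the compact-operator bookkeeping, since $F$ is merely essentially unitary rather than unitary; the one genuinely external ingredient is the identification $\six(W)=\six(\im W)$ together with the Fredholm-index expression for $\six$ of a chiral Hamiltonian, which has to be imported from \cite{LongVersion} with the correct sign conventions. I expect pinning down that convention to be the main obstacle; everything else is elementary block algebra together with the multiplicativity of the Fredholm index and the relation $\fred{X^*}=-\fred X$.
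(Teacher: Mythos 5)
Your proposal is correct and follows essentially the same strategy as the paper: for $(1)\Leftrightarrow(3)$ both arguments pass to the Calkin algebra and factor $W\mp\idty$ into a Fredholm unitary times the diagonal/off-diagonal part of $F$ (the paper factors $\ch F^*\ch$ on the left, you factor $F$ on the right after regrouping with $F^*$ — the same idea), and for the index formulas both identify $\six(W)$ with $\six(\im W)$ and read it off as the Fredholm index of the appropriate chiral block of $\im W$, using compactness of $B^*A+D^*C$ and $CA^*+DB^*$ from essential unitarity.
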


\begin{proof}
(1)$\Leftrightarrow$(3):
For the first part we can replace each operator by its image in the Calkin
algebra. Then $F$, $W$ and $W'$ are exactly unitary. Item (1) means that $\pm1$ are not in the
spectrum of $W$, i.e.,
$\ch F^*\ch F\mp\idty=\ch F^*\ch(F\mp\ch F\ch)$ is invertible in the Calkin
algebra. Since $\ch F^*\ch $ is unitary, this is saying that $( F\mp\ch F\ch)$
are invertible. But these are just the versions of \eqref{Fblock} with either
the diagonal or the off-diagonal blocks set equal to zero. Clearly such an
operator is invertible iff its non-zero blocks are. That is $A,B,C,D$ must be
invertible in the Calkin algebra, that is Fredholm operators.

\noindent(1)$\Leftrightarrow$(2): In the Calkin algebra $W'=FWF^*$ and the
essential spectra of $W$ and $W'$ are the same.

\noindent To prove the index formulas we consider the null space of
\begin{equation}\label{imW}
  \im W=(W-W^*)/(2i)
    =\frac1i\begin{pmatrix} 0 & A^*B-C^*D\\D^*C-B^*A & 0\end{pmatrix}
    =:\begin{pmatrix} 0 & X\\-X^*&0 \end{pmatrix}.
\end{equation}
Even when $W$ is not exactly unitary, and not diagonalizable, this operator is
hermitian, and its eigenspaces are well defined. Its kernel consists of the
vectors $(\psi_+,\psi_-)$ with $\psi_+\in\ker X^*$ and $\psi_-\in \ker X$, so
$\six(\im W)=\DK {X^*}-\DK X=-\fred{X}$, see \cite[Lemma 3.8]{UsOnTI}.
Since by essential unitarity
$A^*B+C^*D$ is compact, the Fredholm operator $2i A^*B$ is a compact
perturbation of $X$, and has the same index as $X$, namely
$\fred{A^*B}=\fred{B}-\fred{A}$.
\end{proof}

The statement of this lemma remains true when all ``essentially''s are dropped, i.e. in the exactly unitary setting the exact gappedness of $W$ or $W'$ is equivalent to the exact invertibility of the chiral blocks. In contrast, the following lemma requires exact unitarity and allows us to express $\six_+$ and $\six_-$ in terms of Fredholm indices of the chiral blocks of $F$:

\begin{lem}\label{lem:sixFu}
Let $F$ be a unitary operator in chiral block representation
\eqref{Fblock}, and let the corresponding unitary operators $W,W'$ from \eqref{eq:timeframes} be
essentially gapped.
Then, the symmetry indices can be defined separately for the eigenspaces at
$\pm1$, and we get
\begin{align}\label{pmindices}
	\six_+(W) &= \fred{C}=-\fred{B} =\six_+(W')\\
	\six_-(W) &= \fred{A}=-\fred{D}=-\six_-(W').
\end{align}
\end{lem}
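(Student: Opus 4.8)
The plan is to deduce this lemma from \Qref{lem:sixFeu} by refining the argument used there for the imaginary part of $W$. In \Qref{lem:sixFeu} we only had $\six(\im W) = -\fred{X}$, which conflates the two eigenspaces at $\pm 1$; now that $F$ is \emph{exactly} unitary, $W = \ch F\adj \ch F$ and $W' = F\ch F\adj\ch$ are exactly unitary too, hence normal with genuine spectral projections at $+1$ and $-1$. The first step is to identify these eigenspaces. From \eqref{imW}, the kernel of $\im W$ splits as $\ker X^* \oplus \ker X$ with respect to the chiral grading: the $\Gamma_+$-component lies in $\ker X^*$ and the $\Gamma_-$-component lies in $\ker X$. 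The claim is that these two pieces are exactly the $+1$ and $-1$ eigenspaces of $W$ respectively (or the other way around — a sign to be pinned down by a direct check on, say, $W = \idty$ or by looking at $\operatorname{Re} W = (W+W^*)/2$ restricted to $\ker \im W$). So I would compute $W(\psi_+,0)$ for $\psi_+\in\ker X^*$ using \eqref{eq:walks}: on that subspace $A^*B\psi_+ = C^*D\psi_+$ up to compacts, and one should be able to see that $W(\psi_+,0) = (\psi_+,0)$ or $-(\psi_+,0)$ exactly. This gives, by the convention that $\six_\pm$ counts the signed dimension of the $\pm1$-eigenspace inside a chiral-graded kernel, something like $\six_-(W) = \dim\ker X^* $ restricted appropriately — but I must be careful: $\six_\pm(W)$ is itself a \emph{difference} of dimensions of $\Gamma_+$- and $\Gamma_-$-parts of the relevant eigenspace, as in $\six(\im W) = \DK{X^*} - \DK X$.

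The cleaner route is probably this: because $F$ is unitary, the blocks satisfy the full set of relations $A\adj A + C\adj C = \idty$, $B\adj B + D\adj D = \idty$, $A\adj B + C\adj D = 0$, and the row relations $AA\adj + BB\adj = \idty$, etc. Exactly (not just modulo compacts). Then on $\ker \im W$ the off-diagonal block $X = \tfrac1i(A\adj B - C\adj D)$ vanishes, and combining $X=0$ with $A\adj B + C\adj D = 0$ forces $A\adj B = 0$ and $C\adj D = 0$ on the relevant subspaces; feeding this into $W$ from \eqref{eq:walks} shows $W$ acts as $\pm\idty$ with a definite sign on $\ker X^* \subset \Gamma_+$ and on $\ker X\subset \Gamma_-$. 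Concretely, on $\psi_+ \in \ker X^*$ one gets $W\psi_+ = (A\adj A - C\adj C)\psi_+$, and using $A\adj A + C\adj C = \idty$ this equals $(\idty - 2C\adj C)\psi_+$; I then need $C\adj C\psi_+ \in \{0,\psi_+\}$, which follows because $X^*\psi_+ = 0$ means $B\adj A\psi_+ = D\adj C\psi_+$, and chasing the unitarity relations pins down whether $C\psi_+ = 0$ (giving eigenvalue $+1$) or $A\psi_+ = 0$ (giving eigenvalue $-1$). The upshot is $\ker X^* = \ker C|_{\Gamma_+} \oplus \ker A|_{\Gamma_+}$, the first summand sitting in the $+1$-eigenspace of $W$ and the second in the $-1$-eigenspace. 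Similarly $\ker X = \ker B|_{\Gamma_-}\oplus\ker D|_{\Gamma_-}$.

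Granting that decomposition, the indices read off immediately. By definition $\six_+(W)$ is the $\Gamma_+$-dimension minus the $\Gamma_-$-dimension of the $+1$-eigenspace of $W$, which is $\dim\ker C|_{\Gamma_+} - \dim\ker B|_{\Gamma_-}$. But for unitary $F$, $\ker C = \ker C|_{\Gamma_+}$ since $C: \Gamma_+\HH \to \Gamma_-\HH$ by construction (it is the $(-,+)$ block), so $\dim\ker C|_{\Gamma_+} = \dim\ker C$, and likewise $\dim\ker B|_{\Gamma_-} = \dim\ker B = \dim\ker B\adj$ adjusted — here I use that $C\adj$ and $B$ have matching domains/codomains and that unitarity of $F$ gives $CC\adj + DD\adj = \idty$ type relations linking $\ker B$ with $\ker C\adj$. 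The relation $A\adj B + C\adj D = 0$ together with Fredholmness should give $\dim\ker C = \dim\ker C\adj + \fred{C}$ and then the accounting collapses to $\six_+(W) = \fred{C}$. Matching against Table~\ref{tab:indextable}, which already records $\fred{C} = -\fred{B}$ (a consequence of $\six_+ = (\six+\six')/2$ combined with \eqref{eq:six}–\eqref{eq:sixprime}), gives $\six_+(W) = \fred{C} = -\fred{B}$; subtracting from $\six(W) = \fred{A} - \fred{B}$ from \eqref{eq:six} yields $\six_-(W) = \six(W) - \six_+(W) = \fred{A}$, and from \eqref{eq:six} again $\fred{A} - \fred{B} = \fred{C} - \fred{D}$ forces $\fred{D} = \fred{C} + \fred{B} - \fred{A}$... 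I should double-check this gives $\fred{A} = -\fred{D}$; indeed $\fred{A}+\fred{D} = \fred{B}+\fred{C}$ and $\fred{B} = -\fred{C}$ give $\fred{A}+\fred{D} = 0$. The $W'$ equalities follow symmetrically, running the same argument with the other timeframe $W' = F\ch F\adj\ch$, whose blocks are the ``primed'' versions in \eqref{eq:walks}; one finds the roles of rows and columns swapped, producing $\six_+(W') = \fred{C}$ and $\six_-(W') = -\fred{A}$, i.e. the stated sign flip on the $-$ index.

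The main obstacle is the sign bookkeeping: getting right which of $\ker A|_{\Gamma_+}$, $\ker C|_{\Gamma_+}$ lands in the $+1$ versus the $-1$ eigenspace of $W$, and then tracking the $\Gamma_\pm$-grading consistently through the definition of $\six_\pm$ as a signed dimension. A secondary subtlety is that $\im W$ only has a good spectral resolution because of exact unitarity — one must be sure that the eigenspaces of $W$ at $\pm 1$ are genuinely contained in $\ker \im W$ and not merely asymptotically so, which is where the hypothesis of exact (not essential) unitarity of $F$ is essential and where I would invoke normality of $W$. I expect the cleanest writeup avoids \eqref{imW} altogether and works directly with $W \mp \idty = \ch F\adj\ch(F \mp \ch F\ch)$ as in the proof of \Qref{lem:sixFeu}, noting that $F \mp \ch F\ch$ is the block matrix with off-diagonal or diagonal part deleted, so its kernel is literally $\ker A \oplus \ker D$ (resp. $\ker B \oplus \ker C$), and then the $\Gamma_\pm$-grading of that kernel is manifest, giving $\six_-(W) = \dim\ker A - \dim\ker D$ and $\six_+(W) = \dim\ker C - \dim\ker B$ before using the unitarity relations to rewrite each difference as a single Fredholm index.
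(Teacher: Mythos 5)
Your closing paragraph is essentially the paper's proof: the $\pm1$-eigenspaces of $W$ are the kernels of $F \mp \ch F\ch$, i.e.\ of the diagonal and off-diagonal parts of $F$, and reading off the chiral grading of those kernels gives $\six_-(W) = \DK A - \DK D$ and $\six_+(W) = \DK C - \DK B$. The one step you wave at (``using the unitarity relations to rewrite each difference as a single Fredholm index'') is the crux, and it deserves to be spelled out: from $F^*F = \idty$, for $\psi \in \ker B$ one has $C^*D\psi = -A^*B\psi = 0$ and $D^*D\psi = \psi$, so $D$ maps $\ker B$ isometrically into $\ker C^*$; the reverse inclusion $D^*\ker C^* \subset \ker B$, obtained from $FF^* = \idty$, then gives $\DK B = \DK C^*$, and the three analogous pairings give $\DK A = \DK D^*$, $\DK D = \DK A^*$, $\DK C = \DK B^*$. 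Substituting turns $\DK C - \DK B$ into $\DK C - \DK C^* = \fred{C}$ and at the same time delivers $\fred{C} = -\fred{B}$ and $\fred{A} = -\fred{D}$; the $W'$ computation swaps kernels and cokernels, producing the sign flip on $\six_-$. One caution about your middle paragraph: appealing to \Qref{tab:indextable} or to $\six_+ = (\six+\six')/2$ for $\fred{C} = -\fred{B}$ is circular, since that identity (\Qref{cor:(si+si')/2}) is itself derived from this lemma --- the direct unitarity argument above is what actually carries the weight.
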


\begin{proof}
The unitarity condition $F^*F=\idty$ implies that
\begin{equation}\label{eq:unitarity_of_F}
\begin{aligned}
  A^*A+C^*C &=\idty= B^*B+D^*D \\
  A^*B+C^*D&=0= B^*A+D^*C .
\end{aligned}
\end{equation}
Now suppose that $\psi\in\ker A$. Then $D^*C\psi=0$, i.e., $C\psi\in\ker D^*$.
In the same way we get $4$ inclusions and another $4$ from $FF^*=\idty$.
\begin{alignat*}{4}
  &C\ker A&&\subset\ker D^* \hspace{0.24\linewidth} &&B^*\ker A^*&&\subset\ker D \\
  &B\ker D&&\subset\ker A^* \hspace{0.24\linewidth} &&C^*\ker D^*&&\subset\ker A \\
  &A\ker C&&\subset\ker B^* \hspace{0.24\linewidth} &&A^*\ker B^*&&\subset\ker C \\
  &D\ker B&&\subset\ker C^* \hspace{0.24\linewidth} &&D^*\ker C^*&&\subset\ker B.
\end{alignat*}

Since $C$ is isometric on $\ker A$, this embedding is isometric, and there is a
complementary inclusion making $\ker A$ and $\ker D^*$ unitarily isomorphic.
Extending this idea to the other inclusions it follows that
\begin{align}\label{eq:dimkers}
  &\DK A^*=\DK D    &&\DK D^*=\DK A \\
  &\DK B^*=\DK C    &&\DK C^*=\DK B. \label{eq:dimkerss}
\end{align}
Note that this implies $\fred{A}=-\fred{D}$ and $\fred{C}=-\fred{B}$.

As in the proof of \Qref{lem:sixFeu} we consider the $+1$-eigenspace of $W$ as
the kernel of $F-\ch F\ch$, which is the off-diagonal part of $F$. A vector
writing in chiral components as $(\psi_+,\psi_-)$ lies in this kernel iff
$\psi_-\in\ker B$ and $\psi_+\in \ker C$. Hence
$\six_+(W)=\DK C -\DK B=\fred{C}$. Similarly, for the $-1$-eigenspace we need
the kernel of the diagonal part of $F$, getting $\six_-(W)=\DK A-\DK
D=\fred{A}$. The computations for $W'$ are analogous.
\end{proof}

\subsection{The Fredholm index of F}\label{sec:FredF}

Clearly, if the half-step operator $F$ is unitary its Fredholm index vanishes. If $F$ is merely essentially unitary we can express its Fredholm index in terms of the Fredholm indices of its chiral blocks:
\begin{lem}\label{lem:iF}
	Let $F$ be an essentially unitary half-step operator. Then its Fredholm index is given by
	\begin{equation}\label{eq:Findex}
		\fred{F}=\fred{B}+\fred{C}=\fred{A}+\fred{D}.
	\end{equation}
\end{lem}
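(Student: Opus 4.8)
The plan is to compute $\fred F$ by deforming $F$, through Fredholm operators, to the block-diagonal operator $F_0=\begin{pmatrix}A&0\\0&D\end{pmatrix}$, whose index is obviously $\fred A+\fred D$, and then to read off the remaining form of \eqref{eq:Findex} for free from \Qref{lem:sixFeu}. To set up, note first that by hypothesis $W$ (equivalently $W'$) is essentially gapped, so $A,B,C,D$ are Fredholm by \Qref{lem:sixFeu}; in particular $F_0$ is Fredholm with $\ker F_0=\ker A\oplus\ker D$ and $\ker F_0\adj=\ker A\adj\oplus\ker D\adj$, whence $\fred{F_0}=\fred A+\fred D$.

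Next, introduce the norm-continuous path $F_t:=\begin{pmatrix}A&tB\\tC&D\end{pmatrix}$, $t\in[0,1]$, joining $F_0$ to $F_1=F$. The key step is to show that $F_t$ is Fredholm for every $t$, which I would check on $F_t\adj F_t$ (and symmetrically on $F_tF_t\adj$). Its off-diagonal blocks are $t(A\adj B+C\adj D)$ and its adjoint, both compact because $F$ is essentially unitary. Using $A\adj A+C\adj C=\idty+(\text{compact})$, the $(1,1)$-block equals $A\adj A+t^2C\adj C=(1-t^2)A\adj A+t^2\idty+(\text{compact})$; for $t>0$ this is a compact perturbation of an operator bounded below by $t^2\idty$, hence invertible modulo compacts, and for $t=0$ it is $A\adj A$, which is Fredholm since $A$ is. The $(2,2)$-block is treated identically with $D$ in place of $A$. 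A block operator that is block-diagonal modulo compacts with each diagonal block Fredholm is itself Fredholm, so $F_t\adj F_t$ is Fredholm, and likewise $F_tF_t\adj$; hence $F_t$ has a left and a right parametrix, so $F_t$ is Fredholm. Since $t\mapsto F_t$ is norm-continuous and the Fredholm index is locally constant, $\fred F=\fred{F_1}=\fred{F_0}=\fred A+\fred D$.

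Finally, the equality $\fred A+\fred D=\fred B+\fred C$ is immediate from \eqref{eq:six}, which reads $\six(W)=\fred A-\fred B=\fred C-\fred D$; rearranging gives exactly $\fred A+\fred D=\fred B+\fred C$. Together with the previous paragraph this establishes both equalities in \eqref{eq:Findex}.

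I expect the only real work to be the Fredholm-ness of the interpolating operators $F_t$: essential unitarity of $F$ both makes the cross term $A\adj B+C\adj D$ compact and turns $A\adj A+t^2C\adj C$ into $(1-t^2)A\adj A+t^2\idty$ modulo compacts, and it is precisely this positivity that keeps $F_t$ Fredholm uniformly in $t$, including the endpoint $t=0$. An alternative avoiding the homotopy is a Gaussian elimination in the Calkin algebra: modulo compacts $F$ factors as a unipotent lower-triangular operator times $\mathrm{diag}\bigl(A,\ D-C\quasinv{A}B\bigr)$ times a unipotent upper-triangular operator, the outer factors being honestly invertible, so $\fred F=\fred A+\fred{D-C\quasinv{A}B}$; one then checks, using $F\adj F=\idty$ modulo compacts, that $D-C\quasinv{A}B$ agrees modulo compacts with a parametrix of $D\adj$, so that $\fred{D-C\quasinv{A}B}=-\fred{D\adj}=\fred D$.
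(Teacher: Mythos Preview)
Your proof is correct and takes a genuinely different route from the paper's. The paper argues via an explicit factorization: using essential unitarity to write $D=-C\quasinv C\quasinv{C^*}A^*B+K$ with $K$ compact, it factors $F$ (up to a compact perturbation) as
\[
\begin{pmatrix}\idty&0\\0&C\end{pmatrix}
\begin{pmatrix}A&\idty\\\idty&-\quasinv C\quasinv{C^*}A^*\end{pmatrix}
\begin{pmatrix}\idty&0\\0&B\end{pmatrix},
\]
and then verifies directly that the middle factor has trivial kernel and cokernel (via strict positivity of $\idty+AMA^*$ with $M=\quasinv C\quasinv{C^*}\geq0$), so that $\fred F=\fred B+\fred C$. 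Your homotopy $F_t$ instead lands first on $\fred A+\fred D$ and is arguably cleaner: no parametrices need to be introduced, and Fredholm-ness along the entire path falls out of the single observation that $(1-t^2)A^*A+t^2\idty\geq t^2\idty$ modulo compacts. The paper's factorization, on the other hand, makes the index formula visible as a product decomposition and is close in spirit to the Gaussian-elimination alternative you sketch at the end---though the paper's middle factor is specifically engineered so that its index vanishes without appeal to any Schur-complement identity. Both proofs rest on the same two consequences of essential unitarity: the compactness of $A^*B+C^*D$ and the relation $A^*A+C^*C=\idty$ modulo compacts.
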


\begin{proof}
  By essential unitarity of $F$, $A^*B+C^*D$ is compact. Moreover, since the half-step operator constitutes essentially gapped walks $W$ and $W'$, the chiral blocks are Fredholm operators, i.e.\ for each $X\in\{A,B,C,D\}$, there is an $\quasinv X$, such that $X\quasinv X-\idty$ and $\quasinv X X-\idty$ are compact. Hence, $D=-\quasinv{C^*}A^*B+K'=-C\quasinv C\quasinv{C^*}A^*B+K$, for some compact operators $K$ and $K'$.
  This allows the following factorisation:
	\begin{equation}\label{eq:factor}
		F=\begin{pmatrix}
		A & B\\ C & -C\quasinv C\quasinv{C^*}A^*B+K
		\end{pmatrix}=\begin{pmatrix}
		\idty & 0\\ 0 & C
		\end{pmatrix}\begin{pmatrix}
		A & \idty \\ \idty & -\quasinv C\quasinv{C^*}A^*
		\end{pmatrix}\begin{pmatrix}
		\idty & 0\\ 0 & B
		\end{pmatrix}+\begin{pmatrix}
		0 & 0\\0& K
		\end{pmatrix},
	\end{equation}
    All three factors are Fredholm operators, thus the Fredholm index of this expression is given by the sum of their Fredholm indices. The first and the third factor have the same indices as $C$ and $B$, respectively. Moreover, the Fredholm index of the middle factor vanishes. Indeed, let $M$ be any positive operator, then, for every $A$
	\begin{align}
	\ker\begin{pmatrix}
		A & \idty \\ \idty & -MA^*
	\end{pmatrix} &=\left\{(\varphi,\psi)\vert\,
	A\varphi+\psi=\varphi-MA^*\psi=0\right\}\\
	&= \left\{(MA^*\psi,\psi)\vert\, (\idty+AMA^*)\psi=0\right\}\\
	&= \left\{0\right\},
	\end{align}
	where the last equality follows from strict positivity of $\idty+AMA^*$.
A similar reasoning shows that also the kernel of the adjoint is trivial. Since $\quasinv C\quasinv{C^*}$ is positive, this guarantees, that the middle factor in \eqref{eq:factor} has trivial Fredholm index.
Thus,
    \begin{equation}
      \fred F=\fred{F+K}=\fred B+\fred C,
    \end{equation}
and the second equation in \eqref{eq:Findex} follows from \eqref{eq:six}.
\end{proof}

\subsection{Existence of $F$}\label{sec:existence}

A question of primary interest is whether for every chiral symmetric unitary $W$ there exists a half-step operator $F$ such that $W=\ch F^*\ch F$. In this paragraph we answer this question in the affirmative and, moreover, identify conditions which guarantee that a pair of chiral symmetric unitaries is a pair of timeframes for the same half-step operator $F$. Before diving into these existential questions, we examine the uniqueness of a given $F$ when only one of the timeframes is kept fixed.

\begin{lem}\label{lem:Forbit}
	Let $F_0$ be a half-step operator for $W$, i.e. $W=\gamma F_0^*\gamma F_0$. Then any other $F$ is also a half-step operator for $W$ if and only if $F=UF_0$, with $\gamma U=U\gamma$.
    In the chiral eigenbasis this means
	\begin{equation}
		F=\begin{pmatrix}
			U_+&0\\0&U_-
		\end{pmatrix}F_0.
	\end{equation}
    Similarly, the second timeframe $W'=F_0\gamma F_0^*\gamma$ is invariant if and only if $F=F_0U$ with $\gamma U=U\gamma$.
\end{lem}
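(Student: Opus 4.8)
The plan is to treat the two claims (invariance of $W$ under left multiplication; invariance of $W'$ under right multiplication) in parallel, since they are mirror images of each other under the substitution $F\mapsto \ch F^*\ch$, which swaps the roles of the two timeframes. I would first observe that the ``if'' direction is a direct computation: if $F=UF_0$ with $\ch U=U\ch$, then $\ch F^*\ch F=\ch F_0^*U^*\ch U F_0=\ch F_0^*\ch U^*U F_0=\ch F_0^*\ch F_0=W$, using unitarity of $U$ in the last step and the commutation relation to move $\ch$ past $U^*$ (note $\ch U^*\ch=U^*$ follows from $\ch U\ch=U$ by taking adjoints and using $\ch^*=\ch$). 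The block-diagonal form of a unitary $U$ commuting with $\ch=\Gamma_+-\Gamma_-$ is immediate: commuting with $\ch$ means commuting with both spectral projections $\Gamma_\pm$, so $U$ preserves each chiral eigenspace and hence is block diagonal, with unitary diagonal blocks $U_\pm$ on $\Gamma_\pm\HH$.

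For the ``only if'' direction, suppose $F$ and $F_0$ are both half-step operators for the same $W$, i.e. $\ch F^*\ch F=\ch F_0^*\ch F_0$. Set $U:=FF_0^*$; this is unitary since $F,F_0$ are. I then need to show $\ch U=U\ch$, equivalently $\ch U\ch=U$, equivalently $\ch FF_0^*\ch=FF_0^*$. From $\ch F^*\ch F=\ch F_0^*\ch F_0$ I multiply on the left by $\ch$ and on the right by $F_0^*$ to get $F^*\ch FF_0^*=F_0^*\ch$, then multiply on the left by $F$: $\ch FF_0^*=FF_0^*\ch\cdot(F_0\ch^{-1}F_0^{-1})$... this needs care, so instead I would rearrange directly: from $F^*\ch F=F_0^*\ch F_0$ (multiplying the original identity by $\ch$ on the left) we get $\ch F=FF_0^*\ch F_0$, i.e. $\ch F=U\ch F_0$. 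On the other hand, $\ch F=\ch U F_0$. Equating, $\ch U F_0=U\ch F_0$, and cancelling the invertible $F_0$ on the right gives $\ch U=U\ch$, as desired. The second timeframe statement follows by the same argument applied to $\widetilde F:=\ch F^*\ch$ and $\widetilde F_0:=\ch F_0^*\ch$, for which $\ch\widetilde F^*\ch\widetilde F=F\ch F^*\ch=W'$; or, more directly, by writing $W'=F\ch F^*\ch$, setting $V:=F_0^*F$, and running the analogous manipulation to conclude $\ch V=V\ch$, i.e. $F=F_0V$ with $V$ block diagonal.

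The only mild subtlety — the ``main obstacle'', though it is not severe — is keeping the order of multiplications straight so that one cancels $F_0$ (or $F_0^*$) on the correct side and genuinely recovers the commutation relation rather than a conjugated variant of it; the key algebraic fact being exploited is simply that $\ch F^*\ch F=\ch F_0^*\ch F_0$ can be rewritten as $\ch F=(FF_0^*)\,\ch F_0$, exhibiting $U=FF_0^*$ as the intertwiner. Everything else is bookkeeping with unitarity and $\ch^2=\idty$.
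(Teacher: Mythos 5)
Your proof is correct and takes essentially the same approach as the paper: both set $U=FF_0^*$ and rearrange $F^*\ch F=F_0^*\ch F_0$ into the commutation relation $\ch FF_0^*=FF_0^*\ch$, and both then identify the block-diagonal form from the fact that commuting with $\ch$ is equivalent to commuting with both spectral projections $\Gamma_\pm$. The only difference is presentational — you spell out the ``if'' direction and the symmetric argument for $W'$ in detail, whereas the paper states these as a brief equivalence and an ``analogously.''
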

\begin{proof}
	From $\gamma F^*\gamma F=\gamma F_0^*\gamma F_0$ we get
	\begin{equation}
		F^*\gamma F=F_0^*\gamma F_0\qquad\Leftrightarrow\qquad 	\gamma FF_0^*=FF_0^*\gamma.
	\end{equation}
	Hence, $F=(FF_0^*)F_0=UF_0$ with $\gamma U=U\gamma$. The statement for $W'$ follows analogously.
\end{proof}

In the following we write $W$ in the chiral eigenbasis as
\begin{equation}
W=\begin{pmatrix}
\alpha &\beta\\-\beta^*&\delta
\end{pmatrix},
\end{equation}
where the relation between the off-diagonal blocks is due to the chiral symmetry, which also forces $\alpha$ and $\delta$ to be self-adjoint \cite[Lemma 3.7]{UsOnTI}. Moreover, we denote by $P_\pm$ be the projections onto the $\pm1$-eigenspaces of $W$ and write $\HH$ as
\begin{equation}\label{eq:hh_decomp}
    \HH=P_-\HH\oplus P_+\HH\oplus\KK,
\end{equation}
where $\KK=(P_++P_-)^\perp \HH$ is the subspace on which $W$ is properly gapped, and by $\Gamma_\pm$ the projections onto the $\pm1$-eigenspaces of $\gamma$, i.e. $\gamma=\Gamma_+-\Gamma_-$.

\begin{prop}\label{prop:Fexists}
	Let $W$ be a chiral symmetric essentially gapped unitary. Then there exists a half-step operator $F$, such that
	\begin{equation}\label{eq:F2W}
	W=\gamma F^*\gamma F.
	\end{equation}
\end{prop}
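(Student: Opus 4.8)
The plan is to recast the equation $W=\gamma F^*\gamma F$ as a statement about \emph{symmetries} (self-adjoint unitaries). Put $\Sigma:=\gamma W$. From the chiral relation $\gamma W\gamma=W^*$ one reads off $\Sigma^*=W^*\gamma=\gamma W=\Sigma$ and $\Sigma^2=\gamma W\gamma W=\gamma(\gamma W^*)W=\idty$, so $\Sigma$ is a symmetry, just like $\gamma$. Moreover $W=\gamma F^*\gamma F$ is equivalent (multiply by $\gamma$ and use $\gamma^2=\idty$) to $F^*\gamma F=\gamma W=\Sigma$, i.e.\ to the unitary $F$ relating $\gamma$ and $\Sigma$ by conjugation. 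Hence it suffices to prove that the two symmetries $\gamma$ and $\Sigma$ are unitarily equivalent: two symmetries on a separable Hilbert space are unitarily equivalent precisely when the dimensions of their $+1$- and $-1$-eigenspaces match, and any implementing unitary yields the desired $F$ (a short check against \eqref{eq:walks} then re-confirms $\gamma F^*\gamma F=W$).

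To compare $\gamma$ and $\Sigma$ I would use the decomposition $\HH=P_+\HH\oplus P_-\HH\oplus\KK$, which is invariant under both $\gamma$ and $W$: indeed $\gamma$ intertwines $W$ with $W^*=W^{-1}$ and therefore maps each spectral subspace of $W$ onto the one for the complex-conjugate spectral set, which is itself for $\{+1\}$, for $\{-1\}$, and for the gapped remainder. On $P_+\HH$ one has $W=\idty$, so $\Sigma=\gamma$ there and $F|_{P_+\HH}=\idty$ works (it commutes with $\gamma$). On $\KK$ the restriction $W_0:=W|_\KK$ is unitary with spectrum bounded away from $-1$, so the principal square root $F_0:=W_0^{1/2}$ is a well-defined unitary; since $\gamma F_0\gamma=(\gamma W_0\gamma)^{1/2}=(W_0^*)^{1/2}=F_0^*$, one gets $F_0^*\gamma F_0=\gamma F_0^{\,2}=\gamma W_0=\Sigma|_\KK$, so $\gamma$ and $\Sigma$ are already unitarily equivalent on $\KK$. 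The remaining block $P_-\HH$ is finite-dimensional by essential gappedness, and there $W=-\idty$, so the equation reads $F^*\gamma F=-\gamma$: this asks the two chiral components of $P_-\HH$ to be swapped, which is possible within $P_-\HH$ alone only when they have equal dimension, i.e.\ iff $\six_-(W)=0$.

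Assembling the pieces, the $\pm1$-eigenspace dimensions of $\gamma$ and of $\Sigma$ agree on the $P_+\HH$- and $\KK$-parts, and on the $P_-\HH$-part they differ by the finite integer $\dim(\Gamma_+P_-\HH)-\dim(\Gamma_-P_-\HH)$, which up to sign is $\six_-(W)$. Since $\HH$ is infinite-dimensional while $P_+\HH\oplus P_-\HH$ is finite-dimensional, $\KK$ — and, in the relevant setting where both chiral sectors of $\HH$ are infinite-dimensional, both chiral sectors of $\KK$ — is infinite-dimensional, so this finite discrepancy does not change the eigenspace cardinalities of $\Sigma$: they still match those of $\gamma$, hence $\gamma$ and $\Sigma$ are unitarily equivalent. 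Concretely one keeps $F=\idty$ on $P_+\HH$, routes $P_-\HH$ together with a finite-dimensional, chirally suitable slice of $\KK$ through a partial isometry implementing the required swap of chiral components, and uses (a relabelled) $W_0^{1/2}$ on the rest of $\KK$.

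The crux is the $-1$-eigenspace. For $+1$ one has $\Sigma=\gamma$ and for the gapped bulk the square-root construction is canonical, so neither causes trouble; but a standalone finite-dimensional block with $W=-\idty$ generally admits no half-step (precisely when $\six_-(W)\neq0$), and one is forced to compensate inside the surrounding bulk. This is exactly where \emph{essential} gappedness — rather than a mere spectral gap at $\pm1$, which would permit an infinite or a nontrivially chirally unbalanced $\pm1$-eigenspace — enters the argument.
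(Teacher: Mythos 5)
Your proof is correct and takes a genuinely different route. You recast $W=\gamma F^*\gamma F$ as the statement that $F$ conjugates the symmetry $\gamma$ into the symmetry $\Sigma:=\gamma W$, and reduce the existence of $F$ to the elementary fact that two self-adjoint unitaries on a separable Hilbert space are unitarily equivalent iff their $\pm1$-eigenspaces have matching dimensions; the dimension count then comes down to noting that the discrepancy $\six_-(W)$, concentrated on the finite-dimensional block $P_-\HH$, is absorbed by the infinite-dimensional chiral sectors of $\KK$. The paper instead builds $F=F_-\oplus F_+\oplus F_\KK$ block by block, fixing concrete finite-dimensional co-kernel spaces and taking $F_\KK=V\widetilde F$ with the explicit chiral square root $\widetilde F$ from \eqref{eq:Ftwiddle} (playing the role of your $W_0^{1/2}$) followed by a unitary $V$ onto the complement $\KK'$ of the chosen co-kernels. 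Your reformulation buys conceptual clarity: the necessity of an infinite-dimensional Hilbert space and the role of $\six_-(W)$ are visible at a glance, which is exactly the remark the paper makes right after its proof. The explicit construction buys control that your abstract ``match the cardinalities'' step does not immediately provide: \Qref{schol:esslocF} re-uses it verbatim to show $F$ can be chosen essentially local, whereas to extract locality from your argument one would still have to pin down the partial isometry routing $P_-\HH$ through $\KK$ in a local way (the paper does this via \Qref{lem:esslocisometry}). A minor check on your square root: the principal branch on the circle minus $\{-1\}$ does give $\gamma W_0^{1/2}\gamma=(W_0^{1/2})^*$ because $\sqrt{\bar\lambda}=\overline{\sqrt\lambda}$ off the branch cut; the paper's $\widetilde F$ is simply another chiral-symmetric square root serving the same purpose.
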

\begin{proof}
	We prove the existence of $F$ by explicitly constructing it on each of the subspaces in \eqref{eq:hh_decomp}
	The finite dimensional $\pm1$-eigenspaces of $W$ can then further be split into
	\begin{align}\label{eq:eigenspaces}
	&P_-\HH=\ker(\alpha+1)\oplus\ker(\delta+1)=\vcentcolon\ker A\oplus\ker D\\ &P_+\HH=\ker(\alpha-1)\oplus\ker(\delta-1)=\vcentcolon\ker C\oplus\ker B
	\end{align}
	where the direct sums refer to a splitting with respect to $(\Gamma_+\oplus\Gamma_-)\HH$ and we wrote $\ker(\alpha\pm1)$ instead of $\ker(\alpha\pm\Gamma_+)$ and $\ker(\delta\pm1)$ instead of $\ker(\delta\pm\Gamma_-)$ in order to streamline notation. The second equalities identify the obtained spaces with the kernels of the matrix blocks of an $F$, which is to be constructed. They follow from arguments similar to those in the proof of \Qref{lem:sixFeu}. Since we did not fix a second timeframe, which would fix the co-kernels, i.e.\ the kernels of $A^*,B^*,C^*$ and $D^*$, we are free to choose these finite dimensional subspaces. We do this by choosing some finite dimensional unitaries, such that
	\begin{align}
	&C^-\ker A=\vcentcolon\ker D^* && B^-\ker D=\vcentcolon\ker A^*\label{eq:kerAkerD}\\
	&A^+\ker C=\vcentcolon\ker B^* && D^+\ker B=\vcentcolon\ker C^*\label{eq:kerCkerB},
	\end{align}
	which is certainly possible due to the equality of the respective dimensions according to the proof of \Qref{lem:sixFeu} and the fact, that on both side, the kernels are pairwise orthogonal.	Denoting by $\KK'$ the complement of the co-kernels, we define a unitary $V$ via $V\KK=\KK'$, where we choose $V$ in a way, such that $V\left({\Gamma_\pm|}_{\KK}\right) V^*={\Gamma_\pm|}_{\KK'}$ which guarantees
	\begin{align}
	V\gamma_\KK V^*=\gamma_{\KK'}.
	\end{align}
	For the choice above note, that $\Gamma_\pm|_\KK$ and $\Gamma_\pm|_{\KK'}$ are well defined, since $P_\pm$ and $\Gamma_\pm$ commute pairwise.
	From the proof of \Qref{lem:sixFeu} we also know, that $F$ has to map the kernels of $A,B,C,D$ isometrically onto the corresponding co-kernels. Hence, we can proceed by constructing separate unitaries $F=F_-\oplus F_+\oplus F_\KK$ with
	\begin{align}
	F_-&\colon\ker(A)\oplus\ker(D)\to\ker(A^*)\oplus\ker(D^*)\\
	F_+&\colon\ker(C)\oplus\ker(B)\to\ker(B^*)\oplus\ker(C^*)\\
	F_\KK&\colon\KK\to\KK'.
	\end{align}
	
	For $F_\pm$ we can make use of the choices we already made: Setting
	\begin{align}\label{eq:Fminus}
	F_-=\begin{pmatrix}
	0&B^-\\C^-&0
	\end{pmatrix}\qquad\text{we get}\qquad
	\gamma F_-^*\gamma' F_-=-P_-
	\end{align}
	as needed. Note, that one has to keep track on which subspace $\gamma$ is given during the product, which we indicated by distinguishing between $\gamma=\operatorname{diag}(\idty_{\ker A},-\idty_{\ker D})$ and $\gamma'=\operatorname{diag}(\idty_{\ker A^*},-\idty_{\ker D^*})$, which are different for non-vanishing $\six_-(W)=\DK A-\DK A^*$.
	
	Similarly, with
	\begin{align}\label{eq:Fplus}
	F_+=\begin{pmatrix}
	A^+&0\\0&D^+
	\end{pmatrix}\qquad\text{we get}\qquad
	\gamma F_+^*\gamma'F_+=P_+
	\end{align}
	where in this case $\gamma$ and $\gamma'$ have the same form in both subspaces. Note, that we are completely free in the choice of the finite dimensional unitaries $A^+,D^+,B^-$ and $C^-$.
	
	This leaves us with the task of finding an appropriate $F_\KK$, which we simply denote by $F$ in the following construction. Instead of directly considering $F\colon\KK\to\KK'$, define $\widetilde F\colon\KK\to\KK$ via
	\begin{align}\label{eq:Ftwiddle}
	\widetilde F=\frac{1}{\sqrt{2}}\begin{pmatrix}
	\sqrt{1+\alpha}&V_\beta\sqrt{1-\delta}\\
	-V_\beta^*\sqrt{1-\alpha}&\sqrt{1+\delta}.
	\end{pmatrix},
	\end{align}
	which is well defined, because on the complement of $\ker(\alpha\pm1)$ and $\ker(\delta\pm1)$, $1\pm\alpha$ and $1\pm\delta$ are positive and $\beta$ is invertible, providing a unique polar isometry $V_\beta$. The unitarity of $F$ then follows from the unitarity conditions of $W$.
	Using $\sqrt{\Gamma_+-\alpha}V_\beta=V_\beta\sqrt{\Gamma_--\delta}$ and $V_\beta\sqrt{\Gamma_--\delta^2}=\sqrt{\Gamma_+-\alpha^2}V_\beta=\beta$, which also follow from the unitarity of $W$, we find, that $\widetilde F$ is a square root for $W$ on $\KK$, i.e. $\widetilde F^2=W_\KK$. Moreover, abbreviating $\gamma_\KK=\gamma$ and $\gamma_{\KK'}=\gamma'$, we get $\gamma\widetilde F\gamma=\widetilde F^*$, i.e. $\widetilde F$ itself is chiral symmetric, which implies $\gamma \widetilde F^*\gamma\widetilde F=W_\KK$.
	Setting $F=V\widetilde F$, with $V$ from above, indeed gives
	\begin{align}
	\gamma F^*\gamma'F&=\gamma\widetilde F^* (V^*\gamma'V)\widetilde F=\gamma \widetilde F^*\gamma \widetilde F=W_\KK.
	\end{align}	
	Combining the three operators, we get $F=F_-\oplus F_+\oplus F_\KK\colon\HH\to\HH$, with $W=\gamma F^* \gamma F$.
\end{proof}
The block structure of  $F=F_-\oplus F_+\oplus F_\KK$ with respect to the eigenspaces of $W$ is not just a special example, which serves the proof of the proposition, but is actually valid for every given $F$. This \textbf{standard form} will be established in general later in \Qref{sec:completeness_proof}.
Note, that the existence of an $F$ for every $W$ hinges on the infinite dimension of the Hilbert space. For finite systems an additional condition, namely $\six_-(W)=0$ is necessary for the existence of $F$.

Having established the existence of a half-step operator $F$ for any chiral symmetric unitary $W$, we now focus on the possible second time frames. By \Qref{lem:Forbit} we know, that $W$ fixes $F$ only up to multiplication with a $\gamma$-commuting unitary from the left. However, this typically changes the second timeframe which raises the question whether a given second chiral symmetric unitary $W'$ lies in the orbit of $UF_0\gamma F_0^*\gamma U^*$. In other words: Which conditions do $W$ and $W'$ need to fulfil, in order to be considered as time frames of each other?

\begin{thm}\label{thm:twotimeframes}
	Let $W$ and $W'$ be two chiral symmetric essentially gapped unitaries for the same chiral symmetry $\ch$. Then the following are equivalent:
	\begin{itemize}
		\item[(1)] There is a half-step operator $F$, such that
		\begin{equation}\label{timeframeequations}
		W=\gamma F^*\gamma F\qquad\text{and}\qquad W'=F\gamma F^*\gamma.
		\end{equation}
		\item[(2)] $W$ and $W'$ fulfil
		\begin{equation}\label{eq:indexcondition}
		\six_\pm(W)=\pm\six_\pm(W')
		\end{equation}
		and there exists a unitary $U$ with
		\begin{equation}\label{eq:specialU}
		W'=UWU^*\qquad\text{and}\qquad U\gamma_\KK U^*=\gamma_{\KK'},
		\end{equation}
		where $\gamma_\KK$ and $\gamma_{\KK'}$ denote the chiral symmetry restricted to $\KK$ and $\KK'$ in \eqref{eq:hh_decomp}. I.e., in chiral eigenbasis, $U$ is a block diagonal mapping between the parts of the Hilbert space, where $W$ and $W'$ are gapped.
	\end{itemize}
\end{thm}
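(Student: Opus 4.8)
The plan is to prove the two implications separately, with $(1)\Rightarrow(2)$ the easy direction and $(2)\Rightarrow(1)$ carrying the weight, both resting on one rigidity fact about chiral unitaries on the gapped part.

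For $(1)\Rightarrow(2)$, suppose $F$ satisfies \eqref{timeframeequations}. The index relation \eqref{eq:indexcondition} is exactly \Qref{lem:sixFu}, which gives $\six_+(W)=\fred C=\six_+(W')$ and $\six_-(W)=\fred A=-\six_-(W')$. For the unitary $U$, I first note that $W'=FWF^*$, since $FWF^*=F\,\gamma F^*\gamma F\,F^*=F\gamma F^*\gamma=W'$. Thus $F$ already intertwines $W$ and $W'$, carries the finite-dimensional eigenspaces $P_\pm\HH$ onto the eigenspaces of $W'$, and carries $\KK$ onto $\KK'$; the only defect of the choice $U=F$ is that $F|_\KK$ need not conjugate $\gamma_\KK$ into $\gamma_{\KK'}$. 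So it suffices to replace $F|_\KK$ by a unitary $V\colon\KK\to\KK'$ that both intertwines $W|_\KK$ with $W'|_{\KK'}$ and satisfies $V\gamma_\KK V^*=\gamma_{\KK'}$; then $U:=U_-\oplus U_+\oplus V$ works, where $U_\pm\colon P_\pm\HH\to P_\pm(W')\HH$ are arbitrary unitaries (these spaces are unitarily equivalent since $W\cong W'$), because $W,W'$ act as $\pm\idty$ on their eigenspaces and $V$ matches them on $\KK$, and $U\gamma_\KK U^*=V\gamma_\KK V^*=\gamma_{\KK'}$.

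The ingredient needed here, and again later, is the rigidity statement: \emph{two chiral symmetric unitaries with no spectrum at $\pm1$ that are unitarily equivalent are chirally unitarily equivalent}. To see it, write $W|_\KK=e^{iH}$ with $H$ the principal logarithm, a bounded self-adjoint operator whose spectrum lies in a compact subset of $(-\pi,\pi)\setminus\{0\}$; chiral symmetry $\gamma_\KK W|_\KK\gamma_\KK=(W|_\KK)^*$ forces $\gamma_\KK H\gamma_\KK=-H$. Hence $\gamma_\KK$ interchanges $\mathcal E^+:=\mathbbm 1_{(0,\pi)}(H)\KK$ with $\mathcal E^-:=\mathbbm 1_{(-\pi,0)}(H)\KK$, and $(\xi,\eta)\mapsto\xi+\gamma_\KK\eta$ is a unitary $\mathcal E^+\oplus\mathcal E^+\to\KK$ carrying $W|_\KK$ to $u^+\oplus(u^+)^*$ with $u^+:=e^{iH}|_{\mathcal E^+}$, and $\gamma_\KK$ to the flip $(\xi,\eta)\mapsto(\eta,\xi)$. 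Since any unitary equivalence $W|_\KK\to W'|_{\KK'}$ intertwines the (uniquely determined) principal logarithms, it carries $\mathcal E^+$ onto the corresponding subspace for $W'$ and so implements $u^+\cong v^+$; doubling this as $S\oplus S$ gives a unitary $\mathcal E^+\oplus\mathcal E^+\to\mathcal F^+\oplus\mathcal F^+$ that intertwines $u^+\oplus(u^+)^*$ with $v^+\oplus(v^+)^*$ and the flip with the flip, which transports back to the required $V$.

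For $(2)\Rightarrow(1)$, use \Qref{prop:Fexists} to fix a half-step operator $F_0=F_-\oplus F_+\oplus F_\KK$ for $W$ in the standard form produced there, with $F_\KK=V_0\widetilde F$, $\widetilde F$ a chiral square root of $W|_\KK$ and $V_0\gamma_\KK V_0^*=\gamma_{\KK_0'}$. Let $W_0':=F_0\gamma F_0^*\gamma=F_0WF_0^*$. Since for any $\gamma$-commuting unitary $\Omega$ the operator $F:=\Omega F_0$ satisfies $\gamma F^*\gamma F=\gamma F_0^*\gamma F_0=W$ and $F\gamma F^*\gamma=\Omega W_0'\Omega^*$, it suffices to produce such an $\Omega$ with $\Omega W_0'\Omega^*=W'$. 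Compare invariants: by $(1)\Rightarrow(2)$ applied to $F_0$, $\six_\pm(W_0')=\pm\six_\pm(W)=\six_\pm(W')$ by hypothesis \eqref{eq:indexcondition}, while $W_0'=F_0WF_0^*\cong W\cong W'$ via \eqref{eq:specialU} gives $\dim P_\pm(W_0')\HH=\dim P_\pm(W')\HH$; since $\six_\pm$ is the difference of the dimensions of the two chiral components of the $\pm1$-eigenspace (proof of \Qref{lem:sixFu}), these four chiral-component dimensions agree for $W_0'$ and $W'$. On the gapped part, $W_0'|_{\KK_0'}=F_\KK(W|_\KK)F_\KK^*=V_0(W|_\KK)V_0^*$ (using $\widetilde F^2=W|_\KK$ and unitarity of $\widetilde F$), so $W_0'|_{\KK_0'}$ is chirally unitarily equivalent to $W|_\KK$, which by \eqref{eq:specialU} is chirally unitarily equivalent to $W'|_{\KK'}$; composing the intertwiners yields $\Omega_\KK\colon\KK_0'\to\KK'$ with $\Omega_\KK(W_0'|_{\KK_0'})\Omega_\KK^*=W'|_{\KK'}$ and $\Omega_\KK\gamma_{\KK_0'}\Omega_\KK^*=\gamma_{\KK'}$. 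Finally take $\Omega$ to be the direct sum of $\Omega_\KK$ and unitaries $P_\pm(W_0')\HH\to P_\pm(W')\HH$ chosen block-diagonal with respect to the chiral grading (possible by the dimension count): $\Omega$ preserves the chiral grading, so $\gamma\Omega=\Omega\gamma$, and $\Omega W_0'\Omega^*=W'$ since both act as $\pm\idty$ on the $\pm1$-eigenspaces and $\Omega_\KK$ intertwines on $\KK_0'$.

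The genuine difficulty is the rigidity statement of the third paragraph — this is where the special structure of chiral symmetric unitaries on the gapped part is really used; the rest is bookkeeping (recovering the four chiral-eigenspace dimensions from $\six_\pm$ and $\dim P_\pm$, and checking that $F_0$ from \Qref{prop:Fexists} genuinely has the block form $F_-\oplus F_+\oplus V_0\widetilde F$). One small point to keep in mind is that in $(2)$ the relation $U\gamma_\KK U^*=\gamma_{\KK'}$ already forces $U\KK=\KK'$, but this is automatic because $\KK,\KK'$ are intrinsically the parts of $\HH$ on which $W$ and $W'=UWU^*$ are gapped.
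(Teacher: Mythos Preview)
Your proof is correct and takes a genuinely different route from the paper's.

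For $(1)\Rightarrow(2)$, the paper stays concrete: it reuses the chiral square root $\widetilde F$ built in \Qref{prop:Fexists} and checks directly that $F\widetilde F^*$ restricted to $\KK$ commutes with $\gamma$ and intertwines $W$ with $W'$, so $U=F(P_++P_-+\widetilde F^*)$ does the job. You instead abstract this into a clean standalone statement (the ``rigidity lemma''): any two chiral unitaries without spectrum at $\pm1$ that are unitarily equivalent are automatically \emph{chirally} unitarily equivalent. Your proof of that via the principal logarithm and the doubling model $u^+\oplus(u^+)^*$ on $\mathcal E^+\oplus\mathcal E^+$ is essentially an intrinsic repackaging of the paper's $\widetilde F$ construction; it has the advantage of being reusable and of making transparent why the gap at $\pm1$ is what makes the chiral structure rigid.

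For $(2)\Rightarrow(1)$, the paper reruns the construction of \Qref{prop:Fexists} with the cokernels now prescribed by $W'$, and spends most of the argument verifying the dimension identities \eqref{eq:dimkers} via \eqref{eq:indexcondition} and the unitary equivalence. You instead fix an arbitrary half-step $F_0$ for $W$, note (via \Qref{lem:Forbit}) that the freedom in the second timeframe is exactly conjugation by a $\gamma$-commuting unitary $\Omega$, and then produce $\Omega$ by matching chiral data: on the gapped part you compose the two chiral intertwiners $V_0$ and $U|_\KK$ (no rigidity needed here, contrary to what your summary paragraph suggests), and on the eigenspaces you use the dimension count coming from $\six_\pm$ and $\dim P_\pm$. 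This is more structural and avoids redoing the explicit $F_\pm$, $F_\KK$ construction; the price is that you rely on \Qref{prop:Fexists} as a black box rather than building $F$ directly. Both approaches ultimately hinge on the same two facts --- the chiral square root on $\KK$ and the eigenspace dimension bookkeeping --- but yours separates them more cleanly.
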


\begin{proof}
	(2)$\Rightarrow$(1): This proof direction will be quite similar to the proof of \Qref{prop:Fexists}, with the difference, that the second time frame fixes the kernels of $A^*,B^*,C^*$ and $D^*$. The identifications are now given by
	\begin{align*}
	&\ker A\phantom{*}=\ker(\alpha+1), &&\ker B\phantom{*}=\ker(\delta-1), &&&\ker C\phantom{*}=\ker(\alpha-1), &&&&\ker D\phantom{*}=\ker(\delta+1),\\
	&\ker A^*=\ker(\alpha'+1), &&\ker B^*=\ker(\alpha'-1), &&&\ker C^*=\ker(\delta'-1), &&&&\ker D^*=\ker(\delta'+1).
	\end{align*}
	Thereby, the necessary dimension-equalities in \eqref{eq:dimkers} are guaranteed by the unitary equivalence and the index condition \eqref{eq:indexcondition}. Indeed, the unitary equivalence of $W$ and $W'$ guarantees $\rank P_\pm=\rank P'_\pm$, i.e.
	\begin{align}\label{eq:ranks}
	\dk(\alpha\pm1)+\dk(\delta\pm1)=\dk(\alpha'\pm1)+\dk(\delta'\pm1).
	\end{align}
	Moreover, using \eqref{eq:eigenspaces}, we can express the symmetry indices of $W$ as $\six_\pm(W)=\dk(\alpha\mp 1)-\dk(\delta\mp 1)$	and similarly for $W'$. The index condition \eqref{eq:indexcondition} then reads
	\begin{align}\label{eq:dimkerindexconditions}
	\dk(\alpha\mp 1)-\dk(\delta\mp 1)=\pm\bigl(\dk(\alpha'\mp 1)-\dk(\delta'\mp 1)\bigr).
	\end{align}
	Combining \eqref{eq:dimkerindexconditions} with \eqref{eq:ranks}, we conclude
	\begin{align}\label{eq:Wdimkers}
	\dk(\alpha+1)&=\dk(\delta'+1) && \dk(\alpha-1)=\dk(\alpha'-1)\\
	\dk(\delta+1)&=\dk(\alpha'+1) && \dk(\delta-1)=\dk(\delta'-1),\nonumber
	\end{align}
	in accordance with \eqref{eq:dimkers}, together with the identifications displayed above.
	
	We can now use the construction of $F_\pm$ from above, noting, that \eqref{eq:Fminus} and \eqref{eq:Fplus} yield $-P_-'$ and $P_+'$, when we evaluate $F_\pm\gamma F_\pm^*\gamma'$ instead of $\gamma F_\pm^*\gamma' F_\pm$.
	For $F_\KK$ we again use the same ansatz via $\widetilde F$. By assumption, $U$, restricted to $\KK\to\KK'$, fulfils $U^*\gamma' U=\gamma$. Hence, setting $F=U\widetilde F$ gives $\gamma F^*\gamma'F=W_\KK$ as before. Moreover, since $\widetilde F$ is a chiral symmetric square root of $W_\KK$, i.e.\ $W_\KK=\gamma\widetilde F^*\gamma\widetilde F=\widetilde F\gamma\widetilde F^*\gamma=\widetilde F^2$, we also get
	\begin{equation}
		F\gamma F^*\gamma'=U \widetilde F\gamma\widetilde F^*(U^*\gamma' U)U^*=U\widetilde F\gamma\widetilde F^*\gamma U^*=UW_\KK U^*=W'_{\KK'}.
	\end{equation}

	(1)$\Rightarrow$(2): The index relation directly follows from \eqref{eq:six}, \eqref{eq:sixprime}	and \eqref{pmindices}. For the unitary equivalence, we need to show, that the unitary equivalence of $W$ and $W'$ via $F$ also guarantees the existence of a unitary $U$, with the extra condition in \eqref{eq:specialU}. Between the $\pm$-eigenspaces we can just take $U=F$. On their complement, however, we need to fulfil $U\gamma U^*=\gamma'$, which $F$ certainly doesn't.
	So let us restrict the considerations to $\KK$ and $\KK'$. In order to streamline the notation a bit, we drop the $\KK$ and $\KK'$ suffixes and indicate the space we are currently in by writing e.g.\ $\gamma'$ instead of $\gamma$.
	On $\KK$ we have $W=\gamma F^*\gamma'F=\gamma\widetilde F^*\gamma\widetilde F$, with $\widetilde F$ from the construction above. From this we get $F\widetilde F^*\gamma=\gamma'F\widetilde F^*$, which in turn implies
	\begin{align}
	   W'  &=F\gamma F^*\gamma'=(F\widetilde F^*)\widetilde F\gamma\widetilde F^*(F\widetilde F^*)^*\gamma'\\
	       &=(F\widetilde F^*)\widetilde F\gamma\widetilde F^*\gamma(F\widetilde F^*)^*\\
	       &=(F\widetilde F^*)W(F\widetilde F^*)^*,
	\end{align}
	where in the last step we again used that $\widetilde F$ is a chiral symmetric square root for $W_\KK$.
	Hence, combining $U=F(P_++P_-+\widetilde F^*)$, we get the unitary equivalence $W'=UWU^*$, with a unitary intertwining $\gamma_\KK$ and $\gamma_{\KK'}$, as needed.
\end{proof}
Note, that the condition on $U$ in \eqref{eq:specialU} might be an artefact of the proof strategy. In fact, it is always met in a timeframed setting and we could not find a counterexample of two unitarily equivalent walks, which are chiral symmetric for the same symmetry, where \eqref{eq:specialU} is violated. Hence, the extra condition might well be redundant.

\section{Introducing locality} \label{sec:locality}

So far we considered half-step operators $F$ and the corresponding $W$ and $W'$ in \eqref{eq:timeframes} without taking the spatial structure of the underlying Hilbert space into account. The above results are thus valid on arbitrary separable Hilbert spaces. However, our goal is to classify physical systems obeying a locality condition on Hilbert spaces of the form \eqref{eq:H}.
Our standing assumption will be that $F$, and consequently the timeframe unitaries $W,W'$ in \eqref{eq:timeframes} are \textbf{essentially local}. This means that for some (hence for all) $a\in\Ir$ the commutator $[F,\Pg a]$ with the half-space projection $\Pg a$ is a compact operator. We denote by $\AAloc\subset\BB(\HH)$ the set of operators satisfying this condition. It is easy to see that $\AAloc$ is a norm closed operator algebra, which makes some of the constructions below very easy. Let us collect some of the salient features.

\begin{prop}\label{prop:inx0}
Let $F\in\AAloc$ be unitary, and denote by $P=\Pg a$ some half-space projection. Then
\begin{itemize}
\item[(1)] $PFP$ is an essentially unitary operator on $P\HH$, and thus has a Fredholm index, which we denote by
\begin{equation}
  \fredP F:=\fred{PFP}=\dim\ker{PFP}-\dim\ker{PF^*P}\in\Ir.
\end{equation}
\item[(2)] $\fredP\cdot$ is norm continuous on the unitary group of $\AAloc$, and satisfies the product rule\\ $\fredP{F_1F_2}=\fredP{F_1}+\fredP{F_2}$.
\item[(3)] The following are equivalent:
  \begin{itemize}
  \item[(a)] $\fredP F=0$
  \item[(b)] There is a norm continuous unitary path $[0,1]\ni t\mapsto F(t)\in\AAloc$ such that $F(0)=\idty$ and $F(1)=F$.
  \item[(c)] $F$ results from a continuous driving, i.e., it is connected to the identity by a path satisfying the differential equation
             $\partial_t F(t)=i H(t)F(t)$, where $H(t)=H(t)^*\in\AAloc$, $\norm{H(t)}$ is bounded and $t\mapsto H(t)$ is measurable or, alternatively, piecewise constant.
  \end{itemize}
\end{itemize}
\end{prop}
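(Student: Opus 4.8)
The plan is to treat (1) and (2) by elementary compact-perturbation bookkeeping, and to reduce (3) to a single connectedness statement which I will import from the homotopy theory of essentially local unitaries. For (1), write $P=\Pg a$ and use that $[F,P]$, hence also $[F^*,P]$, is compact. A one-line computation gives $(PFP)(PF^*P)=P+PF[P,F^*]P$ and $(PF^*P)(PFP)=P+PF^*[P,F]P$, with both correction terms compact; thus on $P\HH$ the operator $PFP$ has parametrix $(PFP)^*=PF^*P$, is essentially unitary, in particular Fredholm, and $\fredP{F}:=\fred{PFP}=\dim\ker PFP-\dim\ker PF^*P\in\Ir$. Independence of $a$ follows since for $b>a$ the projection $\Pg b-\Pg a$ is finite rank, so $\Pg b F\Pg b$ and $\Pg a F\Pg a$ differ by a finite-rank block (of index $0$) and have equal Fredholm index. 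For (2), the same manipulation yields $P(F_1F_2)P=(PF_1P)(PF_2P)+PF_1[P,F_2]P$ with compact remainder, so $\fredP{F_1F_2}=\fred{(PF_1P)(PF_2P)}=\fredP{F_1}+\fredP{F_2}$ by the Fredholm product rule and invariance under compact perturbations; norm continuity of $\fredP{\cdot}$ on the unitary group of $\AAloc$ is then immediate because $F\mapsto PFP$ is contractive and the Fredholm index is locally constant.

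For (3) I would first dispose of the easy implications. (b)$\Rightarrow$(a): along a continuous unitary path $F(t)\in\AAloc$ the integer $\fredP{F(t)}$ is continuous, hence constant, and $\fredP{\idty}=0$. (c)$\Rightarrow$(b): for bounded measurable $H(t)\in\AAloc$ the propagator $F(t)$ exists, is unitary and norm continuous; it stays in $\AAloc$ because $K(t):=[F(t),P]$ solves $\partial_t K(t)=iH(t)K(t)+i[H(t),P]F(t)$ with $K(0)=0$ and compact inhomogeneity, so $K(t)=\int_0^t F(t)F(s)^*\,i[H(s),P]F(s)\,ds$ is a norm limit of compacts, hence compact. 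For the reverse I would also record (b)$\Rightarrow$(c) with piecewise constant $H$: refine a continuous unitary path into steps with $\norm{F(t_k)F(t_{k-1})^*-\idty}<2$, put $H_k:=-i\log\!\bigl(F(t_k)F(t_{k-1})^*\bigr)$ (a branch of $\log$ avoiding the spectrum), which is bounded self-adjoint and lies in $\AAloc$ since $\AAloc$ is a norm-closed subalgebra of $\BB(\HH)$ and hence closed under holomorphic functional calculus, and concatenate the constant-Hamiltonian pieces.

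The only substantial implication is (a)$\Rightarrow$(b): $\fredP{F}=0$ should force $F$ into the identity component of the unitary group of $\AAloc$. I would argue in two steps. First, block-diagonalise $F$ along the cut. The operator $Q:=F^*PF$ is a projection with $Q-P=F^*[P,F]$ compact, and since $FQ=PF$ one finds $PFP=(F|_{Q\HH})\,(Q|_{P\HH\to Q\HH})$ with $F|_{Q\HH}\colon Q\HH\to P\HH$ unitary; thus $\fredP{F}$ equals the relative index of the pair of projections $(P,Q)$. When it vanishes there is a unitary $Z\in\idty+\mathcal{K}$ with $Z^*PZ=Q$ (the standard fact that projections with compact difference and vanishing relative index are conjugate by $\idty+\text{compact}$), so $G:=FZ^*$ commutes with $P$, lies in $\AAloc$, and $F=GZ$. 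Since $Z$ is joined to $\idty$ within $\idty+\mathcal{K}\subset\AAloc$ (e.g. via $t\mapsto e^{itA}$ with $A=-i\log Z$ compact self-adjoint), we get $F\simeq G$ in $\AAloc$. Now $G=G_-\oplus G_+$ splits along $\HH_{<a}\oplus\HH_{\ge a}$, and by the homotopy invariance of $\fredP{\cdot}$ (and the analogous identities for $\Pg b,\Pl b$) all the Toeplitz and co-Toeplitz compressions of $G_+$ and $G_-$ have vanishing Fredholm index.

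Second, I would invoke the structure theory of essentially local unitaries on a half-chain: such a unitary whose Toeplitz compressions all have index $0$ is homotopic, within essentially local unitaries, to the identity. This is the Gohberg--Krein/Toeplitz-index content underlying the completeness results of \cite{ShortVersion,LongVersion}, used here in the symmetry-free case. Applying it to $G_+$ and to $G_-$ gives $G\simeq\idty$, hence $F\simeq\idty$, which establishes (b) and, through (b)$\Rightarrow$(c), also (c). I expect the main obstacle to be precisely this last input --- that $\fredP{\cdot}$ is a \emph{complete} homotopy invariant for essentially local unitaries on a half-chain, so that no obstruction beyond the index survives; everything else reduces to manipulations with compact perturbations, local constancy of the Fredholm index, and elementary ODE and functional-calculus arguments.
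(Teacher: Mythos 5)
Items (1) and (2), and the implications (b)$\Rightarrow$(a) and (c)$\Leftrightarrow$(b), are essentially identical to the paper's argument; your treatment of (c)$\Rightarrow$(b), showing that the propagator stays in $\AAloc$ by writing $[F(t),P]$ as an integral with compact integrand, is if anything more explicit than the paper's appeal to the Picard--Lindel\"of iteration.

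For the hard direction (a)$\Rightarrow$(b), both you and the paper use the same two-stage strategy: first decouple $F$ to a $P$-commuting unitary $G=G_-\oplus G_+$, then connect each of $G_\pm$ to the identity on its half-chain. Your stage one is a genuine alternative to the paper's: instead of citing a decoupling theorem, you observe that $Q=F^*PF$ is a projection with $Q-P$ compact, that $\fredP F$ is the relative index of $(P,Q)$, and that vanishing relative index gives a unitary $Z\in\idty+\text{compacts}$ with $Z^*PZ=Q$, so $G:=FZ^*$ commutes with $P$ and $Z$ is joined to $\idty$ within $\idty+\text{compacts}$. This is clean and self-contained. However, your stage two is where the gap lies. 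You ``invoke the structure theory of essentially local unitaries on a half-chain,'' asserting that such a unitary with trivial Toeplitz indices is connected to $\idty$. Two remarks. First, the index condition you flag is vacuous: once $G_+$ is exactly unitary on $\Pg a\HH$, cutting additivity of the Fredholm index forces $\fred{\Pg b G_+\Pg b}=0$ for all $b>a$ automatically, so your hypothesis carries no information and the claim reduces to ``every essentially local half-chain unitary is in the identity component of $U(\AAloc)$.'' Second, that claim is not a standard Gohberg--Krein fact one can simply cite: in this paper the relevant completeness statement for essentially local unitaries is \Qref{thm:oldindex-complete}, whose proof \emph{uses} item (3) of the present proposition, so appealing to it here would be circular. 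The paper instead closes stage two directly by the measurable functional calculus---keeping the spectral resolution of $G_\pm$ fixed and continuously contracting the eigenvalue function to $1$---which automatically preserves essential locality and requires no prior structure theorem. You correctly identify this last input as ``the main obstacle''; to make the proof complete you would need to substitute an argument like the paper's functional-calculus contraction (or an independent proof that the half-chain unitary group of $\AAloc$ is connected) in place of the citation.
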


The index $\fredP{\cdot}$ was first introduced in \cite{Kitaev} for banded unitaries as a kind of net information flow across the point $0$. In that case and, more generally, when $[F,P]$ is even a Hilbert-Schmidt operator, it can be expressed by a simple formula, which, however, obscures somewhat that it evaluates to an integer. When $P_x$ is the projection onto the cell $\HH_x$, it is
\begin{equation}\label{fredFK}
  \fredP F=\sum_{x,y:\ x<0\leq y} \tr(P_xF^*P_yF)-\tr(P_yF^*P_xF).
\end{equation}
In any case, this is well-defined and finite for band matrices. Similarly, if a Hamiltonian $H$ is a banded matrix, with a maximal length $L$ such that $P_xHP_y=0$ for$\abs{x-y}>L$, it is also essentially local, and thus produces unitary evolutions $F=\exp(itH)\in\AAloc$ with vanishing index. Since the exponential function is not a finite degree polynomial, $F$ will not be a banded matrix itself. This is why essential locality is a more manageable condition than bandedness, especially for the construction of suitable continuous paths. In this spirit, the key statement in (3) is that vanishing index indeed completely captures the idea of the existence of an essentially local driving.

\begin{proof}[Sketch of proof of \Qref{prop:inx0}]
The main ideas have been explained elsewhere in increasing generality in \cite{old_index_paper,LongVersion}, so we mostly just sketch the new features.

(1),(2): $PF^*P$ is an essential inverse for $PFP$, because $P-PF^*PPFP=PF^*[F,P]P$ is compact. Continuity and the product formula thus follow from general properties of the Fredholm index.

(3c)$\Leftrightarrow$(3b): The Picard-Lindel\"of iteration for the differential equation lives entirely in the set of norm continuous functions $[0,1]\to\AAloc$. The necessary Lipshitz condition follows from the boundedness of $\norm{H(t)}$, so we can conclude the existence of a continuous solution $F_t$. Conversely, suppose that there is a continuous path. We can then find intermediate points $0=t_0<t_1<\cdots<t_r=1$ so that $\norm{F(t_{k+1})-F(t_k)}<2$. We claim that on each such interval we can find a constant effective Hamiltonian $H_k$ so that $F(t_{k+1})=\exp\bigl(i(t_{k+1}-t_k)H_k \bigr)F(t_k)$. By multiplying from the right with $F(t_k)^*$ this reduces to the general statement that in any C*-algebra $\AA$, a unitary $U\in\AA$ with $\norm{U-\idty}<2$ can be written as $\exp(iH)$ with $H\in\AA$ and $\norm H<\pi$.
This is, however, just an application of the functional calculus. When $\norm{U-\idty}<2$, $-1$ is not in the spectrum of $U$, so on the spectrum of $U$ the logarithm function with a branch cut on the negative real axis is continuous, so we can set $H=-i\log U\in\AA$.

(3b)$\Leftrightarrow$(3a):  The trivial direction follows by continuity and $\fredP\idty=0$. For the converse, let $\fredP F=0$. We then construct a path in two stages: First we connect $F$ to a unitary $F'$, for which exactly commutes with $P$, i.e., $F'=F_L\oplus F_R$ with unitaries $F_R$ on $P\HH$ and $F_L$ on $(\idty-P)\HH$. We then separately connect these unitaries to the respective identities. This automatically preserves (exact) essential locality, and is possible by the measurable functional calculus, i.e., by keeping the spectral family of the operator fixed and merely deforming all eigenvalues continuously to $1$. No spectral gap or branch cut is needed here, and a possible eigenvalue at $-1$ can be deformed to $+1$ along wither the top or the bottom half circle. The norm continuity of the path only requires that the deformation of each spectral value is continuous. It remains to construct a continuous decoupling of $F$ to $F'$. This is discussed at great length in \cite[Theorem VII.4]{LongVersion} and will not be reproduced here.
\end{proof}

Hence, by \Qref{prop:inx0} $(3)$, the index $\fredP F$ distinguishes the settings $(H)$ and $(F)$ in \Qref{sec:walks_protocols}. It has sometimes been argued that the existence of a continuous driving is a general requirement of physics: How else is such a process to be realized? The answer is that often the physical implementation uses additional degrees of freedom, i.e., the implementation uses a larger Hilbert space. For example, it is impossible to implement a shift (lattice translation by one step) by local operators. Invariably, any norm continuous connection of a shift with the identity has to violate essential locality. But with a second copy of the system this can be realized by local swap operations between the two copies, which can be contracted to the identity. This is exactly the effect measured by the above index \cite{old_index_paper}. 

An important consequence of essential locality of the half-step operator $F$ is the essential locality of its chiral blocks $A,B,C,D$: It implies that their right Fredholm indices are well-defined. Moreover, since the half-chain projection $PFP$ of an essentially local $F$ is essentially unitary on the half-chain the above results which are valid in the merely essentially unitary setting yield statements about systems confined to a half-chain after replacing the Fredholm indices by right Fredholm indices.

In this way, \Qref{lem:sixFeu} on the one hand expresses the symmetry indices $\six(W)$ and $\six(W')$ of the walks in \eqref{eq:timeframes} in terms of the Fredholm indices $\fred A,\fred B,\fred C$ and $\fred D$. On the other, by applying it to the essentially unitary half-chain walks $PWP$ and $PW'P$ it also determines the right symmetry indices $\sixR(W)$ and $\sixR(W')$ in terms of the right Fredholm indices $\fredP A,\fredP B,\fredP C$ and $\fredP D$, i.e.
\begin{align}
\sixR(W) &=\fredP{A}-\fredP{B}=\fredP{C}-\fredP{D}\label{eq:sixR}\\
\sixR(W') &=\fredP{C}-\fredP{A}=\fredP{D}-\fredP{B}\label{eq:sixRprime}.
\end{align}

Similarly, \Qref{lem:iF} expresses the right Fredholm index of $F$ in terms of $\fredP A,\fredP B,\fredP C$ and $\fredP D$. This result is closely related to \cite{Asbo3}, in which \cite[Eq.(14)]{Asbo3} is just \eqref{eq:Findex} applied to half-chain walks with the additional assumption $\fredP{F}=0$, and the invariants $\nu_0$ and $\nu_\pi$ defined in \cite[Eq.(17)]{Asbo3} correspond to $-\fredP{B}$ and $-\fredP{D}$. Unfortunately, the assumption $\fredP{F}=0$ does not hold for typical examples of halfstep operators of quantum walks (e.g. for the split-step walk, see \eqref{eq:splitstep-F}), so that a direct application of the resulting index formulas in \cite{obuse2015unveiling} is, at first sight, problematic. However, it can be justified in our framework for settings in which two bulks are joined (see \Qref{sec:time_framed_walks}).

In contrast to \Qref{lem:iF}, \Qref{lem:sixFu} requires exact unitarity of $F$, $W$ and $W'$, and has therefore nothing to say about the corresponding half-chain operators. This is in accordance with the observation that the symmetry indices like $\sixR_\pm$ are neither invariant under compact nor under gentle perturbations \cite{LongVersion}. We collect the connections between the symmetry indices of the quantum walks $W$ and $W'$ and the Fredholm indices $\fred\cdot$ and $\fredP\cdot$ of the matrix blocks of $F$ in \Qref{sec:results}.

Let us finally address the locality of the half-step operator $F$ constructed from a given chiral symmetric $W$ in the proof of \Qref{prop:Fexists}.

\begin{schol}\label{schol:esslocF}
	Let $W$ be a chiral symmetric walk. Then the half-step operator constructed in the proof of \Qref{prop:Fexists} can be chosen essentially local.
\end{schol}

To show this, we only need to consider $F_\KK$, since $F_+$ and $F_-$ have finite rank and can therefore be neglected. First note, that the $\widetilde F$ in \eqref{eq:Ftwiddle} is given entirely in terms continuous functions of matrix blocks of $W$ or their polar isometries. Therefore it is essentially local by the following lemma:

\begin{lem}\label{lem:esslocpolarisom}
	Let $X\in\AAloc$ be a Fredholm operator. Then its polar isometry $U_X$, as well as its absolute value $|X|=\sqrt{X^*X}$ are essentially local.
\end{lem}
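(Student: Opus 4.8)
The plan is to upgrade the known fact that $\AAloc$ is a norm-closed operator algebra to the stronger statement that it is closed under the continuous functional calculus of its self-adjoint elements and under inversion of its invertible elements; the lemma then drops out by writing $|X|$ and $U_X$ as such functions of $X^*X$, with a finite-rank correction to deal with the kernel. First I would record the elementary structure of $\AAloc$. It is a unital $*$-subalgebra of $\BB(\HH)$: it contains $\idty$, it is closed under products since $[XY,\Pg a]=X[Y,\Pg a]+[X,\Pg a]Y$ is compact, and under adjoints since $[X^*,\Pg a]=-[X,\Pg a]^*$ is compact; it also contains all compact operators. Crucially it is \emph{inverse-closed}: if $X\in\AAloc$ is invertible in $\BB(\HH)$, then $[X^{-1},\Pg a]=-X^{-1}[X,\Pg a]X^{-1}$ is compact, so $X^{-1}\in\AAloc$. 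Combined with norm-closedness this gives the key closure property: for $T=T^*\in\AAloc$ and any continuous real function $f$ on the compact set $\sigma(T)\subset\Rl$, one has $f(T)\in\AAloc$ — choose, by Stone--Weierstrass, polynomials $p_n\to f$ uniformly on $\sigma(T)$; then $p_n(T)\in\AAloc$ and $\norm{p_n(T)-f(T)}=\norm{p_n-f}_{\infty,\sigma(T)}\to 0$.

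For the absolute value this is then immediate and does not even use the Fredholm hypothesis: $X^*X\in\AAloc$ is self-adjoint with $\sigma(X^*X)\subset[0,\norm X^2]$, and $t\mapsto\sqrt t$ is continuous there, hence $|X|=\sqrt{X^*X}\in\AAloc$. For the polar isometry the Fredholm hypothesis enters to control the kernel. Let $P_0$ be the orthogonal projection onto $\ker X=\ker(X^*X)$, which is finite-dimensional; thus $P_0$ is compact and lies in $\AAloc$. Since $X$ is Fredholm, $0$ is not in the essential spectrum of $X^*X$, so it is at most an isolated eigenvalue, and $T:=X^*X+P_0\in\AAloc$ is strictly positive, hence boundedly invertible; by the closure property $T^{-1/2}\in\AAloc$. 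Now $X^*X$ commutes with $P_0$ (because $\braket{X^*X\psi}{\phi}=\braket{X\psi}{X\phi}=0$ for $\phi\in\ker X$), so $T$ leaves the splitting $\ker X\oplus(\ker X)^\perp$ invariant, acting as $\idty$ on $\ker X$ and as $X^*X$ on $(\ker X)^\perp$; consequently $T^{-1/2}$ acts as $\idty$ on $\ker X$ and as $|X|^{-1}$ on $(\ker X)^\perp$ (where $|X|$ is boundedly invertible, as $X$ has closed range). Therefore $XT^{-1/2}$ annihilates $\ker X$ and equals $X|X|^{-1}$ on $(\ker X)^\perp$, i.e. $XT^{-1/2}=U_X$, so $U_X\in\AAloc$.

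The only step with any real content is the inverse-closedness of $\AAloc$, a one-line commutator identity; spectral invariance and the two applications follow mechanically, the remaining work being bookkeeping with the polar decomposition. As an alternative to the Stone--Weierstrass argument one could invoke the Riesz--Dunford calculus $f(T)=\tfrac1{2\pi i}\oint f(z)(z-T)^{-1}\,dz$, which lies in $\AAloc$ since the resolvents do and $\AAloc$ is norm-closed; this also handles resolvent-type expressions such as $|X|^{-1}$ directly, but it requires $\sigma(T)$ to be separated from the branch locus of $f$, which is precisely what the finite-rank correction $T=X^*X+P_0$ arranges.
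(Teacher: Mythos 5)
Your proof is correct, but it takes a genuinely different route to the polar isometry than the paper does. For $|X|$ both arguments come down to the same fact, namely that $\AAloc$ is a norm-closed operator algebra and therefore closed under Stone--Weierstrass approximation of $\sqrt{\,\cdot\,}$ on $\sigma(X^*X)$; you frame this as a special case of the more general and reusable observation that $\AAloc$ is inverse-closed and hence spectrally invariant, which the paper does not make explicit. For $U_X$ the two arguments diverge. The paper works with the Leibniz identity $[X,P]=U_X[|X|,P]+[U_X,P]\,|X|$ to conclude that $[U_X,P]|X|$ is compact, and then exploits Fredholmness of $|X|$, multiplying by an essential inverse, to strip off the factor $|X|$ and deduce that $[U_X,P]$ itself is compact. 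You instead exhibit the polar isometry explicitly as $U_X=X\,(X^*X+P_0)^{-1/2}$, with $P_0$ the finite-rank projection onto $\ker X$: after checking that $T=X^*X+P_0$ is a strictly positive element of $\AAloc$ (using the closed range of the Fredholm operator $X$), spectral invariance puts $T^{-1/2}$ in $\AAloc$, and the verification that $XT^{-1/2}$ equals $U_X$ on $\ker X$ and on its complement closes the argument. Your version is slightly longer but more constructive, writing $U_X$ as an explicit product of elements of $\AAloc$ and yielding the closure of $\AAloc$ under continuous functional calculus as a by-product; the paper's version is shorter, a Leibniz-plus-Fredholm one-liner. Both use the Fredholm hypothesis in the same essential role, to keep $0$ out of the essential spectrum of $X^*X$.
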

\begin{proof}
	On the one hand $\AAloc$ is norm-closed, so the essential locality of $|X|$ follows via the Weierstraß-theorem.
 	On the other, let $P$ be a half-chain projection on $\HH$. Then, by $[X,P]=[U_X|X|,P]=U_X[|X|,P]+[U_X,P]|X|$, $[U_X,P]|X|$ is the difference of compact operators. Moreover, since $X$ is Fredholm, the only way for $[U_X,P]|X|$ to be compact is that $[U_X,P]$ is compact, i.e $U_X$ is essentially local.
\end{proof}

It remains to show, that the unitary $V\colon\KK\to\KK'$ in the proof of \Qref{prop:Fexists} can be chosen essentially local, which boils down to the following lemma:

\begin{lem}\label{lem:esslocisometry}
	Let $\HH$ be a Hilbert space with a one-dimensional lattice structure, as in \eqref{eq:H}. Moreover, let $Q_1$ and $Q_2$ be two finite rank projections on $\HH$. Then there exists an essentially local partial isometry $\Lambda$, with
	\begin{align}\label{eq:partialisom}
	\Lambda^*\Lambda=\idty-Q_1\qquad\text{and}\qquad \Lambda\Lambda^*=\idty-Q_2.
	\end{align}
\end{lem}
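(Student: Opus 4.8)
The plan is to reduce, by two inexpensive conjugations, to a standard situation in which $Q_1$ and $Q_2$ are both supported in finitely many cells, and then to exhibit $\Lambda$ as an operator that is block diagonal for a single half-chain projection — which, directly from the definition of $\AAloc$, makes essential locality automatic. Here $\HH$ is of course infinite-dimensional, so $\idty-Q_1$ and $\idty-Q_2$ both have infinite-dimensional range and are Murray--von Neumann equivalent; the point is to realize this equivalence inside $\AAloc$.

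For the reduction, put $r_i=\rank Q_i$. Reflecting the chain if necessary, we may assume the left half-chain $\Pl 0\HH$ is infinite-dimensional (at least one half-chain is, since $\HH$ is). Then $\sum_{x<0}\dim\HH_x=\infty$, so we can fix a finite set $S\subseteq\{x<0\}$ with $\sum_{x\in S}\dim\HH_x\ge\max(r_1,r_2)$ and let $Q_i^{\mathrm{std}}$ be any rank-$r_i$ projection with range in $\bigoplus_{x\in S}\HH_x$. The ranges of $Q_i$ and $Q_i^{\mathrm{std}}$ together span a subspace $\mathcal F_i\subseteq\HH$ of dimension at most $2r_i$ on which both projections are supported, so a unitary on $\mathcal F_i$ conjugates $Q_i$ to $Q_i^{\mathrm{std}}$; extending it by the identity on $\mathcal F_i^{\perp}$ yields a unitary $U_i$ on $\HH$ with $U_iQ_iU_i^{*}=Q_i^{\mathrm{std}}$ and $U_i-\idty$ of finite rank, hence $U_i\in\AAloc$. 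It therefore suffices to produce an essentially local partial isometry $\Lambda^{\mathrm{std}}$ with $(\Lambda^{\mathrm{std}})^{*}\Lambda^{\mathrm{std}}=\idty-Q_1^{\mathrm{std}}$ and $\Lambda^{\mathrm{std}}(\Lambda^{\mathrm{std}})^{*}=\idty-Q_2^{\mathrm{std}}$ and to set $\Lambda=U_2^{*}\Lambda^{\mathrm{std}}U_1$: since $U_1,U_2$ are unitaries with $U_iQ_iU_i^{*}=Q_i^{\mathrm{std}}$, one gets $\Lambda^{*}\Lambda=U_1^{*}(\idty-Q_1^{\mathrm{std}})U_1=\idty-Q_1$ and likewise $\Lambda\Lambda^{*}=\idty-Q_2$, and $\Lambda\in\AAloc$ since that algebra is closed under products and adjoints. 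To build $\Lambda^{\mathrm{std}}$, choose $b=1+\max S\le 0$, so that $\Pl b\HH$ is infinite-dimensional (it is $\Pl 0\HH$ with a finite-dimensional piece removed) and contains the ranges of $Q_1^{\mathrm{std}}$ and $Q_2^{\mathrm{std}}$. The subspaces $(\idty-Q_i^{\mathrm{std}})\Pl b\HH$ are then both infinite-dimensional and separable, hence unitarily isomorphic by some $\lambda$; with respect to $\HH=\Pg b\HH\oplus\Pl b\HH$ set $\Lambda^{\mathrm{std}}=\idty_{\Pg b\HH}\oplus\lambda$. Then $(\Lambda^{\mathrm{std}})^{*}\Lambda^{\mathrm{std}}=\Pg b+(\idty-Q_1^{\mathrm{std}})\Pl b=\idty-Q_1^{\mathrm{std}}$ (using $Q_1^{\mathrm{std}}\Pg b=0$) and similarly $\Lambda^{\mathrm{std}}(\Lambda^{\mathrm{std}})^{*}=\idty-Q_2^{\mathrm{std}}$, while $\Lambda^{\mathrm{std}}$ commutes with $\Pg b$; since $\Pg a-\Pg b$ has finite rank for every $a$, the commutator $[\Lambda^{\mathrm{std}},\Pg a]$ is finite rank, so $\Lambda^{\mathrm{std}}\in\AAloc$.

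The one delicate point — and the only place where the infinitude of the chain is really used — is this last construction: when $\rank Q_1\neq\rank Q_2$, no compact, let alone finite-rank, perturbation of the identity can intertwine $\idty-Q_1$ and $\idty-Q_2$, so the dimension mismatch genuinely has to be absorbed into an infinite tail of the lattice. What makes this painless is that essential locality only asks for compactness of the commutators $[\Lambda,\Pg a]$, which is automatic once $\Lambda$ is exactly block diagonal for a single split $\Pg b$, however nonlocal the unitary $\lambda$ on the left block may be. The rest — the behaviour of partial isometries under products and conjugation, and the finite-rank bookkeeping for the $U_i$ — is routine.
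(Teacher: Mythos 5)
Your proof is correct, but it takes a genuinely different route from the paper's. The paper first reduces to producing, for a single finite-rank projection $N$, a co-isometry $\Pi$ with $\Pi^*\Pi=\idty-N$ and $\Pi\Pi^*=\idty$; it builds this explicitly as $\Pi_M=(\idty-P)+P S^{*n}P$ using a unilateral shift on the right half-chain (so that $\Pi_M^*\Pi_M=\idty-M$ for a concrete rank-$n$ projection $M$), conjugates $M$ to $N$ by a finite-rank perturbation of the identity, and finally sets $\Lambda=\Pi_2^*\Pi_1$. You instead keep $\Lambda$ a genuine partial isometry throughout: you push both $Q_1$ and $Q_2$ into finitely many cells of the left half-chain by finite-rank unitary conjugations, choose one cut $\Pg b$ to the right of their common support, and let $\Lambda^{\mathrm{std}}$ be the identity on $\Pg b\HH$ and an arbitrary unitary between the two infinite-dimensional complements inside $\Pl b\HH$. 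Essential locality comes out of the single observation that any operator commuting exactly with one half-chain projection has finite-rank commutators with all of them, since $\Pg a-\Pg b$ is finite rank. This makes your version shorter and sidesteps the shift construction entirely; what the paper's version buys in exchange is an explicit formula for $\Lambda$ built from the lattice shift, which fits the constructive flavour of the surrounding material. Both rest on the same underlying fact — that the rank mismatch must be absorbed into an infinite tail — and both are valid; you have just organized the bookkeeping differently, trading explicitness for economy.
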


In order to see how this applies to $V\colon\KK\to\KK'$, let $\HH_\pm=\Gamma_\pm\HH$ and similarly for $\KK_\pm$ and $\KK'_\pm$. Then, since $\KK_\pm$ and $\KK'_\pm$ are the complements of finite dimensional subspaces of $\HH_\pm$, by \Qref{lem:esslocisometry} we find essentially local isometries $V_+$ and $V_-$, such that
\begin{equation}
V_\pm^* V_\pm\HH_\pm=\KK_\pm\qquad\text{and}\qquad V_\pm V_\pm^*\HH_\pm=\KK'_\pm.
\end{equation}
Considering $V_\pm$ as mappings between $\KK_\pm$ and $\KK_\pm'$ they become unitary and define the desired essentially local unitary $V$ via
\begin{equation}
V=\begin{pmatrix}
V_+&0\\0&V_-
\end{pmatrix}.
\end{equation}
\begin{proof}[Proof of \Qref{lem:esslocisometry}]
	The proof can be reduced to the construction of an essentially local partial isometry $\Pi$, with $\Pi^*\Pi=\idty-N$ and $\Pi\,\Pi^*=\idty$, for an arbitrary finite rank projection $N$. Then, with $\Pi_1$ and $\Pi_2$ being such isometries for $Q_1$ and $Q_2$, respectively, the statement follows from setting $\Lambda=\Pi_2^*\Pi_1$.
	
	To construct $\Pi$, let $n=\rank N$ and
	\begin{equation}
	\Pi_M=(\idty-P)+ P{S^*}^nP,
	\end{equation}
	where $P=P_{>0}$ is the projection onto $\bigoplus_{x>0}\HH_x$ and $S=\sum_{x,i}\ketbra{x+\delta_{1i},i}{x,i}$ denotes the partial bilateral shift which shifts the first basis vector in each cell $\HH_x=\operatorname{span}\{\ket{x,i}:i=1,\dots,d_x\}$ to the right. Sandwiching $S$ with $P$, we get the unilateral shift, with
	\begin{equation}
	(PS^*P)(PSP)=P\qquad\text{and}\qquad(PSP)(PS^*P)=P-\ketbra{1,1}{1,1}.
	\end{equation}
	Extending these properties to $S^n$, we get
	\begin{align}
	\Pi_M\Pi_M^*&=(\idty-P)+P=\idty\\
	\intertext{and}
	\Pi_M^*\Pi_M&=(\idty-P)+(PS^nP)(P{S^*}^nP)=(\idty-P)+(P-M)=\idty-M,
	\end{align}
	with $M=\sum_{k=1}^n\ketbra{k,1}{k,1}$.
	$\Pi_M$ is defined as the identity on one half of the chain and as a unilateral shift on the other, wherefore it is automatically essentially local.
	
	Since $\rank M=n=\rank N$, there exists a unitary $U$ which identifies the subspaces $N\HH$ and $M\HH$, i.e.\ $N=U^*MU$, and acts like the identity on their complement. Being a finite rank perturbation of the identity, $U$ is clearly essentially local. Hence, $\Pi=\Pi_MU$ is essentially local with
	\begin{equation}
	\Pi^*\Pi=U^*\Pi_M^*\Pi_MU=U^*(\idty-M)U=\idty-N\qquad\text{and}\qquad \Pi\,\,\Pi^*=\Pi_M\Pi_M^*=\idty.
	\end{equation}
\end{proof}

\section{Proof of Theorem~\ref{thm:complete}}\label{sec:complete_indices}
In the previous sections we defined ten indices for half-step walks $F$: On the one hand, there are the Fredholm indices $\fred{\cdot}$ of $F$ and its four chiral blocks $A,B,C$ and $D$.
On the other, there are the right Fredholm indices $\fredP{\cdot}$ of these five operators. However, we also saw that these ten indices are not independent of each other. In the following we identify a subset of five indices and prove its invariance under norm continuous, as well as compact perturbations. Moreover, we define a generating example, with which every index combination in $\Ir^5$ can be realized. On the one hand, this implies the independence of these five indices and on the other, that there are no empty classes. This proves the first part of \Qref{thm:complete}.

Afterwards, we prove the completeness result in \Qref{thm:complete} in several steps: First we introduce a standard form for half-step walks $F$ which facilitates the proof. Then, we construct a homotopy which connects each half-step walk with one whose timeframed walks have essential spectrum only at $\pm i$. Finally, we assemble the proof of completeness in the last part of this section.

\subsection{Independent set of indices}\label{sec:independent_set}
The results of the previous sections imply the following dependencies between the ten indices of $F$: By \eqref{eq:six} or \eqref{eq:sixprime} in \Qref{lem:sixFeu} we can drop the index of one of the matrix blocks, w.l.o.g.\ $\fred{D}$. Since the lemma is formulated for merely essentially unitary half-step operators, we can also drop $\fredP{D}$. Furthermore, by \eqref{pmindices} in \Qref{lem:sixFu}, we can drop $\fred{C}$. This lemma requires exact unitarity and hence has nothing to say about right Fredholm indices. Finally, \Qref{lem:iF}, which is again valid also for merely essentially unitary half-step operators, allows us to drop $\fred{F}$ and $\fredP{F}$. Thus, the following set of integer-valued indices is independent:
  \begin{equation}\label{eq:indexset}
    \bigl(\fred{A},\fred{B},\fredP{A},\fredP{B},\fredP{C}\bigr).
  \end{equation}
	Moreover, these indices are invariant under continuous deformations of $F$, which do not break the assumptions from \Qref{sec:walks_protocols}, as well as compact perturbations. In the following, we take this as the standard set of independent indices for $F$.

	Any homotopy of $F$ must also be a homotopy of $A,B,C$ and $D$ and keeping
	the essential gap condition for $W$ and $W'$ is equivalent to $A,B,C$ and
	$D$ being Fredholm operators along any allowed path. Hence, since $\fred{X}$
	and $\fredP{X}$ for $X\in\{A,B,C,D\}$ are defined as Fredholm
	indices, they are constant along allowed paths.	The same holds for compact perturbations.

The independence of the five indices in \eqref{eq:indexset} is proved by defining a generating example which allows us to realize every index combination in $\Ir^5$:

\subsection{The generating example}\label{sec:generating_example}

Let $S$ be the unilateral shift on $\ell^2(\Ir)$ with $\fred{S}=0$ and $\fredP{S}=-1$, and consider the unitary operators
\begin{equation}\label{eq:pre_generating_example}
U(n,m)=\frac1{\sqrt{2}}\begin{pmatrix}
S^n &   S^m \\
-S^{-m}    &  S^{-n}
\end{pmatrix}\qquad\text{and}\qquad T(k)=\begin{pmatrix} S^k & 0\\ 0 &
1\end{pmatrix},\qquad n,m,k\in\Ir.
\end{equation}
It is straightforward to see that $\fredP{{U(m,n)}}=0$ for all $m,n\in\Ir$. Thus, by \cite[Theorem VII.4]{LongVersion} $U$ can be decoupled gently, i.e. there exists a chiral symmetric homotopy $U(m,n)\mapsto U_L(n,m)\oplus U_R(n,m)$, where $U_L(n,m)$ and
$U_R(n,m)$ are exactly unitary quantum walks on the left and the right half-chain, respectively. We define the
\textbf{generating example} as
\begin{equation}\label{eq:generating_example}
F(n_L,m_L,n_R,m_R,k)=T(k)\bigl(U_L(n_L,m_L)\oplus U_R(n_R,m_R)\bigr).
\end{equation}
Using $\fred{S}=0, \fredP{S}=-1$ and $\fred{S^a(P^\perp S^bP^\perp\oplus
PS^cP)\bigr}=b-c$ we can determine the index tuple \eqref{eq:indexset} of $F$:
\begin{equation}
    \bigl(\fred{A},\fred{B},\fredP{A},\fredP{B},\fredP{C}\bigr)=\bigl(n_L-n_R,m_L-m_R,-n_R-k,-m_R-k,m_R\bigr),
\end{equation}
i.e. the indices of $F$ may be calculated by applying the integer matrix
\begin{equation}
M=\left(\begin{array}{ccccc} 1 & 0 & -1 & 0 & 0\\
0 & 1 & 0 & -1 & 0\\
0 & 0 & -1 & 0 & -1\\
0 & 0 & 0 & -1 & -1\\
0 & 0 & 0 & 1 & 0
\end{array}\right)
\end{equation}
to the parameter vector $(n_L,m_L,n_R,m_R,k)$. Clearly $\abs{\det M}=1$ and, moreover, its inverse is also an integer matrix. Hence, for any possible combination of indices in \eqref{eq:indexset} we can construct an $F$ via \eqref{eq:generating_example} with parameters obtained by applying $M^{-1}$ to this index set. The generating example will also be used later in \Qref{sec:time_framed_walks} for the proof of \Qref{cor:(si+si')/2}.

\subsection{Standard form}

The $F$ we constructed in the proofs of \Qref{prop:Fexists} and \Qref{thm:twotimeframes} had a special structure, based on the eigenspace decompositions of $W$ and $W'$. This structure was not just an artefact of the proof technique, but can actually be raised to a \textbf{standard form} for every given half-step operator $F$.  Let $\HH=\Gamma_+\HH\oplus\Gamma_-\HH=\HH_+\oplus\HH_-$ be the decomposition of $\HH$ into the $\gamma$-eigenspaces. Similar to the splitting in the proofs mentioned above these eigenspaces can be further decomposed in two different ways, namely
\begin{alignat}{2}\label{eq:Hstructure1}
	\HH_+&=\ker A\oplus\ker C\oplus\Gamma_+\KK &&=\ker A^*\oplus\ker
	B^*\oplus\Gamma_+\KK'\\
	\HH_- &= \ker  D\oplus\ker B\oplus\Gamma_-\KK &&=\ker D^*\oplus\ker
	C^*\oplus\Gamma_-\KK',\label{eq:Hstructure11}
\end{alignat}
where by \Qref{lem:sixFeu} $\ker X$ is finite dimensional for $X=A,B,C,D$ due to the essential gap condition of the corresponding walks. The orthogonality of the direct summands is guaranteed by unitarity of $F$ and $F^*$ in \eqref{eq:unitarity_of_F}. For example, $\ker A$ and $\ker C$ are orthogonal by $A^*A+C^*C=\idty$.
Reordering the direct summands in \eqref{eq:Hstructure1} and \eqref{eq:Hstructure11} and using the arguments in the proof of \Qref{lem:sixFu}, $F$ takes the following form:
\begin{equation}\label{eq:Hstructure2}
	F\colon  (\ker A\oplus\ker
	D)\oplus(\ker C\oplus\ker B)\oplus\KK
	\to (\ker A^*\oplus\ker
	D^*)\oplus(\ker B^*\oplus\ker C^*)\oplus\KK'
\end{equation}
\begin{equation}\label{eq:Fstructure}
	F=
	\begin{pmatrix}0 & B^-\\ C^- & 0\end{pmatrix}
	\oplus
	\begin{pmatrix}A^+ & 0\\ 0 & D^+\end{pmatrix}
	\oplus
	\begin{pmatrix}A_\KK & B_\KK\\C_\KK & D_\KK\end{pmatrix}.
\end{equation}
Here, $A^+\colon\ker C\to\ker B^*,B^-\colon\ker D\to\ker
A^*,C^-\colon\ker
A\to\ker D^*$ and $D^+\colon\ker B\to\ker C^*$ are finite dimensional unitaries because they are isometries between spaces of the same dimension, see \eqref{eq:dimkers} and \eqref{eq:dimkerss}. $A_\KK,B_\KK,C_\KK$ and $D_\KK$ on the other hand are invertible operators on the remaining infinite dimensional Hilbert space. However, note that the finite dimensional unitaries $A^+,B^-,C^-$ and $D^+$ do not necessarily have to be of the same size. Indeed, according to \Qref{lem:sixFu} the difference in dimension of $A^+$ and $D^+$ constitutes the index $\six_+(W)=\six_+(W')$, whereas that of $B^-$ and $C^-$ determines $\six_-(W)=-\six_-(W')$. For exactly gapped walks $W$ and $W'$ the finite dimensional blocks vanish, and we get $\KK_+=\KK_+'=\HH_+$ and $\KK_-=\KK_-'=\HH_-$.

\subsection{The flattening construction}
For the classification of quantum walks in \cite{LongVersion} it turned out to be useful to flatten
their spectrum, i.e.\ to continuously deform it to an operator, whose spectrum is contained in $\{\pm 1,\pm i\}$
with only finitely degenerated eigenvalues at $\pm1$. We call such operators \textbf{essentially flatband}. In the current setting, flattening the spectrum of one of the timeframed walks automatically implies that the other timeframe is also essentially flatband by unitary invariance of the spectrum. The essential flatband condition has the following consequences for $F$:
\begin{lem}\label{lem:flatness}
	Let $F$ be the half-step walk with corresponding walks $W$ and $W'$. Then the following are equivalent:
	\begin{itemize}
		\item[(1)] $W$ and $W'$ are essentially flatband.
		\item[(2)] $\sqrt2X$ is essentially unitary for $X\in\{A,B,C,D\}$.
	\end{itemize}
	\begin{proof}
		It turns out easier to formulate the conditions in the Calkin algebra, as
		we get rid of the adjective ``essential'' in this way without loss of
		generality. In the Calkin algebra $W$ and $W'$ are exactly flat-band, which
		means, that $W=-W^*$, and the same for $W'$. But this is equivalent to
		$A^*A=C^*C,\,B^*B=D^*D,\, AA^*=BB^*$ and $DD^*=CC^*$ by \eqref{eq:walks}, which, by unitarity
		of $F$ in \eqref{eq:unitarity_of_F}, is equivalent to $\sqrt{2}X$ being unitary for $X\in\{A,B,C,D\}$.
	\end{proof}
\end{lem}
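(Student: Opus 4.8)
The plan is to carry everything over to the Calkin algebra $\mathcal Q=\BB(\HH)/\KK(\HH)$, where "essentially" conditions become exact ones, and then to match the two statements block by block using only the unitarity relations for $F$. Write $\pi\colon\BB(\HH)\to\mathcal Q$ for the quotient map, and abbreviate $\pi(X)$ by $X$ again where no confusion can arise.

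First I would record the elementary $C^*$-algebraic fact that a unitary $u$ satisfies $\sigma(u)\subseteq\{\pm i\}$ if and only if $u^2=-\idty$, which in turn (multiplying by $u^*=u^{-1}$) is equivalent to $u=-u^*$. Applying this to $\pi(W)$ and $\pi(W')$, the statement "$W$ is essentially flatband" is exactly "$\pi(W)=-\pi(W)^*$", and likewise for $W'$. This identification is the one conceptual point of the proof, and it is where a little care is warranted about what "essentially flatband" is taken to mean: it is precisely the condition on the \emph{essential} spectrum, the finitely degenerate spectral points at $\pm1$ being annihilated in the quotient. Since $\pi(W')=\pi(F)\pi(W)\pi(F)^*$ as observed in the proof of \Qref{lem:sixFeu}, the two conditions are in fact equivalent to one another, but it is convenient to carry both along.

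Next I would feed in the block form \eqref{eq:walks}. The diagonal blocks of $W$ are the self-adjoint operators $A^*A-C^*C$ and $D^*D-B^*B$, while its off-diagonal blocks automatically satisfy $(A^*B-C^*D)^*=-(D^*C-B^*A)$; hence $\pi(W)=-\pi(W)^*$ holds if and only if the diagonal blocks vanish in $\mathcal Q$, i.e.
\begin{equation*}
\pi(A^*A)=\pi(C^*C)\qquad\text{and}\qquad\pi(D^*D)=\pi(B^*B).
\end{equation*}
Running the same argument for $W'$ yields $\pi(AA^*)=\pi(BB^*)$ and $\pi(DD^*)=\pi(CC^*)$. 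Now I would combine these with the unitarity of $F$: from $F^*F=\idty$ one has \eqref{eq:unitarity_of_F}, namely $A^*A+C^*C=\idty=B^*B+D^*D$, and from $FF^*=\idty$ the diagonal relations $AA^*+BB^*=\idty=CC^*+DD^*$. Together with the four equalities just obtained this forces $\pi(X^*X)=\pi(XX^*)=\tfrac12\,\pi(\idty)$ for every $X\in\{A,B,C,D\}$, that is $\sqrt2\,\pi(X)$ is unitary in $\mathcal Q$, i.e. $\sqrt2\,X$ is essentially unitary. Conversely, if $\sqrt2\,X$ is essentially unitary for all four blocks, then $\pi(X^*X)=\pi(XX^*)=\tfrac12\,\pi(\idty)$, so in particular $\pi(A^*A)=\pi(C^*C)$, $\pi(D^*D)=\pi(B^*B)$, $\pi(AA^*)=\pi(BB^*)$ and $\pi(DD^*)=\pi(CC^*)$; by the block forms \eqref{eq:walks} this is exactly $\pi(W)=-\pi(W)^*$ and $\pi(W')=-\pi(W')^*$, i.e. $W$ and $W'$ are essentially flatband.

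The only genuine content is the first step — recognising that "essentially flatband" is the assertion $\pi(W)^2=-\pi(\idty)$ — after which the equivalence is a purely algebraic manipulation of the eight block relations coming from $F^*F=FF^*=\idty$. That the off-diagonal blocks of $W$ and $W'$ play no part (being automatically anti-self-adjoint) is what keeps the bookkeeping short; the one thing one must be careful not to drop is any of the four "diagonal" identities from $F^*F=\idty$ and $FF^*=\idty$, since all four are needed to pin $X^*X$ and $XX^*$ down to $\tfrac12\idty$ rather than merely to one another.
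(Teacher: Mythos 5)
Your proof is correct and follows the same route the paper takes: pass to the Calkin algebra, identify the flatband condition with $W=-W^*$, read off the diagonal-block equalities from \eqref{eq:walks}, and combine them with the unitarity relations $F^*F=FF^*=\idty$ to pin down $X^*X$ and $XX^*$ to $\tfrac12\idty$ for each chiral block. Your added observations (that $\pi(W')=\pi(F)\pi(W)\pi(F)^*$ makes the two flatband conditions mutually equivalent, and that the $FF^*=\idty$ diagonal relations are needed alongside \eqref{eq:unitarity_of_F}) are accurate elaborations of details the paper's terse proof leaves implicit, not a different argument.
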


As for \Qref{lem:sixFeu} this lemma also holds when we drop all ``essentially''s. Similarly, it does not make use of the locality of the involved operators. However, for the above definition of essentially flatband operators exact unitarity is a crucial assumption as it allows one to speak of ``finitely degenerate eigenvalues'' at $\pm 1$. Relaxing this definition to mere essential unitarity does not make sense since essentially unitary operators are not normal in general. To be able to nevertheless speak of essentially gapped essentially unitary operators, we could instead take the finite dimensionality of the kernels of $W\pm \idty$ as a defining property. This would allow us to define essentially flatband essentially unitary operators as essentially gapped essentially unitary operators whose essential spectrum consists of $\pm i$.

However, a much simpler choice is to just take \Qref{lem:flatness} also in the essentially unitary case, i.e. we call essentially unitary operators $W$ and $W'$ in \eqref{eq:timeframes} essentially flatband if the chiral blocks of the corresponding half-step operator are essentially unitary up to a factor of $1/\sqrt2$. Moreover, for the sake of brevity we occasionally call a half-step operator $F$ essentially flatband whenever the corresponding $W$ and $W'$ are.

Using the standard form introduced above, we now construct a flatband deformation of walks of the form \eqref{eq:timeframes} directly on the level of the half-step walk $F$:
\begin{lem}\label{lem:flattenF}
	Let $F$ be a
	half-step walk.
	Then there is a continuous path $t\mapsto F_t$, $t\in[0,1]$, such that $F_0=F$ and $F_1$ constitutes essentially flatband walks $W_1$ and $W_1'$.
\end{lem}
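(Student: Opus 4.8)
The plan is to leave the finite-dimensional $\pm1$-eigenspaces of $W$ untouched and to flatten the spectrum only on the part $\KK$ where $W$ is properly gapped, lifting this deformation from the walk back to the half-step operator through the explicit square-root formula \eqref{eq:Ftwiddle}. First I would put $F$ into the standard form \eqref{eq:Fstructure}, $F=F_-\oplus F_+\oplus F_\KK$. The blocks $F_\pm$ are finite rank modifications of the identity on their spaces, so they vanish in the Calkin algebra and need not be deformed at all — finitely degenerate eigenvalues at $\pm1$ are permitted for an essentially flatband operator. Thus it suffices to deform the unitary block $F_\KK\colon\KK\to\KK'$, on which the timeframe $W_\KK=\gamma F_\KK^*\gamma F_\KK$ is properly gapped at $\pm1$.

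Since $\pm1\notin\operatorname{spec}(W_\KK)$, the principal-branch logarithm produces a self-adjoint $H=-i\operatorname{Log}W_\KK\in\AAloc$ with $\gamma H\gamma=-H$ and $\operatorname{spec}(H)$ contained in a compact subset of $(-\pi,\pi)$ that is bounded away from $0$, from $\pi$ and from $-\pi$. I would then flatten $H$ by the odd functions $f_t(s)=(1-t)s+t\,\tfrac{\pi}{2}\operatorname{sgn}(s)$, setting $H_t=f_t(H)\in\AAloc$ and $W_{\KK,t}=e^{iH_t}$. Oddness of $f_t$ gives $\gamma H_t\gamma=-H_t$, hence $\gamma W_{\KK,t}\gamma=W_{\KK,t}^*$, and since $f_t$ keeps the spectrum away from $0$ and $\pm\pi$ uniformly in $t$, the walk $W_{\KK,t}$ stays gapped at $\pm1$ along the whole path, while $\operatorname{spec}(W_{\KK,1})=\{\pm i\}$.

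To lift this back to $F$ I would use the square root from the proof of \Qref{prop:Fexists}: for any unitary $U$ on $\KK$ that is chiral symmetric and gapped at $\pm1$, formula \eqref{eq:Ftwiddle} defines a chiral symmetric unitary $\widetilde F(U)$ on $\KK$ with $\widetilde F(U)^2=\gamma\widetilde F(U)^*\gamma\widetilde F(U)=U$; the assignment $U\mapsto\widetilde F(U)$ is norm-continuous because it is assembled from the continuous functions $\sqrt{\idty\pm\alpha}$, $\sqrt{\idty\pm\delta}$ of the diagonal chiral blocks of $U$ and from the polar isometry of its invertible off-diagonal block $\beta$, and it maps $\AAloc$ into itself by \Qref{lem:esslocpolarisom} together with norm-closedness of $\AAloc$ (exactly as in \Qref{schol:esslocF}). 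Putting $V:=F_\KK\,\widetilde F(W_\KK)^{*}\colon\KK\to\KK'$, the computation in the proof of \Qref{thm:twotimeframes} shows $V\gamma_\KK=\gamma_{\KK'}V$, and $V\in\AAloc$. I would then define
\begin{equation*}
F_t:=F_-\oplus F_+\oplus\bigl(V\,\widetilde F(W_{\KK,t})\bigr),\qquad t\in[0,1].
\end{equation*}
Then $F_0=F$, the path $t\mapsto F_t$ is norm-continuous and stays in $\AAloc$, and using $V^{*}\gamma_{\KK'}V=\gamma_\KK$ together with chirality of $\widetilde F(W_{\KK,t})$ one computes $\gamma F_t^{*}\gamma F_t=(-P_-)\oplus P_+\oplus W_{\KK,t}$ and $F_t\gamma F_t^{*}\gamma=(-P'_-)\oplus P'_+\oplus VW_{\KK,t}V^{*}$. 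Both are chiral symmetric and essentially gapped for every $t$, so each $F_t$ is a half-step walk; at $t=1$ the essential spectrum of $W_1$ and of $W_1'$ equals $\{\pm i\}$, i.e.\ they are essentially flatband (equivalently, by \Qref{lem:flatness}, the chiral blocks of $F_1$ are $\tfrac{1}{\sqrt2}$ times essentially unitary operators).

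I do not expect a genuine obstruction here; the only thing requiring care is to verify that the lift remains inside the class of essentially local half-step walks at \emph{every} time $t$ — that $\operatorname{Log}W_\KK$ and the spectral bounds on $H$ are as claimed, that \eqref{eq:Ftwiddle} stays applicable (invertibility of $\idty\pm\alpha_t$, $\idty\pm\delta_t$, $\beta_t$) and norm-continuous along the entire flattening path, and that the intertwiner $V$ may be held fixed while only the chiral-symmetric factor $\widetilde F(W_{\KK,t})$ is deformed. All of this reduces to the fact that the flattening never closes the gap of $W_\KK$ at $\pm1$, together with the continuity and essential-locality statements already established.
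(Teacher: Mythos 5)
Your proof is correct, but it takes a genuinely different route from the paper. The paper works directly at the level of the chiral blocks of $F$: it writes each block in polar form $X=U_X|X|$, uses the unitarity relations \eqref{eq:unitarity_of_F} to express all the moduli in terms of $A^*A$, observes that the resulting $F$ is unitary for \emph{any} operator $0<A^*A<\idty$ in place of the original $A^*A$, and then deforms $A^*A$ linearly to $\idty/2$; this fixes the polar isometries, so the endpoint is $F^\flat=\frac1{\sqrt2}\bigl(\begin{smallmatrix}U_A&U_B\\U_C&U_D\end{smallmatrix}\bigr)$ with chiral blocks $\frac1{\sqrt2}\times$unitary. Your approach instead flattens the walk $W_\KK$ by the familiar Hamiltonian route (principal-branch logarithm, odd spectral flattening, exponentiate) and lifts back to $F$ using the chiral square root $\widetilde F$ from \Qref{prop:Fexists}, factoring the original $F_\KK$ as $V\,\widetilde F(W_\KK)$ and holding the intertwiner $V$ fixed. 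What the paper's method buys is that it never leaves the block picture of $F$: there is no logarithm, no re-derivation of the intertwiner, and the homotopy is a one-line linear interpolation of a single positive operator. What your method buys is conceptual transparency — it makes explicit that the flattening is exactly the usual spectral flattening of the effective Hamiltonian, lifted through the canonical chiral square root — and it cleanly separates the ``which $W$'' data from the ``which $F$ over $W$'' data (the factor $V$), which your proof shows stays entirely passive during the deformation. The points you flag for care (spectral bounds on $H$, persistence of invertibility of $\idty\pm\alpha_t$, $\idty\pm\delta_t$, $\beta_t$, norm-continuity of $U\mapsto\widetilde F(U)$, and $V,H\in\AAloc$) are all correct and all hold, so the argument is sound; it just routes through $W$ where the paper routes through the blocks of $F$.
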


\begin{proof}
	By \Qref{lem:flatness}, the essentially flatband condition on timeframed walks is equivalent to essential unitarity of the chiral blocks of the corresponding half-step walk (up to a factor of $1/\sqrt2$). In order to achieve this essential unitarity, we first restrict considerations to the part of the half-step walk where each block is invertible, i.e. to the complement of the kernels and cokernels of the chiral blocks. In the standard form \eqref{eq:Fstructure} this means that we only have to deal with the rightmost block. This reduces the flattening to the task of transforming the invertible blocks to exactly unitary operators up to a factor of $1/\sqrt2$, without destroying the unitarity of the half-step walk along the way.

	Using the polar decomposition, the upper left chiral block $A$ of $F$ can be written as $A=U_A|A|$, with a unique unitary polar isometry $U_A$ and the absolute value $|A|=\sqrt{A^*A}$.
	We can also write the remaining blocks in this way and, moreover, using the unitarity conditions of $F$ in \eqref{eq:unitarity_of_F} we can express all absolute values in terms of $A^*A$. This gives
	\begin{align}
		A&=U_A\sqrt{A^*A}
		&&B=U_B\sqrt{\idty-U_B^*U_AA^*AU_A^*U_B}\\
		C&=U_C\sqrt{\idty-A^*A}
		&&D=U_D\sqrt{U_B^*U_AA^*AU_A^*U_B},
	\end{align}
	where we used $AA^*+BB^*=\idty$ and $XX^*=U_XX^*XU_X^*$ for $B$, $A^*A+C^*C=\idty$ for $C$ and $D^*D+B^*B=\idty$ for $D$. The remaining unitarity conditions all boil down to $U_D=-U_CU_A^*U_B$, which guarantees the unitarity of the flatband half-step walk
	\begin{equation}\label{eq:Fflat}
		F^\flat=\begin{pmatrix}
			U_A & U_B\\U_C & U_D
		\end{pmatrix}/\sqrt2.
	\end{equation}
	In particular, the unitarity of $F$ is independent of $A^*A$, given that $0<A^*A<\idty$, which holds because $A$ is invertible, $A^*A+C^*C=\idty$ and $C^*C>0$. Hence, we can deform $F$ into $F^\flat$ by constructing a continuous path between $A^*A$ and $\idty/2$. A particularly simple path is given by linearly interpolating between the two operators $t\mapsto t(\idty/2)+(1-t)A^*A$.
	
	Taking into account the remaining summands in \eqref{eq:Fstructure}, note that the two left summands of $F$ are already in flat-band form, since their blocks are either zero or unitary. Hence, the above construction works also in this case, with the only difference that the polar isometries of $A,B,C,D$ now might have finite dimensional kernels and cokernels. These, however, only constitute the finite dimensional $\pm1$-eigenspaces of $W$ and $W'$ and are left invariant by construction.
	
	So far, we did not address the essential locality of the half-step walk. However, the path $F_t$ is constructed in a way that respects this property, given that $F$ is essentially local.
	Indeed, by \Qref{lem:esslocpolarisom}, the result of the flattening construction above is essentially local. Moreover, since we only continuously deformed $|A|$, every $F_t$ is essentially local.
\end{proof}

This does not only proof the existence of a flattening path for every half-step walk: It does so by explicitly describing the construction of the flatband half-step walk $F^\flat$ through simply replacing the chiral blocks by their polar isometries.

\subsection{Completeness}\label{sec:completeness_proof}
Before we finally address completeness for the classification of $F$, we state the following theorem which will be important in the proof. It states that for essentially local unitaries the right Fredholm index $\fredP\cdot$ is complete, and thereby extends the analogous result for strictly local walks which was proved in \cite{old_index_paper}. The index $\fredP\cdot$ was generalized to essentially local operators in \cite{LongVersion}. However, the question whether $\fredP\cdot$ is complete for this larger set of essentially local operators was not considered.

\begin{thm}\label{thm:oldindex-complete}
	Let $U$ and $V$ be essentially local unitaries on the one-dimensional lattice. Then $U$ and $V$ are homotopic along an essentially local path if and only if $\fredP{U}=\fredP{V}$.
\end{thm}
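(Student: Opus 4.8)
The plan is to obtain Theorem~\ref{thm:oldindex-complete} as a short corollary of Proposition~\ref{prop:inx0}, which already contains the genuine analytic work. The ``only if'' direction is immediate: if $[0,1]\ni t\mapsto F(t)\in\AAloc$ is a norm-continuous path of unitaries with $F(0)=U$ and $F(1)=V$, then by Proposition~\ref{prop:inx0}(2) the map $t\mapsto\fredP{F(t)}$ is a continuous function into $\Ir$, hence constant, so $\fredP{U}=\fredP{V}$.

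For the ``if'' direction I would reduce to the case $V=\idty$ by translating in the group of essentially local unitaries. Set $W:=UV^*$. Since $\AAloc$ is a norm-closed operator algebra stable under adjoints, $W$ is again an essentially local unitary, and by the product rule of Proposition~\ref{prop:inx0}(2) together with $\fredP{V^*}=-\fredP{V}$ (immediate from $\fredP{\cdot}$ being a difference of kernel dimensions) one gets $\fredP{W}=\fredP{U}-\fredP{V}=0$. The equivalence (a)$\Leftrightarrow$(b) of Proposition~\ref{prop:inx0}(3) then furnishes a continuous path $[0,1]\ni t\mapsto W(t)\in\AAloc$ of unitaries with $W(0)=\idty$ and $W(1)=W$. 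Defining $F(t):=W(t)V$ yields the desired homotopy: each $F(t)$ is a product of two essentially local unitaries, hence an essentially local unitary; $t\mapsto F(t)$ is norm continuous because multiplication by the fixed bounded operator $V$ is; and $F(0)=V$, $F(1)=WV=UV^*V=U$.

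The steps requiring care are purely bookkeeping: that $\AAloc$ is closed under products and adjoints (Section~\ref{sec:locality}), that $\fredP{\cdot}$ restricts to a homomorphism on the unitary group of $\AAloc$ (Proposition~\ref{prop:inx0}(2)), and that left/right multiplication by a fixed unitary preserves both continuity and essential locality of a path. The real obstacle is hidden inside the invoked proposition, namely the implication $\fredP{W}=0\Rightarrow$ existence of a path to $\idty$: this rests on constructing, for a unitary $W\in\AAloc$ with vanishing index, a continuous essentially local \emph{decoupling} to an operator of the block form $W_L\oplus W_R$, after which each summand is contracted to its identity by the measurable functional calculus. That decoupling is exactly \cite[Theorem~VII.4]{LongVersion}, which we are entitled to assume here; granting it, the present theorem is nothing more than the statement that in the topological group of essentially local unitaries, ``homotopic to the identity'' and ``homotopic to each other'' differ only by a translation, combined with the fact that $\fredP{\cdot}$ detects the identity component.
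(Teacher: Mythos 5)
Your proof is correct and takes essentially the same approach as the paper: both directions reduce to Proposition~\ref{prop:inx0}, using (2) for continuity and the product rule, and (3) for the key implication $\fredP{\cdot}=0\Rightarrow$ connected to $\idty$. The only cosmetic difference is that the paper anchors at the shift $S^{-\fredP U}$ (connecting both $U$ and $V$ to it), whereas you translate by $V^*$ directly and then right-multiply the resulting path by $V$; these are trivially equivalent group-translation arguments.
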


\begin{proof}
	Let $\fredP{U}=n$. Then $U=S^{-n}(S^nU)$ and $\fredP{S^nU}=0$, where $S^n$ denotes the shift by $n$ cites. By \Qref{prop:inx0}, $S^nU$ is homotopic to $\idty$ and hence, $U$ is homotopic to $S^{-n}$. This is true for both, $U$ and $V$, and hence we can continuously connect them, via $S^{-n}$.
\end{proof}

The multiplication with shift operators is a standard technique which we also use to alter the Fredholm indices of the chiral blocks of half-step walks in the proof of completeness below. In slight abuse of notation we write $S$ also for the conditional shift on $\HH$ of the form \eqref{eq:H} which shifts a one-dimensional subspace of each $\HH_x$ to the right and leaves the complement invariant.

Having established completeness for the right Fredholm index $\fredP{\cdot}$ for
essentially local unitary operators, allows us to approach the proof of completeness of the independent set of indices $\bigl(\fred{A},\allowbreak \fred{B},\allowbreak \fredP{A},\allowbreak \fredP{B},\allowbreak \fredP{C}\bigr)$ of half-step walks. We start with showing that all half-step walks $F$ with trivial indices are connected to the reference operator
\begin{equation}\label{eq:referenceF}
	F_0=\frac{1}{\sqrt2}\begin{pmatrix}
		\idty & \idty\\ \idty & -\idty
	\end{pmatrix}.
\end{equation}

\begin{lem}\label{lem:trivialComponent}
	Let $F$ be a half-step walk
	with trivial indices, i.e. $\bigl(\fred{A},\fred{B},\fredP{A},\fredP{B},\fredP{C}\bigr)=0\in{\Ir^5}$. Then $F$ is homotopic to $F_0$ in \eqref{eq:referenceF}.
\end{lem}

\begin{proof}
	Consider $F$ in the standard form \eqref{eq:Fstructure}. We begin with modifying the first direct summand in
	\eqref{eq:Fstructure} which consists of two finite-dimensional unitaries $B^-$ and $C^-$ on the off-diagonals. In general, these unitaries are not necessarily of the same size. However, $\fred{A}=0$ implies $\dim\ker A=\dim\ker A^*=\dim\ker D$, where the last equality follows from \eqref{eq:dimkers}. Hence, $B^-$ and $C^-$ act on equivalent finite
	dimensional Hilbert spaces.
	\begin{equation}
		\begin{pmatrix}0&B^-\\C^-&0\end{pmatrix}=
		\begin{pmatrix}B^-&0\\0&C^-\end{pmatrix}
		\begin{pmatrix}0&\idty\\\idty&0\end{pmatrix}
		\quad\sim\quad\begin{pmatrix}B^-&0\\0&C^-\end{pmatrix}
		\begin{pmatrix}\idty
			& \idty\\ \idty & -\idty\end{pmatrix}/\sqrt2=\begin{pmatrix}B^-
			&B^-\\ C^- & -C^-\end{pmatrix}/\sqrt2,
	\end{equation}
	where ``$\sim$'' indicates a homotopy. Using $\fred B=0$ we treat the second summand analogously. The chiral blocks in the third summand in \eqref{eq:Fstructure} are invertible operators on a infinite dimensional Hilbert space which can be made unitary up to a factor of $1/\sqrt2$ via the flattening procedure in \Qref{lem:flattenF}. Then, we are left with a half-step walk of the form \eqref{eq:Fstructure}, where each summand consists of $4$ blocks which are unitary up to a factor of $1/\sqrt2$. Undoing the rearrangement in \eqref{eq:Hstructure2} gives a half-step walk $F$, whose chiral blocks consist of $1/\sqrt2$ times a unitary, which we again denote by $A,B,C$ and $D$.
	
	Since the modifications so far affected $F$ only via a finite rank perturbations and homotopies, its indices are unchanged. We can therefore deform $A,B$ and $C$ to $\idty/\sqrt2$ according to \Qref{thm:oldindex-complete}. $D$ is
	automatically taken care of, by keeping $D=-2CA^*B$ along the path, which guarantees unitarity.
\end{proof}

Having constructed the homotopy to $F_0$ for every $F$ with trivial indices $0\in\Ir^5$, allows us to assemble the proof of the completeness result for the classification of half-step walks $F$ in terms of the indices $\bigl(\fred{A},\fred{B},\fredP{A},\fredP{B},\fredP{C}\bigr)$.
We do this two steps: First we assume both $F_i$ to correspond to gapped walks $W_i$ and $W_i'$ for $i=1,2$, respectively. Then, by \Qref{lem:sixFeu}, $A_i,B_i,C_i$ and $D_i$ are invertible with trivial $\fred\cdot$ indices.
Similarly to the proof of \Qref{thm:oldindex-complete}, we can multiply the $F_i$ with appropriate shift combinations from the left and the right such that the $\fredP\cdot$ indices of the resulting $\widetilde F_i$ vanish. Explicitly, let $\{a,b,c\}=\{\fredP{A},\fredP{B},\fredP{C}\}$ and
consider
\begin{equation}\label{eq:multiplyshifts}
	\widetilde F_i=\begin{pmatrix}
		S^{a} & 0\\ 0 & S^{c}
	\end{pmatrix}
	\begin{pmatrix}
		A_i&B_i\\C_i&D_i
	\end{pmatrix}\begin{pmatrix}
		\idty&0\\0&S^{b-a}
	\end{pmatrix}.
\end{equation}
By $\fredP{{S^{\fredP{X}}X}}=0$ we get $\fredP{\widetilde A_i}=\fredP{\widetilde B_i}=\fredP{\widetilde C_i}=0$
for $i=1,2$. Hence, by \Qref{lem:trivialComponent}, $\widetilde F_i,i=1,2$ are homotopic to $F_0$. Undoing the manipulation \eqref{eq:multiplyshifts} after deforming to $F_0$ by multiplying with the respective inverses we
constructed homotopies of $F_1$ and $F_2$ to the same operator
\begin{equation}\label{eq:F-shift-content}
	F=\begin{pmatrix}
		S^{-a}&S^{-b}\\S^{-c}& -S^{a-b-c}
	\end{pmatrix}/\sqrt2.
\end{equation}

In the general problem where $A_i,B_i,C_i$ and $D_i$ are merely Fredholm we have to take into account the first and the second summands of $F_1$ and $F_2$ in the standard form \eqref{eq:Fstructure}. To the third summands of $F_1$ and $F_2$ we can apply the modification \eqref{eq:multiplyshifts} and thereby assume the summands of the resulting $\widetilde F_1$ and $\widetilde F_2$ to be equal to $F_0$. If we choose the bases for the finite dimensional kernels and co-kernels of $A_i,B_i,C_i$ and $D_i$  appropriately, the $\widetilde F_i$ then take the form

\begin{equation}
	\widetilde F_i=
	\begin{pmatrix}0 & \idty_{d(D_i)}\\ \idty_{d(A_i)} & 0\end{pmatrix}
	\oplus
	\begin{pmatrix}\idty_{d(C_i)} & 0\\ 0 & \idty_{d(B_i)}\end{pmatrix}
	\oplus
	1/\sqrt2\begin{pmatrix}\idty & \idty\\\idty &-\idty\end{pmatrix},
\end{equation}
where $d(X)=\dim\ker X$. Since $\fred{A_i}=d(C_i)-d(B_i)$ and $\fred{A_1}=\fred{A_2}$, $d(C_i)$
and $d(B_i)$ differ by the same amount for both $i=1,2$. Without loss of generality, let $d(C_2)-d(C_1)=d(B_2)-d(B_1)=n>0$. Then we can ``extract'' a finite dimensional block
\begin{equation}
	1/\sqrt2\begin{pmatrix}
		\idty_n &\idty_n\\\idty_n&-\idty_n
	\end{pmatrix}
\end{equation}
from the right summand of $\widetilde F_1$, continuously deform it to the identity and associate it with the middle summand. This enlarges the dimensions of $\ker C_1$ and $\ker B_1$ by $n$ such that $d(C_1)=d(C_2)$ and $d(B_1)=d(B_2)$. Since we only changed $\widetilde F_1$ on a finite dimensional subspace, this perturbation is essentially local. After rearranging the matrix blocks appropriately, the two middle $2\times 2$ matrix blocks of $\widetilde F_1$ and $\widetilde F_2$ coincide. The same procedure applies to the left summand in the standard form and leads to $d(A_1)=d(A_2)$ and $d(D_1)=d(D_2)$ by deforming a finite dimensional block
from the right infinite dimensional summand to $\sigma_x\otimes\idty_m$ with the appropriately
chosen $m$.

This construction continuously connects $\widetilde F_1$ and $\widetilde F_2$
in the set of essentially local half-step walks, and by undoing the left and right multiplication in \eqref{eq:multiplyshifts} on the right summand we indeed obtain $F_1\sim F_2$.

\section{Connection to the classification of timeframed walks}\label{sec:time_framed_walks}
\subsection{Index connections}

Having identified a complete set of indices for half-step walks, we investigate their connection to the walk indices of the two timeframes $W$ and $W'$. Without further restrictions each of these walks is completely characterized by the index set $(\six,\sixR,\six_+)$, respectively $(\six',\sixR',\six'_+)$ \cite{LongVersion}. Clearly, the walk indices are not in one-to-one correspondence with the index set for $F$, since $W$ and $W'$ are not independent. For example, \eqref{pmindices} implies that $\six_\pm(W)=\pm\six_\pm(W')$, which is equivalent to $\six_+(W)=(\six(W)+\six(W')/2$. Hence, also $\six_+(W)$ itself is no longer independent, but determined by the other walk indices (compare also \Qref{thm:twotimeframes}). Apart from that restriction, however, the walk indices are independent, as the following result shows.

\begin{cor}\label{cor:(si+si')/2}
	Two index triples $(\six,\sixR,\six_+)$ and $(\six',\sixR',\six_+')$ have
	representatives $W$ and $W'$ according to \eqref{eq:timeframes} if and only if
	\begin{equation}\label{eq:timeframecondition}
	\six_+(W)=\six_+(W')=\frac{\six(W)+\six(W')}{2}.
	\end{equation}
\end{cor}
\begin{proof}
	The ``only if'' part of the statement follows from \Qref{thm:twotimeframes}.
	For the converse direction let $(\six,\sixR,\six_+)$ and $(\six',\sixR',\six_+')$ be two index triples and assume \eqref{eq:timeframecondition} to hold. Then the generating example defined in \eqref{eq:generating_example} with parameters
    \begin{equation}\label{eq:generatingexample_to_walkindices}
        (n_L,m_L,n_R,m_R,k)=\left(n,\sixR-\six+n,\frac{\six'-\six}{2}+n,\frac{\six'-\six}{2}+\sixR+n,\six-\six'-\sixR+\sixR{}'-2n\right)
	\end{equation}
	gives $W$ and $W'$ with the desired indices where $n\in\Ir$ is arbitrary. By \eqref{eq:timeframecondition} $\six+\six'$ is even, such that either $\six$ and $\six'$ are both even or both odd. In any case, they differ by an even number which guarantees that $(\six'-\six)/{2}=\six'-\six'_+$ in the parameter choice \eqref{eq:generatingexample_to_walkindices} is an integer.
\end{proof}

By this result $\six_+=\six_+'$ are determined by $\six$ and $\six'$ and, moreover, the latter two differ by an even number. Hence, there are only four independent indices in terms of $W$ and $W'$, with the further restriction, that $\six'=2l-\six$, for $l=\six_+\in\Ir$, whereas there are five independent indices for $F$. This poses the question to what extend the classification of half-step walks in terms of $\bigl(\fred{A},\allowbreak \fred{B},\allowbreak \fredP{A},\allowbreak \fredP{B},\allowbreak \fredP{C}\bigr)$ is finer than that of timeframed walks in terms of $\bigl(\six(W),\sixR(W),\six(W'),\sixR(W')\bigr)$.

Before we answer this question, consider the following example:

\begin{example}
    A modification of a half-step walk $F$ which does not change $W$ is to multiply $F$ with a unitary operator which is diagonal in the chiral eigenbasis (see \Qref{lem:Forbit}).
    Consider for example the modification
    \begin{equation}\label{eq:Fmod}
        F\mapsto\begin{pmatrix}S^m & 0\\0 & S^n\end{pmatrix}F,\qquad m,n\in\Ir,
    \end{equation}
    which leads to $\bigl(\fredP{A},\fredP{B},\fredP{C}\bigr)\mapsto\bigl(\fredP{A}-m,\fredP{B}-m,\fredP{C}-n\bigr)$ and does not change the $\fred{\cdot}$-indices. Some choices of the parameters $m,n$ can be detected by the walk indices through
    \begin{equation}\label{eq:sixR_diff}
        \sixR(W)-\sixR(W')=2\fredP{A}-\fredP{C}-\fredP{B}=2\fredP{A}-\fredP{F}.
    \end{equation}
    while others cannot: Choosing $m=n=\ell$ does not change $\sixR-\sixR{}'$, while $m=2\ell,n=0$ induces a change of $-2\ell$. Yet, by \Qref{prop:inx0} both choices imply $\fredP F\mapsto \fredP F-2\ell$.

    This observation has an important consequence: On the level of the walks $W$ and $W'$ it is not always possible to decide whether they stem from a periodically driven continuous time evolution or not. If the time evolution is continuous, i.e. if $F$ is the time-ordered exponential of some Hamiltonian $H(s)$, the half-step operator has to satisfy $\fredP{F}=0$, by item $(3)$ in \Qref{prop:inx0}.
\end{example}

The example suggests that $\fredP{F}$ might serve as the missing index, which is indeed the case:
\begin{cor}\label{cor:completewalkindices}
	Let $W$ and $W'$ be timeframed quantum walks as in \eqref{eq:timeframes} with the half-step walk $F$. Then $\bigl(\six(W),\sixR(W),\six(W'),\sixR(W'),\fredP{F}\bigr)$ is a complete set of indices.
\end{cor}

Note, that differently from the index set \eqref{eq:indexset}, the walk indices, together with $\fredP{F}$ are not pairwise independent, since for example $\six(W)$ and $\six(W')$ have to differ by an even number by \Qref{cor:(si+si')/2}.

\begin{proof}
    To prove this, we relate the index set above to the set of independent indices in \eqref{eq:indexset}, which we already know to be complete from \Qref{thm:complete}. From \Qref{sec:halfstepwalks} we know, how to obtain the walk indices and $\fredP{F}$ from the indices of the corresponding half-step walk. Since the latter are complete, we can also already conclude, that every valid combination of the former can be reached. Hence, it remains to verify, that the converse is also true. From \Qref{lem:sixFeu} we get
    \begin{equation}\label{eq:fredAfredB}
	   \fred{A}=\frac{\six(W)-\six(W')}{2}\qquad\text{and}\qquad\fred{B}=-\frac{\six(W)+\six(W')}{2}.
    \end{equation}
    Since by \Qref{cor:(si+si')/2} $\six(W)$ and $\six(W')$ differ by an even number, the right hand sides are always integers. Moreover, every combination $\{\fred{A},\fred{B}\}\in\Ir^2$ can be reached this way. Further, from \eqref{eq:sixR_diff} and similar expressions we get
    \begin{align}
	   \fredP{A} &= \frac12\fredP{F}+\frac{\sixR(W)-\sixR(W')}{2}\\
	   \fredP{B} &= \frac12\fredP{F}-\frac{\sixR(W)+\sixR(W')}{2}\\
	   \fredP{C} &= \frac12\fredP{F}+\frac{\sixR(W)+\sixR(W')}{2}.
    \end{align}
    Again, by the relation between the combinations of $\sixR(W)$ and $\sixR(W')$ and $\fredP{F}$, these are always integers. And one can again check, that every combination $\bigl(\fredP{A},\fredP{B},\fredP{C}\bigr)\in\Ir^3$ can be realized.
	
    Hence, there is a one to one correspondence between the complete index set \eqref{eq:indexset} and the valid index combinations of walk indices, together with $\fredP{F}$.
\end{proof}

\begin{remark}\label{rm:TransInv}
  If $W$ and $W'$ are translation invariant timeframed walks as in \eqref{eq:timeframes} the situation simplifies: $\six(W)$ and $\six(W')$ vanish identically while $\sixR(W)$ and $\sixR(W')$ are complete invariants for the respective timeframes \cite{UsOnTI}. Via \eqref{eq:fredAfredB} this immediately implies that the corresponding half-step walk is classified by
  \begin{equation}
    \bigl(0,0,\fredP{A},\fredP{B},\fredP{C}\bigr).
  \end{equation}
  By standard arguments, these remaining indices can be calculated as winding numbers \cite{Gohberg,Avron2001}.

  Vice versa, for any two translation invariant chiral symmetric walks $W$ and $W'$ the index condition \eqref{eq:indexcondition} in \Qref{thm:twotimeframes} is trivially fulfilled such that $W$ and $W'$ are timeframes of each other if and only if they are unitarily equivalent for a unitary which commutes with the chiral symmetry on the whole Hilbert space. Inspecting \eqref{eq:Ftwiddle} we find that in this case the half-step walk constructed in the proof of \Qref{thm:twotimeframes} is itself translation invariant.
\end{remark}

\subsection{Bulk-edge correspondence in the timeframed setting}

Bulk-edge correspondence for quantum walks was rigorously proved in the setting with one timeframe. More concretely, it was showed in \cite[Corollary IV.3]{LongVersion} that for a walk $W$ which coincides with bulk walks $W_L$ and $W_R$ far to the left and far to the right, respectively,
\begin{equation}\label{eq:bulk-edge}
\six(W)=\sixR(W_R)+\sixL(W_L)=\sixR(W_R)-\sixR(W_L),
\end{equation}
where for the second equality the bulks have to have proper gaps. Thus, whenever $W_L$ and $W_R$ are in different topological phases, \eqref{eq:bulk-edge} gives a lower bound on the number of symmetry protected eigenvalues of the crossover $W$. However, the theory in \cite{LongVersion} does not predict whether these eigenvalues are at $+1$ or $-1$: This depends on $\six_-(W)$ which cannot be inferred from the asymptotic indices in \eqref{eq:bulk-edge} and, moreover, can be changed by non-gentle perturbations.

Taking into account the second timeframe we consider a situation with two chiral symmetric timeframed crossovers $W$ and $W'$ which correspond to $W_R$ and $W_R'$ far to the right, respectively, and to $W_L$ and $W_L'$ far to the left.
Then, plugging \eqref{eq:bulk-edge} for $W$ and $W'$ into \eqref{eq:timeframecondition} we obtain
\begin{align}
2\six_+(W)&=\phantom{-}2\six_+(W') =\sixR(W_R)+\sixR(W_R')-\bigl(\sixR(W_L)+\sixR(W_L')\bigr) \\
\intertext{and}
2\six_-(W)&=-2\six_-(W')  =\sixR(W_R)-\sixR(W_R')-\bigl(\sixR(W_L)-\sixR(W_L')\bigr).
\end{align}
Thus, the second timeframe stabilizes the symmetry protected eigenvalues of the crossover and allows to attribute them to $+1$ and $-1$. These formulas agree with those in \cite{Asbo2}.

If additionally $W_R,W_R'$ and $W_L,W_L'$ are timeframed with $F_R$ and $F_L$, respectively, the above implies
\begin{align}
\six_+(W)&=\fredP{C_R}-\fredP{C_L}=-\bigl(\fredP{B_R}-\fredP{B_L}\bigr)\\
\intertext{and}
\six_-(W)&=\fredP{A_R}-\fredP{A_L}=-\bigl(\fredP{D_R}-\fredP{D_L}\bigr),
\end{align}
where we used \eqref{eq:sixR}, \eqref{eq:sixRprime} and \eqref{eq:Findex} in combination with $\fredP{F_R}=\fredP{F_L}$, which is a necessary condition for the existence of a unitary crossover between $F_R$ and $F_L$ \cite{old_index_paper}.

The above formulas confirms those in \cite{Asbo3}, where the systems under considerations were given by continuously driven Floquet time evolutions.
Moreover, we generalized them to systems with $\fredP{F}\neq 0$, which indirectly validates \cite{obuse2015unveiling}, where, in the appendix, the invariants from \cite{Asbo3} were applied to systems with $\fredP{F}\neq0$. The results above are also in line with those in \cite{Asbo2}, where a walk was defined by a fixed shift-coin skeleton.

\subsection{Example: The classification of timeframed split-step walks}\label{sec:splitstep}

The stability of the independent set of indices for half-step operators in \Qref{thm:complete} has an important consequence for the classification of the \textbf{split-step walk} model which is defined on $\ell_2(\Ir)\otimes\Cx^2$ as
\begin{equation}\label{eq:splitstep}
W(\theta_1,\theta_2)=R(\theta_1/2)S_\downarrow R(\theta_2)S_\uparrow R(\theta_1/2).
\end{equation}
Here, $R(\theta)=\idty\otimes R_2(\theta)$ rotates the internal degree of freedom homogeneously by the angle $\theta$ around the $y$-axis, and $S_\uparrow$ is the right shift of the spin-up vectors whereas $S_\downarrow$ shifts spin-down vectors to the left. It is straightforward to see that the split-step walk is chiral symmetric with $\gamma=\idty\otimes\sigma_1$.

\begin{figure}
	\begin{tikzpicture}
[
scale=1.5,
font=\small
]
\def\x{1.5}

\definecolor{myblue}{RGB}{100,100,220}
\definecolor{myred}{RGB}{255,100,100}
\definecolor{mygreen}{RGB}{119,221,119}
\definecolor{myyellow}{RGB}{253,253,150}
\definecolor{myorange}{RGB}{255,200,100}


\fill[myorange] 	(-.5*\x,-.5*\x) -- +(-.5*\x,-.5*\x) -- +(-.5*\x,.5*\x)
(-.5*\x,-.5*\x) -- ++(.5*\x,.5*\x) -- ++(.5*\x,-.5*\x) -- +(-.5*\x,-.5*\x)
(.5*\x,-.5*\x) -- +(.5*\x,.5*\x) -- +(.5*\x,-.5*\x);

\fill[myred] 	(-.5*\x,-.5*\x) -- +(-.5*\x,-.5*\x) -- +(.5*\x,-.5*\x)
(-.5*\x,-.5*\x) -- ++(.5*\x,.5*\x) -- ++(-.5*\x,.5*\x) -- +(-.5*\x,-.5*\x)
(-.5*\x,.5*\x) -- +(.5*\x,.5*\x) -- +(-.5*\x,.5*\x);

\fill[mygreen] 	(-.5*\x,.5*\x) -- +(-.5*\x,-.5*\x) -- +(-.5*\x,.5*\x)
(-.5*\x,.5*\x) -- ++(.5*\x,.5*\x) -- ++(.5*\x,-.5*\x) -- +(-.5*\x,-.5*\x)
(.5*\x,.5*\x) -- +(.5*\x,.5*\x) -- +(.5*\x,-.5*\x);

\fill[myblue] 	(.5*\x,-.5*\x) -- +(-.5*\x,-.5*\x) -- +(.5*\x,-.5*\x)
(.5*\x,-.5*\x) -- ++(.5*\x,.5*\x) -- ++(-.5*\x,.5*\x) -- +(-.5*\x,-.5*\x)
(.5*\x,.5*\x) -- +(.5*\x,.5*\x) -- +(-.5*\x,.5*\x);

\draw[white,very thick] (-1*\x,-1*\x) -- (1*\x,1*\x);
\draw[white,very thick] (1*\x,-1*\x) -- (-1*\x,1*\x);
\draw[white,very thick] (-1*\x,0*\x) -- (0*\x,1*\x) (0*\x,-1*\x) -- (1*\x,0*\x);
\draw[white,very thick] (-1*\x,0*\x) -- (0*\x,-1*\x) (0*\x,1*\x) -- (1*\x,0*\x);

\draw (0*\x,1.05*\x) node[above,align=center]{ $\theta_1$};
\draw (1.05*\x,0*\x) node[right,align=left]{ $\theta_2$};

\foreach \i in {-1*\x,0,1*\x}{
	\draw[align=left] (-1.02*\x,{\i}) -- (-1.08*\x,{\i});
	\draw[align=left] ({\i},-1.02*\x) -- ({\i},-1.08*\x);
}

\draw (-1.05*\x,-1*\x)  node[left,align=left]{ $-\pi$};
\draw (-1.05*\x,0*\x)  node[left,align=left]{ $0$};
\draw (-1.05*\x,1*\x)  node[left,align=left]{ $\pi$};

\draw (-1*\x,-1.06*\x)  node[below,align=center]{ $-\pi$};
\draw (0*\x,-1.06*\x)  node[below,align=center]{ $0$};
\draw (1*\x,-1.06*\x)  node[below,align=center]{ $\pi$};

\draw (.5*\x,0*\x) node[align=center]{$\bf{\RNum{4}}$};
\draw (-.5*\x,0*\x) node[align=center]{$\bf{\RNum{3}}$};
\draw (0*\x,-.5*\x) node[align=center]{$\bf{\RNum{2}}$};
\draw (0*\x,.5*\x) node[align=center]{$\bf{\RNum{1}}$};

\def\legendwidth{1.75}
\def\legendhight{0.32}
\def\legendposx{2.7}
\def\legendposy{.2}
\def\circlerad{1ex}
\def\circleshift{-0.5ex}
\newcommand\Square[1]{+(-#1,-#1) rectangle +(#1,#1)}
\def\colorsquare#1{\tikz[baseline=\circleshift]\draw[#1,fill=#1,rotate=45] (0,0) \Square{\circlerad};}

\node[anchor=west] at (\legendposx,\legendposy) {\begin{tabular}{r|  c |c|c}
	 & $\bf{\left(\fredP A,\fredP B,\fredP C\right)}$& &  ${(\sixR(W),\sixR(W'))}$\\\hline
	$\bf{\RNum{1}}$ & $\bf{\left(-1,-1,\hphantom-0\right)}$ & \colorsquare{mygreen}& $\bf (\hphantom-0,\hphantom-1)$\\
	$\bf{\RNum{2}}$ & $\bf{\left(\hphantom-0,\hphantom-0,-1\right)}$ & \colorsquare{myorange} & $\bf (\hphantom-0,-1)$\\
	$\bf{\RNum{3}}$ & $\bf{\left(-1,\hphantom-0,-1\right)}$ & \colorsquare{myred} & $\bf (-1,\hphantom-0)$\\
	$\bf{\RNum{4}}$ & $\bf{\left(\hphantom-0,-1,\hphantom-0\right)}$ & \colorsquare{myblue}& $\bf (\hphantom-1,\hphantom-0)$
	\end{tabular}};

\end{tikzpicture}
	\caption{\label{fig:harlequin} Phase diagram for the half-step operator \eqref{eq:splitstep-F}. For the split-step walk we have $\fred A=\fred B=0$ for all parameters, wherefore, the legend only lists the three non-constant indices $\left(\fredP A,\fredP B,\fredP C\right)$. Through the constraint $\fredP F=\fredP B+\fredP C=-1$ these three indices determine the two indices $\sixR$ and $\sixR'$.}
\end{figure}

The split-step walk $W(\theta_1,\theta_2)$ is translation invariant and, as such, completely classified by the single symmetry index $\sixR$. It is thus known that in the setting $(W)$ in \Qref{sec:walks_protocols}, i.e. if we restrict considerations to $W$ only and do not take the other timeframe $W'$ into account, continuous paths of walks exist between any two patches of equal indices in the phase diagram. Such \textbf{bridges} must lead over the gap closings encountered in the phase diagram of $W$ and therefore leave the split-step model. One way to construct them is explicitly given in \cite[Sec. 3.1.3]{UsOnTI} by regrouping neighbouring cells, i.e. by considering pairs of neighbouring cells as a single cell, see \Qref{fig:bridges}.

However, in the setting $(F)$ where also the other timeframe $W'$ is taken into account some of these bridges do no longer exist.
In the basis where $\gamma=\idty\otimes\sigma_1$ we can read off the half-step walk of $W(\theta_1,\theta_2)$ directly from \eqref{eq:splitstep} to be $R(\theta_2/2)S_\uparrow R(\theta_1/2)$. To apply the classification of half-step walks we transform this into the chiral eigenbasis which yields
\begin{equation}\label{eq:splitstep-F}
F=R(\theta_2/2-\pi/4)S_\uparrow R(\theta_1/2+\pi/4).
\end{equation}

\begin{cor}\label{cor:noFbridge}
	There are patches of equal indices in the phase diagram of the split-step walk $W$ between which there are no homotopies which leave the gaps of $W$ and $W'$ open.
\end{cor}
\begin{figure}
  \begin{tabular}{cc}
    \begin{tikzpicture}[scale=1,=>stealth]

	\definecolor{myblue}{RGB}{100,100,220}
	\definecolor{myred}{RGB}{255,100,100}
	\definecolor{mygreen}{RGB}{119,221,119}
	\definecolor{myyellow}{RGB}{253,253,150}
	\definecolor{myorange}{RGB}{255,200,100}
	\definecolor{nncol}{RGB}{180,180,180}

    \fill[myred] (-3,-1) -- (-1.5,-.629) -- (-1,-1);	
	\fill[myred] (-1.5,-.629) -- (0,-.25) -- (-.5,.125) -- (-2,-.25);
    \fill[myred] (-.5,.125) -- (-1,.5) -- (1,.5);

    \fill[myblue] (-1,-1) -- (.5,-.629) -- (1,-1);	
	\fill[myblue] (.5,-.629) -- (0,-.25) -- (1.5,.125) -- (2,-.25);
    \fill[myblue] (1.5,.125) -- (1,.5) -- (3,.5);


	\fill[nncol] (-2,-.25) -- (-1,.5) -- (-.5,.125);
	\fill[nncol] (0,-.25) -- (-.5,.125) -- (1,.5) -- (1.5,.125);
	\fill[nncol] (2,-.25) -- (1.5,.125) -- (3,.5);
	
    \fill[nncol] (-3,-1) -- (-1.5,-.629) -- (-2.,-.25);	
    \fill[nncol] (-1,-1) -- (-1.5,-.629) -- (0,-.25) -- (.5,-.629);
    \fill[nncol] (1,-1) -- (.5,-.629) -- (2,-.25);

    \draw[white,very thick] (-3,-1) -- (3,.5);
    \draw[white,very thick] (-1,.5) -- (1,-1);

    \draw[white,very thick] (-3,-1) -- (-1,.5);
    \draw[white,very thick] (3,.5) -- (1,-1);

    \draw[white,very thick] (-1,.5) -- (3,.5);
    \draw[white,very thick] (-3,-1) -- (1,-1);

    \draw[white,very thick] (-2,-.25) -- (1,.5);
    \draw[white,very thick] (-1,-1) -- (2,-.25);

    \draw[white,very thick] (-2,-.25) -- (-1,-1);
    \draw[white,very thick] (2,-.25) -- (1,.5);

    \draw[<->,very thick] (-.5,-.6) .. controls +(90:1) and +(100:1) .. (.5,.1);
    
	\draw (2.2,-.25)  node[right,align=center]{{$\theta_2$}};
	\draw (1,0.7) node[above,align=center]{ {$\theta_1$}};

  \end{tikzpicture}    &   \begin{tikzpicture}[scale=1,=>stealth]

	\definecolor{myblue}{RGB}{100,100,220}
	\definecolor{myred}{RGB}{255,100,100}
	\definecolor{redcross}{RGB}{255,7,58}
	\definecolor{mygreen}{RGB}{119,221,119}
	\definecolor{myyellow}{RGB}{253,253,150}
	\definecolor{myorange}{RGB}{255,200,100}
	\definecolor{nncol}{RGB}{180,180,180}

    \fill[myred] (-3,-1) -- (-1.5,-.629) -- (-1,-1);	
	\fill[myred] (-1.5,-.629) -- (0,-.25) -- (-.5,.125) -- (-2,-.25);
    \fill[myred] (-.5,.125) -- (-1,.5) -- (1,.5);

    \fill[myblue] (-1,-1) -- (.5,-.629) -- (1,-1);	
	\fill[myblue] (.5,-.629) -- (0,-.25) -- (1.5,.125) -- (2,-.25);
    \fill[myblue] (1.5,.125) -- (1,.5) -- (3,.5);

	\fill[mygreen] (-2,-.25) -- (-1,.5) -- (-.5,.125);
	\fill[mygreen] (0,-.25) -- (-.5,.125) -- (1,.5) -- (1.5,.125);
	\fill[mygreen] (2,-.25) -- (1.5,.125) -- (3,.5);

	\fill[myorange] (-3,-1) -- (-1.5,-.629) -- (-2.,-.25);	
	\fill[myorange] (-1,-1) -- (-1.5,-.629) -- (0,-.25) -- (.5,-.629);
	\fill[myorange] (1,-1) -- (.5,-.629) -- (2,-.25);


    \draw[white,very thick] (-3,-1) -- (3,.5);
    \draw[white,very thick] (-1,.5) -- (1,-1);

    \draw[white,very thick] (-3,-1) -- (-1,.5);
    \draw[white,very thick] (3,.5) -- (1,-1);

    \draw[white,very thick] (-1,.5) -- (3,.5);
    \draw[white,very thick] (-3,-1) -- (1,-1);

    \draw[white,very thick] (-2,-.25) -- (1,.5);
    \draw[white,very thick] (-1,-1) -- (2,-.25);

    \draw[white,very thick] (-2,-.25) -- (-1,-1);
    \draw[white,very thick] (2,-.25) -- (1,.5);

    \draw[<->,very thick] (-.5,-.6) .. controls +(90:1) and +(100:1) .. (.5,.1);
	\draw[redcross, line width=4pt] (0.3,0.8) -- +(-0.7,-0.7);
	\draw[redcross, line width=4pt] (-.4,0.8) -- +(0.7,-0.7);

	\draw (2.2,-.25)  node[right,align=center]{{$\theta_2$}};
	\draw (1,0.7) node[above,align=center]{ {$\theta_1$}};

\end{tikzpicture}
  \end{tabular}
  \caption{\label{fig:bridges}While in the timeframe on the left bridges between any two patches with equal indices exist \cite{UsOnTI}, in the other timeframe the phase diagram is rotated by $\pi/2$. Therefore, walks which were in the same phase in the first timeframe are now in different phases which renders the existence of some of these bridges impossible.}
\end{figure}

\begin{proof}
  For definiteness, we consider the same patches of the phase diagram of the split-step walk as in \cite{UsOnTI} where $\sixR(W)=0$. After flattening the band structure of the walks, the parameters determining these patches are $\theta_1=0,\theta_2=\pi/2$ and $\theta_1=0,\theta_2=-\pi/2$, respectively. The half-step walks for these parameter pairs are given by
  \begin{equation}\label{eq:F1F2}
    F_1=\frac{1}{\sqrt2}\begin{pmatrix}
                            S&-S\\1&1
                        \end{pmatrix},\qquad \qquad
    F_2=\frac{1}{\sqrt2}\begin{pmatrix}
                            1&1\\-S&S
                        \end{pmatrix},
  \end{equation}
  with index tuples
  \begin{equation}\label{eq:splitstepindexquintuples}
    F_1: (0,0,-1,-1,0)\qquad
    \qquad F_2:(0,0,0,0,-1).
  \end{equation}
  This directly excludes homotopies between $F_1$ and $F_2$ by \Qref{thm:complete}. Also regrouping the cell-structure like in \cite{UsOnTI} does not help, since all indices of $F$ are Fredholm indices and therefore invariant under such regroupings.
\end{proof}

If only the timeframe $W$ is considered we can adjust the index quintuples in \eqref{eq:splitstepindexquintuples} without changing $W$, e.g. by multiplying $F_1$ from the left as in \eqref{eq:Fmod} with $m=-1,n=1$, see also \Qref{lem:Forbit}. This guarantees the existence of a bridge in the setting $(W)$ directly on the level of half-step walks by \Qref{thm:complete}, with an example given by
\begin{equation}\label{eq:F1F2path}
  t\mapsto\frac1{\sqrt2}\begin{pmatrix} 1 & -e^{it\pi}  \\e^{it\pi} S & S \end{pmatrix}.
\end{equation}

While there are no bridges between the patches in the proof of \Qref{cor:noFbridge}, bridges between patches of the phase diagram with the same index combinations of the corresponding half-step walks can be defined explicitly on the level of the half-step walks.
For example, a bridge between the $F_1$ in \eqref{eq:F1F2} and $F_3$ with $\theta_1=\pi,\theta_2=\pi/2$, i.e.
\begin{equation}
	F_3=\frac{1}{\sqrt2}\begin{pmatrix}
	-S&-S\\1&-1
	\end{pmatrix}
\end{equation}
is given by
\begin{equation}
[0,1]\ni t\mapsto F_t=\frac{1}{\sqrt2}\begin{pmatrix}
e^{i\pi t}S&-S\\1&e^{-i\pi t}
\end{pmatrix}.
\end{equation}

We remark that in this paper we only consider chiral symmetry. The split-step walk from \eqref{eq:splitstep} is additionally admissible for particle-hole symmetry. The path $F_t$ between $F_1$ and $F_3$ above breaks this particle-hole symmetry and so does the path between $F_1$ and $F_2$ in \eqref{eq:F1F2path} which only exists in the scenario with one timeframe anyhow. Taking particle-hole symmetry into account a regrouping of cells similar to that described in \cite{UsOnTI} becomes necessary to find homotopies that respect both symmetries.

\subsection{Compact perturbations of half-step walks}
The classification in \cite{LongVersion} is not only stable against homotopies, but also includes stability against compact perturbations that are not continuously contractible along a symmetric path, so-called non-gentle perturbations. In fact, the index $\six_+$ indicates, whether a compact perturbation can be contracted gently or not. In this section, we investigate how compact perturbations influence the index set for $F$.

\begin{lem}\label{lem:pert}
	Let $F$ be a half-step walk and let $W$ and $W'$ be the corresponding timeframed walks. Then every compact perturbation of $F$ is a gentle perturbation of $W$ and $W'$.
\end{lem}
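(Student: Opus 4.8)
The plan is to realise the perturbation $F\mapsto\tilde F$ as the endpoint of an explicit path of half-step walks and then check that the two induced paths of timeframes stay inside the class of chiral, essentially local, essentially gapped walks. So let $\tilde F$ be a half-step walk with $\tilde F-F$ compact, and write $\tilde W=\ch\tilde F^*\ch\tilde F$, $\tilde W'=\tilde F\ch\tilde F^*\ch$ for the perturbed timeframes. First I would pass to the unitary $V\coloneqq\tilde F F^*$, which satisfies $V-\idty=(\tilde F-F)F^*$ and is thus a compact perturbation of $\idty$. Such a unitary can always be joined to $\idty$ inside the group of unitaries of the form ``$\idty$ plus compact'': using the spectral resolution $V=P_1+\sum_j\lambda_j P_j$, where $P_1$ is the eigenprojection of $V$ at $1$ (possibly of infinite rank) and the $P_j$ are the finite-rank eigenprojections at the remaining eigenvalues $\lambda_j\to1$, one puts $V_t=P_1+\sum_j\lambda_j(t)P_j$ with $\lambda_j(t)$ a constant-speed path on the unit circle from $1$ to $\lambda_j$ along the shorter arc (any fixed half-circle for an eventual eigenvalue $-1$, of which there are only finitely many). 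Then $V_0=\idty$, $V_1=V$, each $V_t$ is unitary, the difference $V_t-\idty=\sum_j(\lambda_j(t)-1)P_j$ is a norm limit of finite-rank operators because $\sup_{j\ge N}|\lambda_j(t)-1|\le\sup_{j\ge N}|\lambda_j-1|\to0$ and is therefore compact, and $t\mapsto V_t$ is norm continuous. This is precisely the ``connect to the identity'' step already used in the proof of \Qref{prop:inx0}(3).

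Next I would set $F_t\coloneqq V_tF$, a norm-continuous path of unitaries with $F_0=F$ and $F_1=VF=\tilde F$. Because $V_t-\idty$ is compact and $F\in\AAloc$, the commutator splits as $[F_t,\Pg a]=V_t[F,\Pg a]+[V_t,\Pg a]F$ with both summands compact, so $F_t\in\AAloc$ for all $t$. Define the corresponding timeframes $W_t\coloneqq\ch F_t^*\ch F_t$ and $W_t'\coloneqq F_t\ch F_t^*\ch$. They depend continuously on $F_t$, hence on $t$; they are essentially local since $\ch$ acts locally in each cell; and they are chiral symmetric, since $\ch W_t\ch=F_t^*\ch F_t\ch=W_t^*$ and likewise $\ch W_t'\ch=(W_t')^*$ (using $\ch^2=\idty$). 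At the endpoints we recover $W,W'$ and $\tilde W,\tilde W'$.

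It remains to check that the essential gaps at $\pm1$ never close along the path. From $F_t-F=(V_t-\idty)F$ compact one gets, expanding $W_t-W=\ch F_t^*\ch(F_t-F)+\ch(F_t-F)^*\ch F$ and similarly for $W_t'-W'$, that $W_t-W$ and $W_t'-W'$ are compact. Hence $\sigma_{\mathrm{ess}}(W_t)=\sigma_{\mathrm{ess}}(W)$ and $\sigma_{\mathrm{ess}}(W_t')=\sigma_{\mathrm{ess}}(W')$ for every $t$, and by the standing hypothesis on $F$ neither essential spectrum contains $\pm1$. Thus $t\mapsto W_t$ and $t\mapsto W_t'$ are continuous paths within the class of chiral symmetric, essentially local, essentially gapped walks, joining $W$ to $\tilde W$ and $W'$ to $\tilde W'$ respectively; by definition the perturbations $W\mapsto\tilde W$ and $W'\mapsto\tilde W'$ are therefore gentle.

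The only genuinely technical point is the construction of $V_t$ among compact perturbations of $\idty$, and the care needed there is exactly as in \Qref{prop:inx0}: one must ensure that the tail $\sup_{j\ge N}|\lambda_j(t)-1|$ is uniformly small so that each $V_t$ is genuinely identity-plus-compact and the path is norm continuous, and one has to handle a possible eigenvalue at $-1$ by hand. Everything else is bookkeeping with \eqref{eq:timeframes} and the ideal properties of the compact operators. The conceptual content is that a compact perturbation carried out at the level of $F$ — unlike a bare compact perturbation of a single walk $W$, which may be non-gentle and shift weight between the protected eigenspaces at $+1$ and $-1$ — automatically preserves $\six_\pm$ for both timeframes.
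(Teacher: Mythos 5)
Your proof is correct, but it takes a genuinely different route from the paper's. You construct the gentle path explicitly: factor the perturbed half-step as $\tilde F=VF$ with $V=\tilde F F^*$ a compact perturbation of $\idty$, connect $V$ to $\idty$ through identity-plus-compact unitaries via its spectral resolution, and then push the resulting path $F_t=V_tF$ through \eqref{eq:timeframes} to get admissible paths $W_t,W_t'$ (chiral automatically, essentially local because $V_t-\idty$ is compact, and essentially gapped because $W_t-W$ and $W_t'-W'$ are compact so the essential spectra are frozen along the path). The paper instead argues abstractly: a compact perturbation of $F$ is a chiral-symmetric compact perturbation of each timeframe, such perturbations preserve $\six(W)$ and $\six(W')$, hence also $\six_+(W)=(\six(W)+\six(W'))/2$ by \Qref{cor:(si+si')/2}, and then it invokes Thm.~VI.4 and Lem.~VI.3 of \cite{LongVersion}, which say precisely that a chiral-symmetric compact perturbation preserving $\six_+$ is gentle. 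Your argument is more self-contained (it does not lean on the cited completeness machinery) and makes the homotopy fully explicit, but it is longer and hides the conceptually crisp mechanism — that the presence of the second timeframe forces $\six_+$ to be preserved, which is exactly the obstruction identified in \cite{LongVersion}. The paper's proof exposes that mechanism and is the shorter deduction given what has already been established; yours in effect reproves the relevant special case of \cite[Thm.~VI.4]{LongVersion} from scratch. Both are valid.
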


\begin{proof}
	Clearly, every compact perturbation of $F$ is a compact perturbation of $W$ and $W'$. Moreover, the chiral symmetry of $W$ and $W'$ is preserved by arbitrary compact perturbations of $F$. Since any chiral symmetric compact perturbation leaves the symmetry
	indices $\six(W)$ and $\six(W')$ invariant, also
	$\six_+(W)=\six_+(W')$ is invariant by \Qref{cor:(si+si')/2}. Hence, by \cite[Thm.\ VI.4]{LongVersion} in conjunction with \cite[Lem.\ VI.3]{LongVersion} the induced perturbations of $W$ and $W'$ are gentle.
\end{proof}

The converse of this result has an important consequence regarding non-gentle perturbations of chiral symmetric quantum walks: For every $W=\gamma F^*\gamma F$ we know from \Qref{prop:Fexists} that also every perturbed walk $\widetilde W=VW$ possesses a half-step walk $\widetilde F$. However, if the perturbation is non-gentle, the converse of \Qref{lem:pert} implies that $\widetilde F$ cannot be a compact perturbation of $F$. We illustrate that with a simple example:
\begin{example}
  Consider the split-step walk in \eqref{eq:splitstep} with $(\theta_1,\theta_2)=(0,\pi/2)$, i.e.\ $W=\idty\otimes(-i\sigma_2)$ which is chiral symmetric for $\gamma=\idty\otimes\sigma_1$. Its half-step walk is given by $F=R(\pi/4)S_\uparrow$. A typical non-gentle perturbation of $W$ is to replace the coin at $x=0$ by $\sigma_1$ \cite{LongVersion,splitstepexplorer}. On the level of the half-step walk, this can be achieved, by modifying $S_\uparrow$ on a half-chain: With $\widetilde S_\uparrow=(\Pl 0+\Pg 0\gamma \Pg0 )S_\uparrow(\Pl 0+\Pg 0\gamma \Pg 0)$, $\widetilde F=R(\pi/4)\widetilde S_\uparrow$ is a half-step operator for $\widetilde W$. Clearly, $F-\widetilde F$ is not compact.
\end{example}
For finite systems, e.g. systems with periodic boundary conditions this implies the following: Since in such systems there are no non-compact perturbations, a single non-gentle perturbation renders impossible the existence of a half-step walk $F$. On the other hand, every large-scale modification of $F$ needed to implement a local non-gentle perturbation of $W$ produces another local non-gentle perturbation somewhere else on the ring. These two non-gentle perturbations are globally gentle.

\section{Outlook}
In this work, we have given a complete classification of chiral symmetric half-step walks in terms of their block decomposition with respect to the chiral eigenbasis. The classification is characterized by five indices and we presented a generating example that allows us to construct a representative for each class. We clarified how this classification is connected to the classification of the different timeframes of the half-step walk, and under which conditions a generating continuous-time process exists. We conclude with some remarks on possible extensions of our work.

Within the class of chiral symmetric walks, we chose to consider here, no assumption about translation invariance or other regularity assumptions have been made. As we have argued in Remark \ref{rm:TransInv} the half-step operator itself can be chosen translation invariant if we start from translation invariant timeframes. It would be interesting to investigate how the classification in terms of half-step unitaries presented here will have to be adjusted under such a restricted class of deformations and whether additional obstructions might arise.

A point we touch on only peripherically is the structure of the half-step walks. In many works (see e.g. \cite{Asbo1,Asbo2,Asbo4}) the walk is given explicitly as a ``protocol'' sequence of coin operations, acting separately on each site, and state dependent shifts. When chiral symmetric, such walk protocols have an inversion point and we can think of the evolution up to the inversion point as a concrete realization of a half-step walk. Protocols are necessarily strictly local, i.e. distant matrix elements vanish exactly. Restricting to this class of walks might thus jeopardize the results above. For example it is unclear whether a completeness statement like that in \Qref{thm:complete} holds within the set protocols. The classification of protocols with fixed shift-coin skeletons can be important for experimental implementations where the protocol is fixed by the design of the experiment.

We further leave open the question of a classification in higher lattice dimensions. Naturally, this is a very interesting research direction where also the theory for the one-step scenario is still open. While the results of \Qref{sec:halfstepwalks} are independent of any locality property of the half-step walks and thus apply in any lattice dimension, the main obstacle is to find analogues of the directed Fredholm indices for the chiral blocks introduced in \Qref{sec:locality}.

Similarly, it would be interesting to analyze the consequences of this classification beyond the single particle regime, i.e. to characterize how this classification might put restrictions on the deformation of chiral symmetric quantum cellular automata.

\section*{Acknowledgements}
We thank the referees for helpful comments, which improved the readability of the manuscript. Moreover, we thank Janos Asb\'oth for inspiring discussions and for directing our interest to timeframed quantum walks in the first place.

C. Cedzich acknowledges support by the projet PIA-GDN/QuantEx P163746-484124 and by {\em DGE -- Ministère de l'Industrie}.

T. Geib and R. F. Werner acknowledge support from the DFG through SFB 1227 DQ-mat.

A. H. Werner thanks the VILLUM FONDEN for its support with a Villum Young Investigator Grant (Grant No. 25452) and its support via the QMATH Centre of Excellence (Grant No. 10059). 

\bibliography{F2Wbib}

\begin{thebibliography}{10}

\bibitem{Altland-Zirnbauer}
A.~Altland and M.~R. Zirnbauer.
\newblock Nonstandard symmetry classes in mesoscopic normal-superconducting
  hybrid structures.
\newblock {\em Phys. Rev. B}, 55(2):1142--1161, 1997.
\newblock  \href{https://arxiv.org/abs/cond-mat/9602137}{{\ttfamily
  arXiv:cond-mat/9602137}}.

\bibitem{Asbo1}
J.~K. Asb\'oth.
\newblock Symmetries, topological phases, and bound states in the
  one-dimensional quantum walk.
\newblock {\em Phys. Rev. B}, 86:195414, 2012.
\newblock  \href{https://arxiv.org/abs/1208.2143}{{\ttfamily arXiv:1208.2143}}.

\bibitem{Asbo2}
J.~K. Asb\'oth and H.~Obuse.
\newblock Bulk-boundary correspondence for chiral symmetric quantum walks.
\newblock {\em Phys. Rev. B}, 88:121406, 2013.
\newblock  \href{https://arxiv.org/abs/1303.1199}{{\ttfamily arXiv:1303.1199}}.

\bibitem{Asbo3}
J.~K. Asb\'oth, B.~Tarasinski, and P.~Delplace.
\newblock Chiral symmetry and bulk-boundary correspondence in periodically
  driven one-dimensional systems.
\newblock {\em Phys. Rev. B}, 90:125143, 2014.
\newblock  \href{https://arxiv.org/abs/1405.1709}{{\ttfamily arXiv:1405.1709}}.

\bibitem{Avron2001}
J.~E. Avron and L.~Sadun.
\newblock Fredholm indices and the phase diagram of quantum hall systems.
\newblock {\em J. Math. Phys.}, 42(1):1, 2001.
\newblock  \href{https://arxiv.org/abs/math-ph/0008040}{{\ttfamily
  arXiv:math-ph/0008040}}.

\bibitem{Quantum_Hall_Monograph}
M.~E. Cage, K.~Klitzing, A.~Chang, F.~Duncan, M.~Haldane, R.~Laughlin,
  A.~Pruisken, and D.~Thouless.
\newblock {\em The quantum {H}all effect}.
\newblock Graduate Texts in Contemporary Physics. Springer New York, 1990.

\bibitem{Cardanoe2015}
F.~Cardano, F.~Massa, H.~Qassim, E.~Karimi, S.~Slussarenko, D.~Paparo,
  C.~de~Lisio, F.~Sciarrino, E.~Santamato, R.~W. Boyd, and L.~Marrucci.
\newblock Quantum walks and wavepacket dynamics on a lattice with twisted
  photons.
\newblock {\em Sci. Adv.}, 1(2), 2015.
\newblock  \href{https://arxiv.org/abs/1407.5424}{{\ttfamily arXiv:1407.5424}}.

\bibitem{carpentier2015topological}
D.~Carpentier, P.~Delplace, M.~Fruchart, and K.~Gaw{\k{e}}dzki.
\newblock Topological index for periodically driven time-reversal invariant
  2{D} systems.
\newblock {\em Physical review letters}, 114(10):106806, 2015.
\newblock  \href{https://arxiv.org/abs/1407.7747}{{\ttfamily arXiv:1407.7747}}.

\bibitem{LongVersion}
C.~Cedzich, T.~Geib, F.~A. Gr\"unbaum, C.~Stahl, L.~Vel\'azquez, A.~H. Werner,
  and R.~F. Werner.
\newblock The topological classification of one-dimensional symmetric quantum
  walks.
\newblock {\em Ann. Inst, Poincar\'e A}, 19(2):325--383, 2016.
\newblock  \href{https://arxiv.org/abs/1611.04439}{{\ttfamily
  arXiv:1611.04439}}.

\bibitem{WeAreSchur}
C.~Cedzich, T.~Geib, F.~A. Gr{\"u}nbaum, C.~Stahl, L.~Vel{\'a}zquez, A.~H.
  Werner, and R.~F. Werner.
\newblock Quantum walks: {S}chur functions meet symmetry protected topological
  phases.
\newblock 2019.
\newblock  \href{https://arxiv.org/abs/1903.07494}{{\ttfamily
  arXiv:1903.07494}}, Submitted to Commun. Math. Phys.

\bibitem{UsOnTI}
C.~Cedzich, T.~Geib, C.~Stahl, L.~Vel\'azquez, A.~H. Werner, and R.~F. Werner.
\newblock Complete homotopy invariants for translation invariant symmetric
  quantum walks on a chain.
\newblock {\em Quantum}, 2:95, 2018.
\newblock  \href{https://arxiv.org/abs/1804.04520}{{\ttfamily
  arXiv:1804.04520}}.

\bibitem{ShortVersion}
C.~Cedzich, F.~A. Gr\"unbaum, C.~Stahl, L.~Vel\'azquez, A.~H. Werner, and R.~F.
  Werner.
\newblock Bulk-edge correspondence of one-dimensional quantum walks.
\newblock {\em J. Phys. A}, page 21LT01, 2016.
\newblock  \href{https://arxiv.org/abs/1502.02592}{{\ttfamily
  arXiv:1502.02592}}.

\bibitem{delplace2017phase}
P.~Delplace, M.~Fruchart, and C.~Tauber.
\newblock Phase rotation symmetry and the topology of oriented scattering
  networks.
\newblock {\em Physical Review B}, 95(20):205413, 2017.
\newblock  \href{https://arxiv.org/abs/1612.05769}{{\ttfamily
  arXiv:1612.05769}}.

\bibitem{fruchart2016complex}
M.~Fruchart.
\newblock Complex classes of periodically driven topological lattice systems.
\newblock {\em Physical Review B}, 93(11):115429, 2016.
\newblock  \href{https://arxiv.org/abs/1511.06755}{{\ttfamily
  arXiv:1511.06755}}.

\bibitem{Gohberg}
I.~Gohberg, S.~Goldberg, and M.~A. Kaashoeck.
\newblock {\em Classes of Linear Operators, vol. I and II}, volume 49/64.
\newblock Operator Theory: Advances and Applications, 1990/1993.

\bibitem{graf2018bulk}
G.~M. Graf and C.~Tauber.
\newblock Bulk--edge correspondence for two-dimensional {F}loquet topological
  insulators.
\newblock In {\em Annales Henri Poincar{\'e}}, volume~19, pages 709--741.
  Springer, 2018.
\newblock  \href{https://arxiv.org/abs/1707.09212}{{\ttfamily
  arXiv:1707.09212}}.

\bibitem{old_index_paper}
D.~Gross, V.~Nesme, H.~Vogts, and R.~F. Werner.
\newblock Index theory of one dimensional quantum walks and cellular automata.
\newblock {\em Communications in Mathematical Physics}, 310(2):419--454, 2012.
\newblock  \href{https://arxiv.org/abs/0910.3675}{{\ttfamily arXiv:0910.3675}}.

\bibitem{HasanKaneReview}
M.~Z. Hasan and C.~L. Kane.
\newblock Colloquium: Topological insulators.
\newblock {\em Rev. Mod. Phys.}, 82:3045--3067, 2010.
\newblock  \href{https://arxiv.org/abs/1002.3895}{{\ttfamily arXiv:1002.3895}}.

\bibitem{KaneMele1}
C.~L. Kane and E.~J. Mele.
\newblock $\mathbb{Z}_2$-topological order and the quantum spin {H}all effect.
\newblock {\em Phys. Rev. Lett.}, 95(14):146802, 2005.
\newblock  \href{https://arxiv.org/abs/cond-mat/0506581}{{\ttfamily
  arXiv:cond-mat/0506581}}.

\bibitem{KaneMele2}
C.~L. Kane and E.~J. Mele.
\newblock Quantum spin {H}all effect in graphene.
\newblock {\em Phys. Rev. Lett.}, 95(22):226801, 2005.
\newblock  \href{https://arxiv.org/abs/cond-mat/0411737}{{\ttfamily
  arXiv:cond-mat/0411737}}.

\bibitem{Karski2009}
M.~Karski, L.~F{\"o}rster, J.-M. Choi, A.~Steffen, W.~Alt, D.~Meschede, and
  A.~Widera.
\newblock Quantum walk in position space with single optically trapped atoms.
\newblock {\em Science}, 325(5937):174--177, 2009.
\newblock  \href{https://arxiv.org/abs/0907.1565}{{\ttfamily arXiv:0907.1565}}.

\bibitem{Kitaev}
A.~Kitaev.
\newblock Anyons in an exactly solved model and beyond.
\newblock {\em Annals of Physics}, 321(1):2 -- 111, 2006.
\newblock  \href{https://arxiv.org/abs/cond-mat/0506438}{{\ttfamily
  arXiv:cond-mat/0506438}}.

\bibitem{kitaevPeriodic}
A.~Kitaev.
\newblock Periodic table for topological insulators and superconductors.
\newblock {\em AIP Conf. Proc.}, 1134:22--30, 2009.
\newblock  \href{https://arxiv.org/abs/0901.2686}{{\ttfamily arXiv:0901.2686}}.

\bibitem{kita2}
T.~Kitagawa.
\newblock Topological phenomena in quantum walks: elementary introduction to
  the physics of topological phases.
\newblock {\em Quant. Inf. Process.}, 11(5):1107--1148, 2012.
\newblock  \href{https://arxiv.org/abs/1112.1882}{{\ttfamily arXiv:1112.1882}}.

\bibitem{kitagawa2010topological}
T.~Kitagawa, E.~Berg, M.~Rudner, and E.~Demler.
\newblock Topological characterization of periodically driven quantum systems.
\newblock {\em Physical Review B}, 82(23):235114, 2010.
\newblock  \href{https://arxiv.org/abs/1010.6126}{{\ttfamily arXiv:1010.6126}}.

\bibitem{kita1}
T.~Kitagawa, M.~S. Rudner, E.~Berg, and E.~Demler.
\newblock Exploring topological phases with quantum walks.
\newblock {\em Phys. Rev. A}, 82(3):033429, 2010.
\newblock  \href{https://arxiv.org/abs/1003.1729}{{\ttfamily arXiv:1003.1729}}.

\bibitem{liu2018chiral}
X.~Liu, F.~Harper, and R.~Roy.
\newblock Chiral flow in one-dimensional {F}loquet topological insulators.
\newblock {\em Physical Review B}, 98(16):165116, 2018.
\newblock  \href{https://arxiv.org/abs/1806.00026}{{\ttfamily
  arXiv:1806.00026}}.

\bibitem{obuse2015unveiling}
H.~Obuse, J.~K. Asb{\'o}th, Y.~Nishimura, and N.~Kawakami.
\newblock Unveiling hidden topological phases of a one-dimensional {H}adamard
  quantum walk.
\newblock {\em Phys. Rev. B}, 92(4):045424, 2015.
\newblock  \href{https://arxiv.org/abs/1505.03264}{{\ttfamily
  arXiv:1505.03264}}.

\bibitem{SchuBaBook}
E.~Prodan and H.~Schulz-Baldes.
\newblock {\em Bulk and Boundary Invariants for Complex Topological Insulators:
  From {K}-Theory to Physics}.
\newblock Mathematical Physics Studies. Springer International Publishing,
  2016.
\newblock  \href{https://arxiv.org/abs/1510.08744}{{\ttfamily
  arXiv:1510.08744}}.

\bibitem{ZhangReview}
X.-L. Qi and S.-C. Zhang.
\newblock Topological insulators and superconductors.
\newblock {\em Rev. Mod. Phys.}, 83(4):1057, 2011.
\newblock  \href{https://arxiv.org/abs/1008.2026}{{\ttfamily arXiv:1008.2026}}.

\bibitem{Regensburger2011}
A.~Regensburger, C.~Bersch, B.~Hinrichs, G.~Onishchukov, A.~Schreiber,
  C.~Silberhorn, and U.~Peschel.
\newblock Photon propagation in a discrete fiber network: An interplay of
  coherence and losses.
\newblock {\em Phys. Rev. Lett.}, 107:233902, Dec 2011.
\newblock  \href{https://arxiv.org/abs/1110.6115}{{\ttfamily arXiv:1110.6115}}.

\bibitem{roy2017periodic}
R.~Roy and F.~Harper.
\newblock Periodic table for {F}loquet topological insulators.
\newblock {\em Physical Review B}, 96(15):155118, 2017.
\newblock  \href{https://arxiv.org/abs/1603.06944}{{\ttfamily
  arXiv:1603.06944}}.

\bibitem{RunderBergLevin}
M.~S. Rudner, N.~H. Lindner, E.~Berg, and M.~Levin.
\newblock Anomalous edge states and the bulk-edge correspondence for
  periodically driven two-dimensional systems.
\newblock {\em Phys. Rev. X}, 3:031005, Jul 2013.
\newblock  \href{https://arxiv.org/abs/1212.3324}{{\ttfamily arXiv:1212.3324}}.

\bibitem{sadel2017topological}
C.~Sadel and H.~Schulz-Baldes.
\newblock Topological boundary invariants for {F}loquet systems and quantum
  walks.
\newblock {\em Mathematical Physics, Analysis and Geometry}, 20(4):22, 2017.
\newblock  \href{https://arxiv.org/abs/1708.01173}{{\ttfamily
  arXiv:1708.01173}}.

\bibitem{Schmitz2009}
H.~Schmitz, R.~Matjeschk, C.~Schneider, J.~Glueckert, M.~Enderlein, T.~Huber,
  and T.~Schaetz.
\newblock Quantum walk of a trapped ion in phase space.
\newblock {\em Phys. Rev. Lett.}, 103:090504, Aug 2009.
\newblock  \href{https://arxiv.org/abs/0904.4214}{{\ttfamily arXiv:0904.4214}}.

\bibitem{splitstepexplorer}
C.~Stahl, 2015.
\newblock Interactive Mathematica notebook at
  \url{http://qig.itp.uni-hannover.de/bulkedge}.

\bibitem{Asbo4}
B.~Tarasinski, J.~K. Asb\'oth, and J.~P. Dahlhaus.
\newblock Scattering theory of topological phases in discrete-time quantum
  walks.
\newblock {\em Phys. Rev. A}, 89:042327, 2014.
\newblock  \href{https://arxiv.org/abs/1401.2673}{{\ttfamily arXiv:1401.2673}}.

\end{thebibliography}

\end{document}